%% file: main.tex
\documentclass[aps,reprint,11pt,tightenlines,secnumarabic,nobibnotes,longbibliography,notitlepage,floatfix,onecolumn,superscriptaddress]{revtex4-2}

\usepackage{amsmath,amssymb,bm,graphicx,xcolor,tikz}
\usepackage{mathtools,amsthm}
\usepackage{microtype}
\usetikzlibrary{arrows.meta,decorations.markings}

\usepackage[T1,OT1]{fontenc}

\usepackage{type1cm}

\definecolor{sysblue}{HTML}{245C80}
\definecolor{bathteal}{HTML}{287D80}
\definecolor{warmorange}{HTML}{AD6512}

\definecolor{fig167D89}{HTML}{167D89}
\definecolor{fig172B3A}{HTML}{172B3A}
\definecolor{fig285A83}{HTML}{285A83}
\definecolor{fig566975}{HTML}{566975}
\definecolor{figAA681D}{HTML}{AA681D}
\definecolor{figC6D1D7}{HTML}{C6D1D7}
\definecolor{figDDE5E9}{HTML}{DDE5E9}
\definecolor{figEDF7F6}{HTML}{EDF7F6}
\definecolor{figEEF3F8}{HTML}{EEF3F8}
\definecolor{figFCF4E8}{HTML}{FCF4E8}
\definecolor{figFFFFFF}{HTML}{FFFFFF}
\newtheorem{Theorem}{Theorem}
\newtheorem{thm}{Theorem}[section]
\newtheorem{lem}[thm]{Lemma}
\newtheorem{prop}[thm]{Proposition}
\DeclareMathVersion{coolingfigure}
\SetSymbolFont{operators}{coolingfigure}{OT1}{cmr}{m}{n}
\SetSymbolFont{letters}{coolingfigure}{OML}{cmm}{m}{it}
\SetSymbolFont{symbols}{coolingfigure}{OMS}{cmsy}{m}{n}
\SetSymbolFont{largesymbols}{coolingfigure}{OMX}{cmex}{m}{n}
\SetMathAlphabet{\mathbf}{coolingfigure}{OT1}{cmr}{bx}{n}
\SetMathAlphabet{\mathit}{coolingfigure}{OT1}{cmr}{m}{it}
\SetMathAlphabet{\mathsf}{coolingfigure}{OT1}{cmss}{m}{n}
\SetMathAlphabet{\mathtt}{coolingfigure}{OT1}{cmtt}{m}{n}

\usepackage{comment}
\usepackage{braket}
\usepackage[
  colorlinks=true,
  linkcolor=black,
  citecolor=black,
  urlcolor=black
]{hyperref}

\newcommand{\Tr}{\operatorname{Tr}}

\newcommand{\prlsection}[1]{\section{#1}}

\hypersetup{
  pdftitle={Bath-assisted cooling without resets},
  pdfauthor={Xie-Hang Yu, Zherui Chen, Lin Lin}
}
\begin{document}
\title{Bath-assisted cooling without resets}

\author{Xie-Hang Yu}
\thanks{These authors contributed equally.}
\affiliation{Department of Computing and Mathematical Sciences, California Institute of Technology, Pasadena, CA 91125, USA}

\author{Zherui Chen}
\thanks{These authors contributed equally.}
\affiliation{Department of Mathematics, University of California, Berkeley, CA 94720, USA}

\author{Lin Lin}
\email{lin@caltech.edu}
\affiliation{Department of Computing and Mathematical Sciences, California Institute of Technology, Pasadena, CA 91125, USA}
\affiliation{Department of Mathematics, University of California, Berkeley, CA 94720, USA}
\affiliation{Applied Mathematics and Computational Research Division, Lawrence Berkeley National Laboratory, Berkeley, CA 94720, USA}
\begin{abstract}
Cooling an arbitrary mixed state to a pure ground state requires transferring its entropy to an environment. Can a single coherent bath contact accomplish this without repeated bath resets? We develop a new cooling mechanism that transfers the input information into bath degrees of freedom that subsequently decouple, while the system and the remaining bath follow a ground-state path. The bath is discarded only at the end, and the system Hamiltonian remains on and unmodified throughout. We rigorously realize this mechanism for a class of weakly interacting, gapped fermionic systems with local interactions. Starting from any mixed state, our protocol prepares the interacting ground state in total physical time polylogarithmic in the system size and inverse global trace-norm error. It uses two initially empty bath modes per system mode and a single pulse shared across all onsite system--bath couplings. The same pulse prepares each admissible system's own ground state without knowledge of its microscopic parameters. The guarantees follow directly from a non-Markovian system--bath dynamics without an effective Lindbladian description.
\end{abstract}
\maketitle

\prlsection{Introduction}

Cooling quantum systems toward their ground states is a central task in condensed-matter physics~\cite{anderson1995observation,Diehl2008Quantum,Bloch2012Simulation,Wecker2015Correlated,Xu}, quantum chemistry~\cite{aspuruguzik2005simulated,motta2020determining}, and quantum information science~\cite{Kraus2008Preparation,verstraete2009quantum,Barreiro2011Simulator}. Unitary evolution of an isolated system preserves its entropy, so preparing a pure ground state from an arbitrary mixed state requires transferring entropy to auxiliary degrees of freedom, which we call a bath. In a quantum circuit, ancillary qubits can serve as an engineered bath, allowing cooling through joint system--bath evolution followed by discarding the bath~\cite{terhal2000problem}.

Starting from the seminal works of Redfield and Davies~\cite{redfield1957theory,Davies1974Markovian}, much of the theory of system--bath cooling assumes weak system--bath coupling. Together with a bath memory time short compared with the system's relaxation time, this assumption leads to an effective Markovian description~\cite{spohn1978irreversible,lidar2001completely}. In repeated-interaction implementations, bath resets remove system--bath correlations and restore the bath's capacity to absorb entropy~\cite{scarani2002thermalizing,strasberg2017quantum,ciccarello2022quantum}. Recent algorithmic advances in preparing thermal and ground states of general interacting Hamiltonians with rigorous performance guarantees have renewed interest in system--bath cooling~\cite{ding2024single,ding2025efficient,Chen2025EfficientThermal,rouze2026efficient,zhan2026rapid,ding2026simple,hahn2025provably}. Such protocols combine local system--bath interactions with bath resets~\cite{hahn2025provably,molpeceres2025quantum,ding2026simple,wang2025beyond,slezak2026polynomialtime,chen2026overcoming} and may benefit from the robustness of dissipative preparation~\cite{verstraete2009quantum,molpeceres2025quantum,molpeceres2026benchmark}.

Weak coupling, however, slows cooling because dissipative rates arise only at second order in the system--bath interaction~\cite{lidar2001completely,ciccarello2022quantum}. Existing implementations accumulate cooling over many short bath contacts, each consisting of uninterrupted joint evolution with an initialized bath. The repeated contacts and bath resets can incur physical space--time costs that grow polynomially with the system size~\cite{ding2026simple,molpeceres2025quantum,wang2025beyond}. This motivates us to ask:
\begin{center}
\emph{Can a single coherent bath contact cool any initial state to the ground state?}
\end{center}

In this work, we develop a new \emph{non-Markovian} cooling mechanism that starts with a bath in a pure product state. We first transfer the initial system information into bath degrees of freedom that subsequently decouple, then guide the system and the remaining bath along a ground-state path. By the end of this single coherent contact, the system is arbitrarily close to its ground state and the initial entropy is stored in the bath, which we discard. We analyze the joint evolution directly, which avoids the weak-coupling approximation or an effective Markovian description.

This argument of coherent entropy transfer followed by ground-state transport can be applicable to many quantum systems. For concreteness, we consider a class of weakly interacting, gapped fermionic systems and prove ground-state preparation from any initial state in a single coherent contact. The total duration of joint system--bath evolution (referred to as the physical time), scales polylogarithmically with the system size and inverse global trace-norm error. We also bound the peak local control strength polylogarithmically, obtaining polylogarithmic control complexity for a digital implementation. A common pulse controls only the system--bath coupling and works throughout the class, without knowledge of the initial state or the microscopic Hamiltonian coefficients. Table~\ref{tab:physical-time-comparison} compares our physical-time bound with the polynomial bounds of previous system--bath protocols~\cite{ding2026simple,molpeceres2025quantum,wang2025beyond,slezak2026polynomialtime,chen2026overcoming}.

\prlsection{Related work}
Ground-state preparation for general local Hamiltonians is QMA-hard in the worst case~\cite{kempe2006complexity}, so efficient quantum cooling algorithms are expected to require additional structure or prior information. Filtering methods, such as quantum phase estimation, exploit an initial state with sufficiently large ground-state overlap~\cite{abrams1999quantum,aspuruguzik2005simulated,poulin2009preparing,Ge2019Faster,lin2020nearoptimal,dong2022ground}, and finding and preparing such states using classical approximations can be costly~\cite{lee2023evaluating,berry2025rapid}. Adiabatic state preparation avoids requiring initial overlap with the target by transporting an easily prepared ground state along a suitably gapped Hamiltonian path~\cite{farhi2001quantum,aspuruguzik2005simulated,JansenRuskaiSeiler2007,Wecker2015Correlated,albash2018adiabatic}. However, its unitary evolution on the system alone preserves entropy and cannot prepare a pure ground state from arbitrary mixed inputs. In this work, we adapt local adiabatic as well as perturbative dressing methods~\cite{bachmann2018adiabatic,teufel2020nonequilibrium} to control system excitations using the fixed system gap, even as the joint system--bath gap closes.

Dissipative cooling can prepare ground states from arbitrary initial states, including mixed states~\cite{Kraus2008Preparation,verstraete2009quantum}. Lindbladian approaches to thermalization and ground-state preparation have established fast mixing through spectral-gap estimates~\cite{tong2025fast,vsmid2025rapid} and, for suitable model classes, the stronger property of rapid mixing~\cite{zhan2026rapid}. Even with rapid mixing, implementing the generator requires realizing its jump operators, which can be quasi-local and require block encodings or linear combinations of unitaries. High-precision algorithms for Lindblad simulation~\cite{CleveWang2017Lindblad,li2023simulating,ding2024simulating}, improve the dependence on simulation accuracy, but constructing and simulating these operators can still incur substantial circuit and ancilla overheads on early fault-tolerant devices.

Direct system--bath protocols can avoid explicit jump-operator synthesis by using only Hamiltonian simulation and bath resets~\cite{ding2026simple}. Their simpler circuits are better suited to devices with limited resources~\cite{farrell2026preparing}. The analyses in Refs.~\cite{ding2026simple,slezak2026polynomialtime,chen2026overcoming} nevertheless establish convergence through mixing estimates for an effective Lindbladian. Our protocol shares these Hamiltonian simulation primitives, but we prove convergence directly through coherent entropy transfer and ground-state transport, establishing a non-Markovian cooling mechanism without an effective Lindbladian or intermediate bath resets.

For gapped free fermions, microscopic system--bath protocols with local-mode coupling have rigorous ground-state preparation bounds polynomial in the system size~\cite{ding2026simple,wang2025beyond}. For weakly interacting fermions, the cited results establish finite-temperature Gibbs preparation~\cite{slezak2026polynomialtime,wang2025beyond,chen2026overcoming}. To our knowledge, extending these guarantees to ground-state preparation at fixed interaction strength remains to be done. Our protocol prepares the ground state of a class of weakly interacting fermionic systems from arbitrary initial states in a single coherent bath contact, with physical time polylogarithmic in the system size and inverse global trace-norm error (Table~\ref{tab:physical-time-comparison}).

\begin{table*}[tbp]
\caption{Comparison of system--bath state-preparation protocols.
The system size $N$ counts fermionic modes, the target error $\epsilon$
is the full trace-norm error, and the total physical time $\tau$ sums
all coherent Hamiltonian evolution, including rewinding when required.
Local energy scales, Hamiltonian gaps, and locality parameters are fixed,
and reset latency is excluded.
The notation $\operatorname{polylog}$ suppresses fixed logarithmic powers.
The thermal bounds assume finite inverse temperature $\beta\geq1$,
parity-preserving inputs, and interactions below a temperature-dependent
threshold that preserves a fixed fraction of the free dynamical gap.
The Molpeceres entry is a perturbative estimate at fixed global fidelity,
and the Chen entry gives the expected total time $\mathbb E\tau$.}
\label{tab:physical-time-comparison}
\centering
\small
\setlength{\tabcolsep}{0pt}
\renewcommand{\arraystretch}{1.18}
\begin{tabular*}{\textwidth}{@{\extracolsep{\fill}}lll@{}}
\hline\hline
\noalign{\vskip 3pt}
\textbf{Source} & \textbf{Total physical time} & \textbf{Comments} \\
\noalign{\vskip 3pt}
\hline
\noalign{\vskip 4pt}
\cite[Thm.~S18]{ding2026simple}
& $O\!\left(N^2\epsilon^{-1}\operatorname{polylog}(N/\epsilon)\right)$
& Free fermions, ground state \\[4pt]
\cite[Thm.~S15]{wang2025beyond}
& $O\!\left(N^4\operatorname{polylog}(N/\epsilon)\right)$
& Free fermions, ground state \\[4pt]
\cite[Sec.~V.3]{molpeceres2025quantum}
& $O(N^4)$
& \begin{tabular}[c]{@{}l@{}}Free fermions, ground state\\[-2pt]
Perturbative estimate\end{tabular} \\
\noalign{\vskip 5pt}
\hline
\noalign{\vskip 5pt}
\cite[Prop.~11]{slezak2026polynomialtime}
& $N^{10}\epsilon^{-4}e^{O(\beta^2)}\operatorname{polylog}(N/\epsilon)$
& \begin{tabular}[c]{@{}l@{}}Weakly interacting fermions\\[-2pt]
Finite temperature\end{tabular} \\[4pt]
\cite[Cor.~S11]{wang2025beyond}
& $N^7\epsilon^{-2}e^{O(\beta^2)}\operatorname{polylog}(N/\epsilon)$
& \begin{tabular}[c]{@{}l@{}}Weakly interacting fermions\\[-2pt]
Finite temperature\end{tabular} \\[4pt]
\cite[Cor.~3.5]{chen2026overcoming}
& $\mathbb E\tau\leq N^6\epsilon^{-1}e^{O(\beta^2)}\operatorname{polylog}(N/\epsilon)$
& \begin{tabular}[c]{@{}l@{}}Weakly interacting fermions\\[-2pt]
Finite temperature\end{tabular} \\
\noalign{\vskip 5pt}
\hline
\noalign{\vskip 5pt}
\textbf{This work, Thm.~\ref{Prop:main_text}}
& {\boldmath$\operatorname{polylog}(N/\epsilon)$}
& \begin{tabular}[c]{@{}l@{}}\textbf{Weakly interacting fermions}\\[-2pt]
\textbf{Ground state}\end{tabular} \\
\noalign{\vskip 4pt}
\hline\hline
\end{tabular*}
\end{table*}

\prlsection{Main result}
The setup of our weakly interacting fermionic systems is as follows. We consider $N$ fermionic modes on a lattice of fixed spatial dimension, with annihilation operators $c_j$. Let $h=h^\dagger$ be a finite-range hopping matrix, and let the Hermitian operators $V_X$ describe number-conserving interactions supported on finite sets $X$, with a real interaction strength $u$. The system Hamiltonian $H_u$ is
\begin{equation}
 H_u=\sum_{j,k}h_{jk}c_j^\dagger c_k+u\sum_XV_X.
 \label{eq:system}
\end{equation}
 We require that the hopping and interaction are both geometrically local. In particular, we have
 \[
\sup_j\sum_k |h_{jk}|\le \Lambda,
\qquad
J_{\mathrm{int}}
:=|u|\sup_j\sum_{X\ni j}|V_X|,
\]
where $\Lambda>0$ sets the local energy scale and both $\Lambda$ and $J_{\mathrm{int}}$ are independent of the system size. We further assume that the free-fermionic part has a finite gap at the single-particle level. By denoting $\sigma(h)$ as the spectrum of the hopping matrix $h$, we require
$
\sigma(h)\subset[-\Lambda,-\Delta]\cup[\Delta,\Lambda],$ with $
0<\Delta\le\Lambda.$

\begin{figure*}[t]
\centering
\resizebox{\linewidth}{!}{\input{figures/figure-bath-strategies-input.tex}}
\caption{Two organizations of cooling. Left: previous repeated cooling rounds with freshly reset bath modes can realize a dissipative evolution. Right: we design one continuous joint evolution with a finite coherent bath. The system approaches its interacting ground state for every input $\rho_S$, while the bath retains the input entropy and information. The distinction lies in the joint dynamics during the system--bath evolution.}
\label{fig:mechanism}
\end{figure*}

The external control couples the system to two bath modes at each site through a time-dependent system--bath coupling Hamiltonian $H_{SE}(t)$, as illustrated in the right panel of Fig.~\ref{fig:mechanism}. Denoting the corresponding bath annihilation operators by $e_{j,+}$ and $e_{j,-}$, we write
\begin{equation}
 H_{SE}(t)=A(t)\sum_j\left[
 e^{i\phi(t)}\bigl(c_j^\dagger e_{j,+}^\dagger+c_je_{j,-}^\dagger\bigr)
 +\mathrm{h.c.}\right].
 \label{eq:physical}
\end{equation}
The $+$ mode couples to the system through pair-creation and pair-annihilation processes, whereas the $-$ mode couples through particle exchange. Physically, these two modes provide local particle-source and particle-sink channels, respectively. Both processes are driven uniformly across all sites by the same nonnegative amplitude $A(t)$ and real phase $\phi(t)$.

The bath has no intrinsic Hamiltonian and is initially prepared in the physical vacuum $|0\rangle_B$. It is kept coherent throughout the entire cooling process without intermediate reset or refresh, and the bath is discarded only at the end. The full evolution is therefore generated by
\[
H_{\mathrm{total}}(t)=H_u+H_{SE}(t),
\]
with the system Hamiltonian $H_u$ kept fixed throughout the protocol. In contrast to standard adiabatic preparation, the protocol does not require time-dependent control or fine-tuning of $H_u$ or of its microscopic parameters.

Let $\mathcal{T}$ denote time ordering. For a pulse of duration $\tau$, the joint system--bath propagator is
\begin{equation}
 U(\tau)=\mathcal{T}\exp\!\left[
 -i\int_0^\tau \bigl(H_u+H_{SE}(t)\bigr)\,dt
 \right].
 \label{eq:propagator}
\end{equation}
Starting from an arbitrary system density matrix $\rho_S$ and the bath vacuum $|0\rangle_B$, we discard the bath only at the end of the evolution. The resulting output system state is
\begin{equation}
 \rho_{\mathrm{out}}(\rho_S)=
 \Tr_B\!\left[
 U(\tau)(\rho_S\otimes|0\rangle_B\langle0|)
 U(\tau)^\dagger
 \right].
 \label{eq:output}
\end{equation}
Our main result is the following.

\begin{Theorem}\label{Prop:main_text}
There exists a constant $c_{\rm int}>0$ such that, whenever
\[
J_{\mathrm{int}}\le c_{\rm int}\Lambda,
\]
the interacting Hamiltonian $H_u$ has a unique ground state $|G\rangle$. For every $0<\epsilon\le1$, the cooling protocol described below and illustrated in Fig.~\ref{fig:cooling-protocol} can be chosen such that
\begin{align}
\sup_{\rho_S}
\left\lVert
\rho_{\rm out}(\rho_S)-|G\rangle\langle G|
\right\rVert_1
&\le\epsilon,
\qquad
\tau\le\frac{C}{\Lambda}
\log^\kappa\!\left(\frac{2N}{\epsilon}\right),
\nonumber\\
\max_t\{A(t),|\dot\phi(t)|\}
&\le
C\Lambda\log^2\!\left(\frac{2N}{\epsilon}\right).
\label{eq:main}
\end{align}
Here $\|\cdot\|_1$ denotes the full trace norm. The constants $c_{\rm int},C>0$ and the finite exponent $\kappa$ depend only on the local lattice structure and the single-particle gap ratio $\Delta/\Lambda$, and are independent of $N$ and $\epsilon$.
\end{Theorem}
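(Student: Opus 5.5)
The plan is to split the pulse into two stages. In a \emph{transfer} stage the system's input information is swapped into one family of bath modes, which then decouple. In a \emph{transport} stage the system and the remaining bath are carried from a known pure state to $|G\rangle\otimes(\text{bath})$ along a path that stays gapped locally. The bath combinations that make this work come from rewriting \eqref{eq:physical} as
\[
H_{SE}(t)=\sqrt2A(t)\sum_j\bigl(c_j^\dagger f_j(t)+f_j^\dagger(t)c_j\bigr),\qquad f_j(t)=\tfrac{1}{\sqrt2}\bigl(e^{i\phi(t)}e_{j,+}^\dagger+e^{-i\phi(t)}e_{j,-}\bigr).
\]
Here $f_j$ and its orthogonal partner $g_j=\tfrac1{\sqrt2}(e^{i\phi}e_{j,+}^\dagger-e^{-i\phi}e_{j,-})$ are canonical fermions. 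The bath vacuum is the state annihilated by the $(f_j-g_j)$- and $(f_j+g_j)^\dagger$-type combinations: it is a pure, maximally correlated state of each $(f_j,g_j)$ pair. Every estimate will be done in the frame co-rotating with $\phi$, where $\dot\phi$ acts as an effective bath "chemical potential" on the $f,g$ modes.

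\emph{Step 1 (coherent transfer).} I first take $\phi$ constant and $A$ large, of order $\Lambda\log^2(2N/\epsilon)$, with pulse area $\sqrt2\int A\,dt=\pi/2$. For the pure coupling this is an exact local swap $c_j\leftrightarrow f_j$. Afterwards the $f$ modes hold $\rho_S$, and system${}\otimes g$ is in a known pure product state $|\Psi_0\rangle$ obtained from the vacuum correlations. The $f$ modes never need to interact again: from then on I keep the coupling in the form $c^\dagger f'$ with a rotated mode $f'$ built from $g$ and the system-side partner. This is arranged by advancing the phase, which mixes $f$ and $g$ in the rotating frame.

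The $H_u$ error during this short pulse is local, of size $\Lambda/A$ per site. It must not be summed naively over $N$ sites. Instead I will dress the swap by a quasi-local unitary, using perturbative dressing in the spirit of \cite{teufel2020nonequilibrium}, so that the leading correction is absorbed into a slightly rotated initial state. That rotated state still lies in the domain of attraction of the transport stage. The transfer stage should then produce only a quasi-local, $O(\Lambda/A)$ deformation of $|\Psi_0\rangle$, and no loss of purity of the system${}\otimes g$ part.

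\emph{Step 2 (ground-state path).} Consider the family $K(s)=H_u+H_{SE}(s)-\dot\phi(s)N_{\rm bath}$ in the rotating frame. At strong coupling, with a chosen detuning, its ground state restricted to the system${}\otimes g$ sector is the dressed version of $|\Psi_0\rangle$. At $A=0$ it is $|G\rangle\otimes$(a bath product state). The joint gap closes along the path, because bath modes become degenerate as $A\to0$. However, the \emph{system} gap $\Delta$ of $h$, which is stable under $J_{\rm int}\le c_{\rm int}\Lambda$ by standard quasi-adiabatic and Lie--Schwarz perturbation theory, controls the excitations that matter. This stability is also where uniqueness of $|G\rangle$ comes from.

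For the free part I verify the claim exactly at the single-particle level, by diagonalizing the $3\times3$-per-momentum Bogoliubov blocks, and show that the relevant band stays separated by at least $c\Delta$. The interaction is then treated by the local adiabatic theorem of \cite{bachmann2018adiabatic}. I choose a Gevrey-class schedule for $A(s)$ and $\dot\phi(s)$ of duration $T$, so that the per-site excitation error is $\exp(-c(\Lambda T)^{1/\kappa'})$. Lieb--Robinson bounds make these errors additive over sites.

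\emph{Step 3 (assembly).} The global trace-norm error is at most $N$ times the local error plus the Step 1 error. Requiring $N\exp(-c(\Lambda T)^{1/\kappa'})\le\epsilon/2$ gives $T\lesssim\Lambda^{-1}\log^{\kappa}(2N/\epsilon)$. Controlling the derivatives of the Gevrey schedule, and the Step 1 dressing, forces $A,|\dot\phi|\lesssim\Lambda\log^2(2N/\epsilon)$. Tracing out the bath then yields \eqref{eq:main} uniformly in $\rho_S$, by convexity and because the map is linear.

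I expect the main obstacle to be Step 2 at small $A$, where the system--bath gap closes. There I would first try to show that the dressed ground-state projector varies smoothly in the system${}\otimes$bath-sector. The decoupling bath modes contribute only a trivial tensor factor, so the adiabatic theorem only needs a gap in the sector generated from $|\Psi_0\rangle$, and conserved local parities can be used to block the dangerous near-degenerate transitions.
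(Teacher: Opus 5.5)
\textbf{Step 1 is where the argument breaks.} A constant-phase $\pi/2$ swap is resonant and has no gap protecting it. At $t=0$ (with $A=0$, $\dot\phi=0$) the input subspace $\Pi_{\rm in}=I_S\otimes|0\rangle_B\langle0|$ is not spectrally isolated. During the pulse it is not invariant under $\sqrt2A\sum_j(c_j^\dagger f_j+\mathrm{h.c.})$.

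In the interaction picture, the first-order error $\int_0^T U_{SE}^\dagger H_u U_{SE}\,dt$ contains terms such as $h_{jk}c_j^\dagger f_k$ and $h_{jk}f_j^\dagger f_k$. Their integrated coefficients are of size $\Lambda/A$ and do not cancel. These terms create bath excitations and, after the swap, couple the memory modes to the system--$g$ pairs. So your claim that the system$\otimes g$ part stays pure fails already at first order. Leakage from distant bonds is mutually orthogonal, so the lost weight is of order $N(\Lambda/A)^2$. This generically forces $A$ to grow polynomially in $N/\epsilon$, not polylogarithmically.

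Your repair has nothing to act on. The subspace at $t=0$ is fixed by the physical input, so any dressing must be the identity there. The later stages are unitary transport, not an attracting dynamics, so they carry these excitations along rather than removing them. In the ideal limit, your swap followed by a suitable static phase jump of $\pm\pi/2$ does reproduce the paper's ideal map: the $(c_j,g_j)$ pair ends in the bonding ground state of the strong coupling and the memory decouples. The problem is robustness to $H_u$.

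The missing idea is to let the chirp supply a gap. In the paper the entrance starts at $A=0$ with $\omega=r$. In the rotating frame, $\Pi_{\rm in}$ is then the ground space of $r(N_E-N)$ with gap $r$. Turning $\vartheta$ from $0$ to $\pi/2$ rotates the zero mode $\cos\vartheta\,c_j+\sin\vartheta\,d_j$ adiabatically from $c_j$ into $d_j$, and the whole $2^N$-dimensional manifold keeps gap $r$. In the frame $e^{i\vartheta Y}$ this becomes $iz\partial_s\chi=[N_E+zV]\chi$, which has an onsite integer-spectrum reference, an exact support-preserving inverse, and $W_n(0)=I$ by flatness. The result is leakage $CNn^2\theta^n$ at $r=\Lambda C(n+2)^2/\theta$, which is where the $\log^2$ peak bound comes from.

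\textbf{Step 2 has two further problems.} First, a detuning is an $f$--$g$ hopping in the co-rotating frame, since $N_E=N+\sum_j(b_j^\dagger d_j+d_j^\dagger b_j)$. So any detuning during transport re-couples the memory modes holding $\rho_S$, contradicting your own Step 1. The paper keeps $\omega=0$ exactly after the entrance. No detuning is needed anyway, because after the phase jump the post-swap state is already the strong-coupling ground state.

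Second, you give no working mechanism at small $A$. The joint spectrum has soft modes with $E_{\rm s}\approx2A^2/|\lambda|$ and nonzero system weight $Z_{\rm s}$. Hence there is no spectrally isolated projection on which an adiabatic theorem could act, and local parities are not conserved by the hopping. Also, $h$ need not be translation invariant, so per-momentum blocks are unavailable; per-eigenvalue $2\times2$ blocks of $h$ are.

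Your intuition that only the system gap should matter near $A=0$ is right, but making it rigorous takes two separate results.
\begin{itemize}
\item A uniform interacting gap $\operatorname{gap}H_{\rm br}(A)\ge c_{\rm gap}\min\{A,A^2/\Lambda\}$ for every $A>0$, at one $N$-independent interaction threshold. This comes from a De Roeck--Salmhofer tree expansion with system-only vertices and the compensation $Z_{\rm s}/E_{\rm s}\le1/\Delta$, not from stability of $H_u$ alone. It permits adiabatic transport down to a polylogarithmically small $A_c$.
\item A perturbative, not adiabatic, release below $A_c$. It uses the fixed reference $H_u\otimes I_{\rm br}$ with gap $\Delta/4$ and the rank-$2^N$ target $P_u\otimes I_B$, which tolerates arbitrary bath excitations. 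The release takes duration $1/A_c$ and needs a one-sided matching showing that the rank-one $P_{\rm br}(A_c)$ lies in the statically dressed target.
\end{itemize}
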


The trace-norm bound controls the complete interacting many-body state, including correlations across the entire system. The initial state may have extensive entropy or exponentially small ground-state population, and no knowledge of that state is required. Moreover, the system Hamiltonian $H_u$ can be treated as a black box: it may arise from an unknown experimental setting, remains fixed throughout the evolution, and the control  pulse  $A(t),\phi(t)$ does not depend on its microscopic hopping or interaction coefficients. Thus, the same single coherent system--bath coupling protocol applies uniformly to every Hamiltonian within the stated class and to arbitrary initial states, preparing the interacting many-body ground state in a physical evolution time that grows only polylogarithmically with the system size and the inverse global trace-norm error.

In this protocol, the system Hamiltonian $H_u$ remains fully active throughout the evolution; in particular, the interaction strength $u$ does not need to be reduced or fine-tuned as the system--bath coupling is varied. Although the system--bright gap decreases as $A\to1$, the corresponding low-energy excitation becomes increasingly localized in the bath. In the free model, both its excitation energy and its weight on the system vanish quadratically with $A$. Since the interaction terms act only on the system degrees of freedom, their effect on this soft mode is suppressed by the same loss of system weight. Using the local stability estimates of Ref.~\cite{DRS}, we make this compensation mechanism rigorous in the interacting setting, allowing the same fixed interacting Hamiltonian $H_u$ to remain on throughout the protocol.
  
We control the errors of all three stages by constructing
locally dressed subspaces and bounding the leakage from
them~\cite{BMNS,bachmann2018adiabatic}.
The total leakage decreases exponentially with the expansion
order $n$, whereas
the required control strength and evolution times grow only polynomially with $n$.
The logarithmic choice of $n$ in Eq.~\eqref{eq:main_text_choose_of_n} therefore gives the
resource bounds in Eq.~\eqref{eq:main}.
The leakage estimates hold uniformly over the entire initial
input subspace and yield the stated full trace-norm accuracy
for every input state.


\begin{figure*}[t]
\centering
\resizebox{\linewidth}{!}{\input{figures/figure-cooling-stages-input.tex}}
\caption{Three-stage cooling control path $(A(t),\omega(t))$ in the rotating frame. The numbered points on the control path connect to the corresponding stage diagrams. Entrance transfers input entropy to the dark sector; transport follows the system--bright ground-state branch while the dark sector remains decoupled; exit releases the system and discards the bath. The state labels describe the ideal limit.}
\label{fig:cooling-protocol}
\end{figure*}

\prlsection{Proof ideas}
We specify the pulse through the coupling amplitude $A(t)$ and
phase rate $\omega(t)=\dot\phi(t)$, with $\phi(0)=0$. The pulse has three stages: entrance, transport, and exit, as shown in Fig.~\ref{fig:cooling-protocol}.
All stages use the same smooth, monotonically increasing switch
$f_{\uparrow}:[0,1]\to[0,1]$, specified explicitly in Appendix D.
It satisfies $f_{\uparrow}(0)=0$, $f_{\uparrow}(1)=1$, and all
positive-order derivatives vanish at both endpoints.

During entrance, the coupling is turned on over a duration
$\tau_{\mathrm{ent}}=\Lambda^{-1}$. We set
$\vartheta(t)=(\pi/2)f_{\uparrow}(t/\tau_{\mathrm{ent}})$ and
\begin{equation}
 A(t)=\frac{r}{\sqrt{2}}\sin\vartheta(t),\qquad
 \omega(t)=r\cos\vartheta(t),
 \label{eq:entrance-pulse}
\end{equation}
where $r>0$ sets the control strength. Thus, the coupling rises
from zero to $A_{\rm ent}=r/\sqrt{2}$ while the phase rate
decreases from $r$ to zero.

During transport and exit, the phase remains constant,
so that $\omega=0$. Transport lowers the amplitude from
$A_{\rm ent}$ to a positive matching amplitude $A_c$
through successive ramps. Each complete ramp halves
the amplitude, with the last ramp ending at $A_c$.
At fixed $N$ and $\epsilon$, our chosen duration $\tau_i$
for a ramp starting at $A_i\le\Lambda$ scales as
$\tau_i\propto\Lambda^{-1}(\Lambda/A_i)^\nu$, where the
fixed exponent $\nu>1$ depends only on the spatial
dimension (Appendix D).
Exit consists of a single ramp from $A_c$ to zero,
with duration $\tau_{\mathrm{exit}}=1/A_c$.
Each ramp from $A_i$ to $A_f$ uses
$A_i+(A_f-A_i)f_{\uparrow}(s)$, where $s\in[0,1]$
is the elapsed fraction of its duration.
The endpoint conditions join all ramps smoothly into
the single pulse shown in Fig.~\ref{fig:cooling-protocol}.

To explain how this pulse transfers the input information
and prepares the system ground state, we introduce the
bright and dark combinations of the bath modes,
\begin{equation}
 b_j=\frac{e_{j,-}-e_{j,+}^\dagger}{\sqrt2},\qquad
 d_j=\frac{e_{j,-}+e_{j,+}^\dagger}{\sqrt2}.
 \label{eq:modes}
\end{equation}
The two families obey canonical anticommutation relations. In the bath-rotating frame with respect to $\exp(i\phi(t)N_B)$ with $N_B=\sum_{j,\sigma\in{\pm}}e_{j,\sigma}^\dagger e_{j,\sigma}$, the phase rate $\omega(t)$ produces bright--dark hopping:
\begin{align}
 H_{\mathrm{rot}}(t)=H_{\mathrm{br}}(A(t))
 +\omega(t)\sum_j(b_j^\dagger d_j+d_j^\dagger b_j),
 \label{eq:rotating}\,  \qquad
 H_{\mathrm{br}}(A)=H_u-\sqrt2A\sum_j(c_j^\dagger b_j+b_j^\dagger c_j).
\end{align}
Here $H_{\mathrm{rot}}(t)$ denotes the full system--bath Hamiltonian in the bath-rotating frame, while $H_{\mathrm{br}}(A)$ denotes the Hamiltonian of the coupled system and bright modes.
An additive scalar has been omitted. The bright modes couple directly to the system. The dark modes only couple to the bright modes when the phase changes and decouple exactly once $\omega=0$. Snapshots (1)--(3) in Fig.~\ref{fig:cooling-protocol} follow the state through these coupled and decoupled stages.

\paragraph*{Entrance.}
In the bath-rotating frame, we define $H_{\mathrm{ctrl}}=H_{\mathrm{rot}}-H_u$. Along the entrance path,
\begin{equation}
H_{\mathrm{ctrl}}(\vartheta)
=-r\sin\vartheta\sum_j(c_j^\dagger b_j+b_j^\dagger c_j)
+r\cos\vartheta\sum_j(b_j^\dagger d_j+d_j^\dagger b_j).
\label{eq:entrance-control}
\end{equation}
We first illustrate the information-transfer mechanism using $H_{\mathrm{ctrl}}(\vartheta)$ alone, and temporarily neglect $H_u$. For each system site together with its two bath modes, the corresponding single-particle energies are $0$ and $\pm r$, with the zero-energy mode given by
\begin{equation}
\cos\vartheta\,c_j+\sin\vartheta\,d_j.
\label{eq:entrance-mode}
\end{equation}
As $\vartheta$ increases from $0$ to $\pi/2$, this mode is continuously rotated from the system mode $c_j$ into the dark bath mode $d_j$. Initially, the physical bath vacuum occupies the negative-energy modes and leaves the positive-energy modes empty, while the zero modes encode the arbitrary system input. These states form a $2^N$-dimensional ground-state manifold of $H_{\mathrm{ctrl}}$, separated from the excited states by the gap $r$. In the adiabatic limit, the entire manifold is transported coherently: at the end of the entrance stage, the system--bright sector is brought to the ground state of $H_{\mathrm{ctrl}}(\pi/2)$, while all dependence on the initial system state, including its quantum coherence, is transferred to the dark modes.

In the protocol, the system Hamiltonian $H_u$ remains present throughout the entrance stage. We choose
\[
r\sim\Lambda\log^2(2N/\epsilon),
\qquad
\tau_{\mathrm{ent}}=\Lambda^{-1},
\]
so that the control scale $r$ is parametrically larger than the local energy scale $O(\Lambda)$ of $H_u$, while the adiabatic parameter is
\[
\frac{1}{r\tau_{\mathrm{ent}}}=\frac{\Lambda}{r}.
\]
A local dressing of the reference ground-state manifold then incorporates the full Hamiltonian $H_u$, including the interactions, and allows us to control the leakage uniformly over all initial system states. At the endpoint, we show that the dressed system--bright sector is close to the interacting ground state of
\[
H_{\mathrm{br}}(r/\sqrt2)
=H_u+H_{\mathrm{ctrl}}(\pi/2).
\]
Thus, after a physical time $\tau_{\mathrm{ent}}=\Lambda^{-1}$, the joint state is close to one in which the system--bright sector occupies this common interacting ground state, while the dark sector stores the information carried by the initial system state. The system--bright sector is thereby prepared for the subsequent transport stage, with the dark sector acting as an information memory.

\paragraph*{Transport.}
During this step, the dark modes remain exactly
decoupled, while the system and bright modes follow
the interacting ground-state branch of $H_{\mathrm{br}}(A)$. Under the assumption in Theorem~\ref{Prop:main_text}, this ground state is unique for every $A>0$. For a constant $c_{\mathrm{gap}}>0$ independent of $A$ and $N$, its gap obeys
\begin{equation}
 \operatorname{gap}H_{\mathrm{br}}(A)\ge c_{\mathrm{gap}}\min\{A,A^2/\Lambda\}.
 \label{eq:gaps}
\end{equation}
To specify the terminal point $A_c$, let $D$ denote the spatial dimension
and fix constants $n_0\ge3$ and $c_0>0$ as in Appendix E.
We set
\begin{equation}\label{eq:main_text_choose_of_n}
 n=\max\left\{
 n_0,\left\lceil c_0\log\!\left(\frac{2N}{\epsilon}\right)
 \right\rceil\right\}
\end{equation}
and choose
\begin{equation}
 A_c=\frac{c_A\Lambda}
 {n^{6D}[\log(e+n)]^{8D}},
 \label{eq:matching-amplitude}
\end{equation}
with a sufficiently small fixed $c_A>0$.
These constants are independent of $N$ and $\epsilon$.

This choice makes the remaining coupling weak enough, so that the subsequent exit stage can be performed efficiently,  while keeping the transport gap bounded below by
$c_{\mathrm{gap}}A_c^2/\Lambda$. Since the system--bright ground-state branch remains gapped throughout the entire transport path, each ramp duration $\tau_i$ can be chosen according to the corresponding gap bound so that the adiabatic error remains controlled. Applying the local adiabatic estimates successively to all ramps then gives
\begin{equation}
 \tau_{\mathrm{tr}}\le\frac{C}{\Lambda}\log^\kappa\!\left(\frac{2N}{\epsilon}\right),
 \label{eq:transport-time}
\end{equation}
with the same exponent $\kappa$ as in Eq.~\eqref{eq:main}.

\paragraph*{Exit.}
The final stage turns off the remaining system--bath coupling while protecting only the system ground-state sector. As $A\to0$, the bath decouples from the system and its excitations become arbitrarily soft. Consequently, the gap protecting the joint system--bright ground state closes at the endpoint, and the ground-state transport used in the preceding stage cannot be continued all the way to $A=0$. Importantly, however, the cooling task does not require the bath to remain in its ground state: only the system must stay close to its target ground state, while excitations in the bath are harmless. The relevant gap for the exit stage is therefore the system gap, which remains open even as the full system--bath gap closes. For the matching amplitude in Eq.~\eqref{eq:matching-amplitude}, we choose
\begin{equation}
 \tau_{\mathrm{exit}}
 =\frac{1}{A_c}
 =\frac{n^{6D}[\log(e+n)]^{8D}}{c_A\Lambda}.
 \label{eq:exit-time}
\end{equation}
Thus, the release requires only polylogarithmic physical time despite the closing joint gap.

To make this idea precise, let $P_u=\ket{G}\bra{G}$ and let $I_B$ denote the identity on the full bath. At $A=0$, the target projector is
\[
P_u\otimes I_B,
\]
which fixes the system in its ground state while placing no restriction on the bath. For $0<A<A_c$, we construct a locally dressed version of this projector that incorporates the residual system--bath correlations. Unlike the transport stage, this construction relies only on the fixed system gap
\[
\operatorname{gap}H_u\ge \Delta/4,
\]
rather than on a gap of the full system--bath Hamiltonian. It therefore suppresses excitations of the system while allowing arbitrarily low-energy excitations in the bath. We show that, already at $A=A_c$, the incoming interacting system--bright ground state lies within controlled error of this dressed target sector. During the exit, the dressing is continuously removed together with the system--bath coupling. At $A=0$, the dressed projector reduces to $P_u\otimes I_B$: the system is in its ground state, while the bath is free to retain both the information carried by the initial state and any residual excitations. Discarding the two bath modes at each site then completes the cooling protocol.

\prlsection{Discussion}
To our knowledge, we have established the first system--bath cooling protocol for a class of interacting many-body systems that prepares the ground state from arbitrary initial states in a single coherent bath contact with polylogarithmic physical time and peak local control strength. The system Hamiltonian remains fixed throughout, and the controls require no knowledge of its microscopic coefficients. A finite coherent bath retains the information and entropy transferred from the system, allowing cooling without intermediate resets or an effective Markovian description. How long must the bath remain coherent to cool without resets?

Our construction combines entropy removal with adiabatic transport in the joint system--bath Hilbert space. We first transfer the initial-state dependence into bath modes that subsequently decouple, then guide the system and the remaining bath along a common interacting ground-state path. Adiabatic estimates give finite-time error bounds in terms of the relevant spectral gaps. Whether this coherent protocol can also inherit the robustness of dissipative preparation remains an open question.

Although our analysis concerns weakly interacting fermions, we use the weak-interaction assumption to establish the spectral gaps of $H_u$ and the transport path. We expect that the three-stage construction and its cooling mechanism may extend to spin or bosonic systems and to some strongly interacting regimes where the required gap estimates can be established by other means.

\section*{Acknowledgments}
This material is based upon work supported by the U.S. Department of Energy, Office of Science, Accelerated Research in Quantum Computing Centers, Quantum Utility through Advanced Computational Quantum Algorithms, grant no. DE-SC0025572 (Z.C., L.L.), by the Peterson postdoctoral fellowship from the Department of Computing and Mathematical Sciences at Caltech  (X.Y.), and by the Simons Targeted Grant in Mathematics and Physical Sciences on Moire Materials Magic, Award No. 896630 (X.Y., L.L.). L.L. is a Simons Investigator in Mathematics.  Z. C. thanks the hospitality of the Department of Computing and Mathematical Sciences at Caltech, where part of this work was done.  The authors thank Zhiyan Ding, Haoen Li, Yilun Yang for helpful discussions.

\section*{AI statement}
We used OpenAI GPT-5.5, GPT-5.6 Sol, and GPT-6 Astra to explore candidate proof strategies and to assist with manuscript preparation. We checked the AI-assisted material and take full responsibility for the content of this work.

\nocite{feng2022quantum}
\bibliographystyle{apsrev4-2}
\bibliography{ref}

\include{appendix}

\end{document}

%% file: figures/figure-bath-strategies-input.tex
\begingroup
\definecolor{ink}{HTML}{172B3A}
\definecolor{slate}{HTML}{566975}
\definecolor{guide}{HTML}{C6D1D7}
\definecolor{rulegray}{HTML}{DDE5E9}
\definecolor{sysblue}{HTML}{245C80}
\definecolor{bathteal}{HTML}{287D80}
\definecolor{darkgold}{HTML}{AD6512}
\definecolor{sysfill}{HTML}{EEF3F8}
\definecolor{brightfill}{HTML}{EDF7F6}
\definecolor{darkfill}{HTML}{FCF4E8}

\providecommand{\Tr}{\operatorname{Tr}}

\begin{tikzpicture}[x=1mm,y=1mm,
  font=\fontsize{8}{9.4}\selectfont,text=ink,
  line cap=round,line join=round,
  every node/.style={inner sep=0.6pt,outer sep=0pt},
  panel title/.style={anchor=west,font=\fontsize{9.5}{11}\selectfont\bfseries},
  section title/.style={anchor=west,font=\fontsize{9}{10.5}\selectfont\bfseries},
  note/.style={font=\fontsize{7.3}{8.6}\selectfont,text=slate},
  wire/.style={draw=slate,line width=0.65pt},
  flow/.style={-{Stealth[length=1.65mm,width=1.05mm]},line width=0.65pt},
  axis/.style={draw=guide,line width=0.45pt,-{Stealth[length=1.7mm,width=1.15mm]}},
  stage badge/.style={circle,minimum size=3.7mm,inner sep=0pt,
    font=\fontsize{7.8}{8}\selectfont\bfseries,text=white},
  panel badge/.style={stage badge,fill=ink,minimum size=3.8mm},
  object/.style={rounded corners=0.8mm,line width=0.65pt,
    minimum width=16mm,minimum height=8.4mm,
    font=\fontsize{8.1}{9.4}\selectfont\bfseries},
  system/.style={object,draw=sysblue,fill=sysfill,text=sysblue},
  bright/.style={object,draw=bathteal,fill=brightfill,text=bathteal},
  dark/.style={object,draw=darkgold,fill=darkfill,text=darkgold}]
\path[use as bounding box] (0,0) rectangle (170,36);
\draw[draw=rulegray,line width=0.4pt] (83.5,0.5) -- (83.5,35.5);

\begin{scope}[yshift=-32mm]
\node[section title] at (0.5,64.5) {Previous work: repeated resets};
\node[note,anchor=east,text=ink] at (9,54) {System};
\node[note,anchor=east,text=ink] at (9,46) {Bath};
\draw[draw=gray!75,flow,line width=0.65pt] (14.2,54) -- (65,54);
\node[text=sysblue,anchor=east] at (14.2,54) {$\rho_S$};
\foreach \x in {22,39,56}{
  \draw[draw=gray!75,line width=0.65pt] (\x-6,46) -- (\x+6.5,46);
  \draw[draw=gray!75,flow,line width=0.65pt]
    (\x+6.5,46) -- (\x+6.5,38);
  \node[note,text=ink] at (\x+6.5,34.6) {$\Tr_B$};
  \node[note,text=bathteal,anchor=east] at (\x-6,46) {$|0\rangle$};
  \path[draw=sysblue,fill=sysblue!7,line width=0.75pt]
    (\x-3.3,43.1) rectangle (\x+3.3,56.9);
  \node[font=\fontsize{9}{10.5}\selectfont] at (\x,50) {$U$};
}
\node[fill=white,inner xsep=1.1mm,inner ysep=0.3mm] at (47.5,54) {$\cdots$};
\node[note,text=darkgold] at (30.5,49.3) {reset};
\node[note,text=darkgold] at (47.5,49.3) {reset};
\node[note,text=sysblue,align=left,anchor=west] at (66,54) {Ground\\state};
\end{scope}

\begin{scope}[xshift=89mm,yshift=2mm]
\node[section title] at (0.5,30.5) {This work: without repeated resets};
\node[note,anchor=east,text=ink] at (9,20) {System};
\node[note,anchor=east,text=ink] at (9,13.35) {Bath};
\draw[draw=gray!75,flow,line width=0.65pt] (14.2,20) -- (65,20);
\draw[draw=gray!75,line width=0.65pt] (17.2,15.2) -- (18.7,15.2);
\draw[draw=gray!75,line width=0.65pt] (17.2,11.5) -- (18.7,11.5);
\draw[draw=gray!75,line width=0.65pt] (55,13) -- (66.8,13);
\draw[draw=gray!75,flow,line width=0.65pt] (66.8,13) -- (66.8,4.5);
\path[draw=bathteal,fill=bathteal!7,line width=0.8pt,rounded corners=0.8mm]
  (18.7,10) rectangle (55,23);
\node[text=bathteal,align=center,font=\fontsize{7.7}{10.1}\selectfont]
  at (36.85,16.5) {Finite-time joint evolution\\[1.4pt]$H_u+H_{SE}(t)$};
\node[text=sysblue,anchor=east] at (14.2,20) {$\rho_S$};
\node[note,text=bathteal,anchor=east] at (17.8,15.2) {$|0\rangle_{B_+}$};
\node[note,text=bathteal,anchor=east] at (17.8,11.5) {$|0\rangle_{B_-}$};
\node[note,text=sysblue,align=left,anchor=west] at (66,20) {Ground\\state};
\node[text=darkgold,align=center,font=\fontsize{6.8}{8}\selectfont]
  at (60.7,8.3) {Input\\entropy};
\node[note,text=ink] at (66.8,1.2) {$\Tr_B$};
\end{scope}
\end{tikzpicture}
\endgroup

%% file: figures/figure-cooling-stages-input.tex
\begingroup
\definecolor{ink}{HTML}{172B3A}
\definecolor{slate}{HTML}{566975}
\definecolor{guide}{HTML}{C6D1D7}
\definecolor{rulegray}{HTML}{DDE5E9}
\definecolor{sysblue}{HTML}{245C80}
\definecolor{bathteal}{HTML}{287D80}
\definecolor{darkgold}{HTML}{AD6512}
\definecolor{sysfill}{HTML}{EEF3F8}
\definecolor{brightfill}{HTML}{EDF7F6}
\definecolor{darkfill}{HTML}{FCF4E8}

\providecommand{\Tr}{\operatorname{Tr}}

\begin{tikzpicture}[x=1mm,y=1mm,
  font=\fontsize{8}{9.4}\selectfont,text=ink,
  line cap=round,line join=round,
  every node/.style={inner sep=0.6pt,outer sep=0pt},
  panel title/.style={anchor=west,font=\fontsize{9.5}{11}\selectfont\bfseries},
  section title/.style={anchor=west,font=\fontsize{9}{10.5}\selectfont\bfseries},
  note/.style={font=\fontsize{7.3}{8.6}\selectfont,text=slate},
  wire/.style={draw=slate,line width=0.65pt},
  flow/.style={-{Stealth[length=1.65mm,width=1.05mm]},line width=0.65pt},
  axis/.style={draw=guide,line width=0.45pt,-{Stealth[length=1.7mm,width=1.15mm]}},
  stage badge/.style={circle,minimum size=3.7mm,inner sep=0pt,
    font=\fontsize{7.8}{8}\selectfont\bfseries,text=white},
  panel badge/.style={stage badge,fill=ink,minimum size=3.8mm},
  object/.style={rounded corners=0.8mm,line width=0.65pt,
    minimum width=16mm,minimum height=8.4mm,
    font=\fontsize{8.1}{9.4}\selectfont\bfseries},
  system/.style={object,draw=sysblue,fill=sysfill,text=sysblue},
  bright/.style={object,draw=bathteal,fill=brightfill,text=bathteal},
  dark/.style={object,draw=darkgold,fill=darkfill,text=darkgold}]
\path[use as bounding box] (0,0) rectangle (110,56);
\begin{scope}[
  object/.append style={minimum width=11mm,minimum height=6.2mm,
    rounded corners=0.65mm,font=\fontsize{7}{8.2}\selectfont\bfseries},
  object label/.style={font=\fontsize{6.3}{7.4}\selectfont\bfseries},
  note/.append style={font=\fontsize{6.4}{7.5}\selectfont},
  section title/.append style={font=\fontsize{8.2}{9.5}\selectfont\bfseries},
  stage badge/.append style={minimum size=3.5mm,
    font=\fontsize{7.2}{8}\selectfont\bfseries},
  stage frame/.style={draw=rulegray,line width=0.5pt,rounded corners=1.2mm},
  control/.style={font=\fontsize{7.1}{8.3}\selectfont},
  leader/.style={draw=guide,line width=0.5pt,rounded corners=0.7mm}]
\draw[leader] (24,44.81) -- (58,44.81);
\draw[leader] (28,32.5) -- (28,26.2) -- (76,26.2) -- (76,24);
\draw[leader] (13,32.5) -- (13,24);

\begin{scope}[xshift=-82mm,yshift=1mm]
\draw[axis] (88.5,31.5) -- (122.5,31.5);
\draw[axis] (90,30) -- (90,50);
\node[control,anchor=west] at (123.3,31.5) {$A$};
\node at (90,52) {$\omega$};
\node[control,anchor=east] at (88.5,46.5) {$r_{\rm ent}$};
\node[note,anchor=north east] at (88.8,30.2) {$0$};
\draw[draw=sysblue,line width=1.2pt,
  postaction={decorate},decoration={markings,
    mark=at position 0.24 with {\arrow{Stealth[length=1.5mm,width=1mm]}},
    mark=at position 0.84 with {\arrow{Stealth[length=1.5mm,width=1mm]}}}]
  plot[domain=0:90,samples=80,variable=\t]
    ({90+28*sin(\t)},{31.5+15*cos(\t)});
\draw[draw=bathteal,line width=1.2pt,
  postaction={decorate},decoration={markings,
    mark=at position 0.79 with {\arrow{Stealth[length=1.5mm,width=1mm]}}}]
  (118,31.5) -- (99,31.5);
\draw[draw=darkgold,line width=1.2pt,
  postaction={decorate},decoration={markings,
    mark=at position 0.8 with {\arrow{Stealth[length=1.4mm,width=0.9mm]}}}]
  (99,31.5) -- (90,31.5);
\fill[sysblue] (90,46.5) circle (0.65mm);
\fill[sysblue] (118,31.5) circle (0.7mm);
\fill[bathteal] (99,31.5) circle (0.65mm);
\fill[darkgold] (90,31.5) circle (0.65mm);
\node[stage badge,fill=sysblue] at (106,43.81) {1};
\node[stage badge,fill=bathteal] at (110,31.5) {2};
\node[stage badge,fill=darkgold] at (95,31.5) {3};
\node[control,anchor=north] at (99,29.8) {$A_c$};
\node[control,anchor=north] at (118,29.8) {$A_{\rm ent}$};
\end{scope}

\begin{scope}[xshift=-60mm,yshift=-11mm]
\draw[stage frame] (118,42) rectangle (170,65);
\node[stage badge,fill=sysblue] at (122,61.3) {1};
\node[section title,text=sysblue] at (126.2,61.3) {Entrance};
\node[system] (s1) at (126,55) {\phantom{System}};
\node[bright] (b1) at (144,55) {\phantom{Bright}};
\node[dark] (d1) at (162,55) {\phantom{Dark}};
\foreach \objectname/\objectword/\objectcolor in
  {s1/System/sysblue,b1/Bright/bathteal,d1/Dark/darkgold}{
  \node[object label,text=\objectcolor] at (\objectname.center) {\objectword};
}
\draw[draw=ink,line width=0.65pt] (s1.east) -- (b1.west);
\draw[draw=ink,line width=0.65pt] (b1.east) -- (d1.west);
\node[control] at (135,56.9) {$A\uparrow$};
\node[control] at (153,56.9) {$\omega\downarrow$};
\draw[flow,draw=darkgold] (126,49.1) -- (162,49.1);
\node[text=darkgold] at (144,46.3) {Input entropy};
\end{scope}

\begin{scope}[
  object/.append style={minimum width=8.6mm,minimum height=5.2mm,
    font=\fontsize{6.2}{7.3}\selectfont\bfseries},
  object label/.append style={font=\fontsize{5.6}{6.6}\selectfont\bfseries},
  control/.append style={font=\fontsize{5.6}{6.6}\selectfont,inner sep=0.2pt}]
\draw[stage frame] (58,0) rectangle (110,24);
\node[stage badge,fill=bathteal] at (62,20.3) {2};
\node[section title,text=bathteal] at (66.2,20.3) {Transport};
\begin{scope}[xshift=-65.6mm]
\node[system] (s2) at (135.2,13.7) {\phantom{System}};
\node[bright] (b2) at (149.4,13.7) {\phantom{Bright}};
\node[dark] (d2) at (164.8,13.7) {\phantom{Dark}};
\foreach \objectname/\objectword/\objectcolor in
  {s2/System/sysblue,b2/Bright/bathteal,d2/Dark/darkgold}{
  \node[object label,text=\objectcolor] at (\objectname.center) {\objectword};
}
\draw[draw=ink,line width=0.65pt] (s2.east) -- (b2.west);
\node[control] at (142.3,15.5) {$A\downarrow$};
\node[control] at (157.1,13.7) {$\omega{=}0$};
\draw[draw=guide,line width=0.45pt]
  (130.2,9.7) -- (130.2,8.8) -- (154,8.8) -- (154,9.7);
\node[note,align=center] at (142,5.4) {Ground-state\\transport};
\node[note,align=center] at (164,5.4) {Input\\memory};
\end{scope}

\draw[stage frame] (0,0) rectangle (52,24);
\node[stage badge,fill=darkgold] at (4,20.3) {3};
\node[section title,text=darkgold] at (8.2,20.3) {Exit};
\begin{scope}[xshift=-78.75mm]
\draw[draw=guide,line width=0.5pt,rounded corners=0.85mm]
  (99.8,10.2) rectangle (125.9,17.2);
\node[system] (s3) at (89,13.7) {\phantom{System}};
\node[bright] (b3) at (105.2,13.7) {\phantom{Bright}};
\node[dark] (d3) at (120.8,13.7) {\phantom{Dark}};
\foreach \objectname/\objectword/\objectcolor in
  {s3/System/sysblue,b3/Bright/bathteal,d3/Dark/darkgold}{
  \node[object label,text=\objectcolor] at (\objectname.center) {\objectword};
}
\node[control] at (96.95,13.7) {$A{=}0$};
\node[control] at (113,13.7) {$\omega{=}0$};
\node[note,align=center] at (89,5.4) {Ground\\state};
\draw[flow,draw=slate,line width=0.55pt] (112.85,10.2) -- (112.85,7.7);
\node[note] at (112.85,5.2) {Discard bath once};
\end{scope}
\end{scope}
\end{scope}
\end{tikzpicture}
\endgroup

%% file: appendix.tex

\appendix
\numberwithin{equation}{section}
\allowdisplaybreaks[2]
\setlength{\emergencystretch}{2em}

\input{appendix/appendix_A_model.tex}

\input{appendix/appendix_B_stability.tex}

\input{appendix/appendix_C_local_dressing.tex}

\input{appendix/appendix_D_stages.tex}

\input{appendix/appendix_E_resources.tex}

%% file: appendix/appendix_A_model.tex
\section{Model, conventions, and the main proposition}

\label{app:model:setting}

We give the detailed description of the model and our main result
in the main text. All Hamiltonians below act on finite-dimensional
fermionic Fock spaces. We set $\hbar=1$ and use natural logarithms
throughout. We keep the convention $\|\cdot\|$ for the operator norm
and $\|\cdot\|_{1}$ for the full trace norm, without a factor of
$1/2$.

\subsection{Local Hamiltonians and the weak-interaction condition}

For the system modes $c_{j}$, $1\le j\le N$, lying on a $D$-dimensional
lattice, we suppose a ball of radius $\ell$ contains at most $C_{\mathrm{vol}}(1+\ell)^{D}$
modes. The system Hamiltonian is given by 

\begin{equation}
H_{u}=H_{0}+u\sum_{X}V_{X},\qquad H_{0}=\sum_{j,k}h_{jk}c^{\dagger}_{j}c_{k},\qquad h=h^{\dagger},\label{app:model:Hu}
\end{equation}
where $H_{0}$ is a fermionic Gaussian Hamiltonian, $u$ is real and
each $V_{X}$ is a Hermitian, number-conserving operator supported
on system modes in $X$. We assume that the Hamiltonian is geometrically
local, namely, the region of $|X|$ and the hopping range of $H_{0}$
are both bounded by $R$. We further assume that the free Hamiltonian
$H_{0}$ is itself gapless and satisfies
\begin{equation}
\sup_{j}\sum_{k}|h_{jk}|\le\Lambda,\qquad\sigma(h)\subset[-\Lambda,-\Delta]\cup[\Delta,\Lambda],\qquad0<\Delta\le\Lambda\label{app:model:freegap}
\end{equation}
for some constant $\Lambda,\Delta$.

For any global operator $\Phi$ which has decomposition $\Phi=\sum_{X}\Phi_{X}$
into local terms, we define 
\begin{equation}
\|\Phi\|_{\mathrm{loc}}=\sup_{j}\sum_{X\ni j}\|\Phi_{X}\|.\label{app:model:localnorm}
\end{equation}
This will be referred as the local term and is dependent on the specific
decomposition we choose. In particular, $\|\Phi\|\le N\|\Phi\|_{\mathrm{loc}}$since
any term in $\Phi$ must contains at least one site $j$. For the
physical interaction, we introduce $J_{\mathrm{int}}=|u|\sup_{j}\sum_{X\ni j}\|V_{X}\|$.
Our sufficient weak-interaction assumption is 
\begin{equation}
J_{\mathrm{int}}\le c_{\mathrm{loc}}\frac{\Delta^{2}}{\Lambda}\left(1+\frac{R\Lambda}{\Delta}\right)^{-D},\label{app:model:weak}
\end{equation}
where $c_{\mathrm{loc}}>0$ is a constant. With $\Delta/\Lambda$
and the range $R$ fixed, this is the condition $J_{\mathrm{int}}\le c_{\mathrm{int}}\Lambda$
used in the main text, where $c_{\text{int}}$ depends on the ratio
of $\Delta/\Lambda$, $R$ and the dimension $D$ but is independent
of $N$. Theorem~\ref{app:stab:theorem} establishes the interacting
system gaps under this condition.

Unless stated otherwise, general constants denoted by $C$ and $c$
are positive, may change from line to line, and depend only on the
local lattice structure as well as $\Delta/\Lambda$. They do not
depend on $N$, the accuracy $\epsilon$, and the expansion order
$n$ we introduced in the proof below.

\subsection{Environmental bath control and the main proposition}

Each system mode receives two bath modes $e_{j,-},e_{j,+}$, initially
in their joint bath vacuum $|0\rangle_{B}$. The bare bath Hamiltonian
$H_{E}$ is zero. The time-dependent coupling control between the
system and bath is
\begin{equation}
H_{SE}(t)=A(t)\sum_{j}\left[e^{i\phi(t)}\bigl(c^{\dagger}_{j}e^{\dagger}_{j,+}+c_{j}e^{\dagger}_{j,-}\bigr)+\mathrm{h.c.}\right],\qquad A(t)\ge0.\label{app:model:physical}
\end{equation}
The full evolution Hamiltonian is $H_{u}+H_{SE}(t)$, with $H_{u}$
only acting on the system and does not change with time.

Let $U(t)$ is the evolution driven by the full Hamiltonian: 
\begin{equation}
i\partial_{t}U(t)=[H_{u}+H_{SE}(t)]U(t),
\end{equation}
 with $U(0)=I$. We turn on $H_{SE}(t)$ for a duration $\tau$, namely,
$H_{SE}(t)=0$ if $t\notin[0,\tau]$. After this control sequence,
the output system $\rho_{\text{out}}$
\begin{equation}
\rho_{\mathrm{out}}(\rho)=\operatorname{Tr}_{B}\!\left[U(\tau)(\rho\otimes|0\rangle_{B}\langle0|)U(\tau)^{\dagger}\right]\label{app:model:output}
\end{equation}
can be cooled down very close to the ground state of $H_{u}$ for
any input system state $\rho$, as stated by the following proposition:
\begin{prop}[Single-contact cooling]
\label{app:model:main} Under Eqs.~\eqref{app:model:freegap} and
\eqref{app:model:weak}, $H_{u}$ has a unique ground state with density
matrix $P_{u}$ and gap at least $\Delta/4$ on its full Fock space.
For every $0<\epsilon\le1$, there is a pulse of the form \eqref{app:model:physical}
such that 
\begin{align}
\sup_{\rho}\|\rho_{\mathrm{out}}(\rho)-P_{u}\|_{1} & \le\epsilon,\label{app:model:accuracy}\\
\tau & \le\frac{C}{\Lambda}\log^{\kappa}\!\left(\frac{2N}{\epsilon}\right),\label{app:model:time}\\
\max_{t}\{A(t),|\dot{\phi}(t)|\} & \le C\Lambda\log^{2}\!\left(\frac{2N}{\epsilon}\right).\label{app:model:peak}
\end{align}
The constants $C$ and the finite exponent $\kappa$ can depend on
$D$, but are independent of $N$ and $\epsilon$. The same pulse
$H_{SE}(t)$ works for every Hamiltonian satisfying the common local,
energy, and gap bounds Eqs.~\eqref{app:model:freegap} and \eqref{app:model:weak}
without specific dependence on individual hopping or interaction coefficients. 
\end{prop}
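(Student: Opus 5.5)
The proof follows the three-stage structure of the main text. It is assembled from three leakage bounds, each stated uniformly over the input subspace, and the errors are then combined by the triangle inequality. The first claim is that $H_u$ has a unique ground state $P_u$ with gap at least $\Delta/4$. I would get this from the local stability estimates of Ref.~\cite{DRS} applied to the gapped quasi-free $H_0$. The single-particle gap $\Delta$ gives a many-body gap $2\Delta$ with a unique Gaussian ground state. The interaction $u\sum_X V_X$ is a sum of local terms with local norm $J_{\rm int}$. Under \eqref{app:model:weak}, after quasi-locally dressing the terms with respect to the Gaussian ground state, the relative-boundedness hypothesis of the stability theorem holds. That theorem then leaves a gap of at least $\Delta/4$ and no degeneracy splitting. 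The factor $(1+R\Lambda/\Delta)^{-D}$ in \eqref{app:model:weak} is exactly the cost of the Combes--Thomas decay length $\sim\Lambda/\Delta$ of the free correlations.

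\emph{Entrance.} I would pass to the bath-rotating frame \eqref{eq:rotating} and treat $H_u$ as a perturbation of the control Hamiltonian $H_{\rm ctrl}(\vartheta)$ in \eqref{eq:entrance-control}. The latter is a sum of decoupled three-mode blocks with local gap $r$. Its $2^N$-fold ground space is spanned by the Slater determinants built on the zero mode \eqref{eq:entrance-mode}. Following \cite{BMNS,bachmann2018adiabatic}, I would construct an $n$-th order dressed projector $\Pi_n(\vartheta)=e^{S_n}\Pi_0 e^{-S_n}$. Here $S_n$ is a sum of local generators, obtained by iteratively solving $[H_{\rm ctrl},S_k]=\text{(local terms of order }\Lambda/r)$, so that the defect $[H_{\rm rot},\Pi_n]$ and the adiabatic correction are $O(N(C n\Lambda/r)^{n})$ in norm. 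Because $f_\uparrow$ has vanishing derivatives at both endpoints, the boundary terms of the superadiabatic expansion vanish. Choosing $r\sim\Lambda n^2$ with $n\sim\log(2N/\epsilon)$ makes the leakage $\le\epsilon/3$ uniformly in the input. At $\vartheta=\pi/2$ the dressed manifold factorizes: the dark modes carry the input, and the system--bright part lies within $\epsilon/3$ of the ground state of $H_{\rm br}(r/\sqrt2)$. That this ground state is unique follows from the gap bound \eqref{eq:gaps}.

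\emph{Transport and exit.} During transport $\omega=0$, so the dark modes decouple exactly. What remains is a standard adiabatic theorem for $H_{\rm br}(A)$, with gap $\ge c_{\rm gap}\min\{A,A^2/\Lambda\}$. Proving \eqref{eq:gaps} is the same stability argument as for $H_u$, applied now to the system--bright quasi-free Hamiltonian. Its single-particle gap is of order $\min\{A,A^2/\Lambda\}$ because the soft mode is mostly bath, with system weight $O(A^2/\Lambda^2)$. That small weight compensates the small gap when the interaction's relative bound is computed. On each dyadic ramp I would use the local (quasi-local generator) adiabatic theorem with error $N\cdot(\text{poly}(n)\,\Lambda^{\nu}/(A_i^{\nu}\Lambda\tau_i))^{n}$. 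This gives $\tau_i\propto\Lambda^{-1}(\Lambda/A_i)^\nu$. There are $O(\log(\Lambda/A_c))=O(\log n)$ ramps, and their durations sum geometrically to $\mathrm{poly}(n)/\Lambda$. For the exit, I would build a dressed projector $\tilde P(A)=e^{T(A)}(P_u\otimes I_B)e^{-T(A)}$ with $T=O(A/\Delta)$ locally. It is obtained by solving $[H_u,T]=H_{SE}$ off-diagonally using only the system gap $\Delta/4$. The bath has no Hamiltonian, so no bath gap is needed. The leakage is bounded by $N(C n A_c/\Delta)^n$ plus the non-adiabatic term $\tau_{\rm exit}^{-1}\|\partial_A\tilde P\|$ summed into the same $n$-th order expansion. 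Matching the transport endpoint means showing that the ground state of $H_{\rm br}(A_c)$ lies in $\mathrm{ran}\,\tilde P(A_c)$ up to $\epsilon/3$. The choice \eqref{eq:matching-amplitude} makes $A_c$ small enough for the power $n^{6D}$ to absorb the Lieb--Robinson volume factors.

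I expect the main obstacle to be the gap bound \eqref{eq:gaps} in the interacting regime uniformly as $A\to0$, together with the matching step at $A_c$. The joint gap closes like $A^2$, so a naive relative-bound argument fails. The approach I would try first is a Bogoliubov rotation that separates the soft, bath-localized mode. The interaction would then be rewritten in the rotated modes, and one would check that each local term's coupling to the soft sector carries the factor $A/\Lambda$ that is needed, which Ref.~\cite{DRS} accepts as a weighted relative bound. Finally, with $n$ as in \eqref{eq:main_text_choose_of_n}, all three errors become $\le\epsilon/3$. The times sum to $\Lambda^{-1}\mathrm{poly}(n)$ and the peak control is $\max\{r,A_{\rm ent}\}\sim\Lambda n^2$, which gives \eqref{app:model:time}--\eqref{app:model:peak}. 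Discarding the bath from a state in $\mathrm{ran}(P_u\otimes I_B)$ up to error $\epsilon$ gives \eqref{app:model:accuracy}. The pulse never uses $h$ or $V_X$, which gives the stated universality.
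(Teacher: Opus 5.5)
Your three-stage architecture matches the paper's, and the dynamical ingredients are essentially what the paper does: dressing uniform over the $2^N$-dimensional input space, exact dark decoupling at $\omega=0$, amplitude-halving ramps with $\tau_i\propto(\Lambda/A_i)^\nu$, an exit dressing around $P_u\otimes I_B$ built on the system gap alone, and composition of operator-norm leakages. However, two load-bearing steps are missing.

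The first is the bound $\operatorname{gap}H_{\mathrm{br}}(A)\ge c_{\mathrm{gap}}\min\{A,A^2/\Lambda\}$ with an $A$-independent interaction threshold. You leave this as an obstacle and suggest a ``weighted relative bound'' after a Bogoliubov rotation. The De Roeck--Salmhofer result is not a relative-bound stability theorem. Moreover, any smallness condition that compares the interaction with the joint free gap gives a threshold that degenerates as $A\to0$, and the same happens if you compare with the full weighted free covariance, which is not uniformly bounded as $A\to0$.

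The paper instead uses the tree expansion of imaginary-time connected correlations with a determinant bound, in which interaction vertices sit only on system modes. Every interaction--interaction edge therefore carries the system--system covariance block, and this block is uniformly bounded. On contours around the two bands of $h$ one has $Z_{\mathrm{f,s}}/E_{\mathrm{f,s}}=1/\sqrt{w^2+8A^2}$. A Combes--Thomas resolvent bound then converts this modewise compensation into an absolute spatial sum, $K_{SS,\gamma_A}\le C\Lambda\Delta^{-2}(1+R\Lambda/\Delta)^D$. The $A$- and $N$-dependent full covariance enters only on edges touching the two observables, so it affects only the prefactor. Decay at rate $c_{\mathrm{gap}}g_A$ then yields uniqueness and the gap as $\beta\to\infty$.

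Your observation that soft legs carry $A/\Lambda$ is the right heuristic, but without this separation of interaction and observable vertices it is not a proof. Also, the free gap on the full Fock space is $\Delta$, not $2\Delta$. The full Fock space is the relevant one, because the coupling changes the system particle number.

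The second missing step is branch identification. Near-commutation of a dressed projector $Q$ with a gapped $H$ only forces the ground-state overlap $p$ to be near $0$ or near $1$, through $\gamma\sqrt{p(1-p)}\le\|[H,Q]\|$. Since $\|H_u\|$ is extensive, you must still exclude $p\approx0$. You do not do this for the rank-one system--bright projector at the end of the entrance. At the exit, you simply assert that the ground state of $H_{\mathrm{br}}(A_c)$ lies in $\operatorname{Ran}\tilde P(A_c)$.

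The paper excludes $p\approx0$ with an anchored continuity argument. At the entrance it runs along the static path $\widehat H_\mu$ starting from $\mu=0$. At the exit it starts from $a_{\mathrm{low}}=c_{\mathrm{low}}\delta_0/(NB_e)$, where a variational bound gives overlap above $3/4$. At the exit the true gap is only $c_{\mathrm{gap}}a^2$, so the static remainder $CN(B_ea)^{n+1}+CN^2\mathcal{P}_e(B_ea)e^{-cn}$ must stay small after division by $a^2$ on all of $[a_{\mathrm{low}},a_c]$. This is where the matching condition actually bites, and it is the source of the $N^3$ in the final error envelope.

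Finally, an operator-norm leakage $\delta$ yields trace-norm error up to $2\delta$ via Fuchs--van de Graaf. The stage leakages should therefore sum to $\epsilon/2$, not $\epsilon$.
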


The protocol in Proposition \eqref{app:model:main} uses exactly $2N$
bath modes initialized in the vacuum and discards the bath only once,
at the end. The pulse $H_{SE}(t)$ consists of three consecutive stages:
entrance, transport, and exit. Appendix \eqref{app:local:section}
develops a common local dressing theorem for controlling the errors
of all three stages. Using these results, Appendix \eqref{app:stages}
analyzes each stage and the matching between them. Finally, appendix
\eqref{app:resources:section} combines the stage estimates to prove
the global trace-norm accuracy and resource bounds. We also establishes
spectral-gap bounds for the system and system--bright Hamiltonians
in Appendix \eqref{app:stab:section}, which is based on fermionic
stability theorem.

\subsection{Rotating frame and the brick-dark sector decomposition}

\label{app:model:frame}

The coupling Hamiltonian $H_{SE}(t)$ becomes simpler in another set
of bath variables: bright mode ($b_{j})$ and dark mode $(d_{j})$,
whose meaning will be clear later:
\begin{equation}
b_{j}=\frac{e_{j,-}-e^{\dagger}_{j,+}}{\sqrt{2}},\qquad d_{j}=\frac{e_{j,-}+e^{\dagger}_{j,+}}{\sqrt{2}},\label{app:model:modes}
\end{equation}
which also obeys the canonical fermionic anticommutation relation:
$\{b_{j},b^{\dagger}_{k}\}=\{d_{j},d^{\dagger}_{k}\}=\delta_{jk}$,
and zero if we mix $b_{j},d_{k}$ in the anticommutators. 

Define the environmental bath number $N_{E}=\sum_{j,\sigma\in\{+,-\}}e^{\dagger}_{j,\sigma}e_{j,\sigma}$
and the rotation $R_{B}(t)=e^{i\phi(t)N_{E}}$ with $\phi(0)=0$.
In the rotating frame $|\widetilde{\psi}(t)\rangle=R_{B}(t)^{\dagger}|\psi(t)\rangle$,
the Hamiltonian can be rewritten as 
\begin{align}
H_{\mathrm{rot}}(t) & =R^{\dagger}_{B}[H_{u}+H_{SE}(t)]R_{B}-iR^{\dagger}_{B}\dot{R}_{B}\nonumber \\
 & =H_{\mathrm{br}}(A(t))+\omega(t)\sum_{j}(b^{\dagger}_{j}d_{j}+d^{\dagger}_{j}b_{j}),\label{app:model:Hrot}\\
H_{\mathrm{br}}(A) & =H_{u}-\sqrt{2}A\sum_{j}(c^{\dagger}_{j}b_{j}+b^{\dagger}_{j}c_{j}),\qquad\omega(t)=\dot{\phi}(t),\label{app:model:Hbr}
\end{align}
and an irrelevant additive term proportional to the identity has been
omitted. Here, we have used that $e^{\dagger}_{j,-}e_{j,-}+e^{\dagger}_{j,+}e_{j,+}=1+b^{\dagger}_{j}d_{j}+d^{\dagger}_{j}b_{j}.$

The system only couples to the environment via the bright mode $b_{j}$,
and thus the name ``bright'' is given. At nonzero phase rate $\omega(t)\neq0$,
bright and dark modes are coupled. The mode $d_{j}$ is entirely decoupled
from the system and $b_{j}$ when $\omega=0$, thus the name ``dark''
is given. In this case, $H_{\mathrm{rot}}=H_{\mathrm{br}}(A)\otimes I_{\mathrm{dark}}$.
Write $P_{\mathrm{br}}(A)$ for the rank-one ground projection of
$H_{\mathrm{br}}(A)$, whose existence for $A>0$ is proved in Theorem~\ref{app:stab:theorem}.
We will introduce
\begin{equation}
\Pi_{\mathrm{in}}=I_{S}\otimes|0\rangle_{B}\langle0|,\qquad\Pi_{\mathrm{br}}(A)=P_{\mathrm{br}}(A)\otimes I_{\mathrm{dark}},\qquad\Pi_{\mathrm{out}}=P_{u}\otimes I_{B}.\label{app:model:interfaces}
\end{equation}
These projectors specify, respectively, a vacuum bath with an arbitrary
system state, the system--bright ground state with an arbitrary state
of the dark modes, and the system ground state with an unrestricted
bath. The pulse $H_{SE}(t)$ is designed to implement the sequence

\begin{equation}
\Pi_{\mathrm{in}}\longrightarrow\Pi_{\mathrm{br}}(A)\longrightarrow\Pi_{\mathrm{out}},
\end{equation}
where each arrow means that the evolution maps states from the subspace
selected by one projector into that selected by the next, up to a
controlled error. Thus, the physical system is cooled to its ground
state, while the bath retains the information carried by the initial
system state.

%% file: appendix/appendix_B_stability.tex
\section{Uniform interacting spectral stability}

\label{app:stab:section}

The interaction in \eqref{app:model:Hu} acts only on system modes.
This allows one fixed weak-interaction condition to control the system--bright
Hamiltonian for every positive coupling amplitude, although its bath-like
excitations become soft. 

This section establishes the gap for $H_{u}$ and $H_{\text{br}}(A)$
as stated in Theorem \eqref{app:stab:theorem}. The proof technics
rely heavily on the stability result of fermions, with bath part separated
from the tree expansion. If one is happy to accept Theorem \eqref{app:stab:theorem},
this section can be skipped at first, as it will not influence the
understanding of the following sections.
\begin{thm}[Uniform spectral stability]
\label{app:stab:theorem} Assume the local model, free spectral gap,
and weak-interaction bound of Eqs.~\eqref{app:model:Hu}--\eqref{app:model:weak}.
There are fixed positive constants $c_{\mathrm{loc}}$ and $c_{\mathrm{gap}}$,
determined by the local assumptions, for which $H_{u}$ and $H_{\mathrm{br}}(A)$
have unique ground states for every $A>0$, and 
\begin{equation}
\operatorname{gap}(H_{u})\ge\frac{\Delta}{4},\qquad\operatorname{gap}(H_{\mathrm{br}}(A))\ge c_{\mathrm{gap}}\min\!\left\{ A,\frac{A^{2}}{\Lambda}\right\} .\label{app:stab:gaps}
\end{equation}
The same choice of $c_{\mathrm{loc}}$ works for all $A>0$ and all
finite system sizes. No fixed particle-number or parity sector is
imposed. 
\end{thm}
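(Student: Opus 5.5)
The plan is to treat $H_u$ and $H_{\mathrm{br}}(A)$ as weak, system-supported perturbations of quasi-free Hamiltonians. I would first settle the free spectral problem exactly. Then I would import the fermionic stability theorem of Ref.~\cite{DRS}, applied not with the raw many-body gap of $H_{\mathrm{br}}(A)$ but with the free propagator restricted to system modes, which is where the interaction lives.

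\emph{Free part.} For $H_0$, the single-particle gap \eqref{app:model:freegap} gives a unique Fock-space ground state: fill all negative modes of $h$, and the many-body gap is $\ge\Delta$. For $H_{\mathrm{br}}(A)$ at $u=0$, the one-body matrix on $c\oplus b$ is
\[
\begin{pmatrix}h&-\sqrt2A\\-\sqrt2A&0\end{pmatrix}.
\]
Each eigenvalue $\varepsilon$ of $h$ produces the two roots $\lambda_\pm=\tfrac12\bigl(\varepsilon\pm\sqrt{\varepsilon^2+8A^2}\bigr)$. Using $\Delta\le|\varepsilon|\le\Lambda$, both satisfy $|\lambda_\pm|\ge c\min\{A,A^2/\Lambda\}$, and none vanish. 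Hence the free $H_{\mathrm{br}}(A)$ has a unique ground state with the claimed gap.

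\emph{Interacting $H_u$.} Apply the stability theorem of Ref.~\cite{DRS} directly. The free covariance $(i\omega-h)^{-1}$ obeys $\|(i\omega-h)^{-1}\|\le(\omega^2+\Delta^2)^{-1/2}$. Together with a Combes--Thomas bound at range $\sim R\Lambda/\Delta$, this gives summable space--imaginary-time decay with weight $\sim\Delta^{-1}(1+R\Lambda/\Delta)^D$. Condition \eqref{app:model:weak} is then exactly the smallness needed for convergence of the tree expansion. I expect the resulting bound to show the interacting gap exceeds $\Delta-CJ_{\mathrm{int}}\Lambda/\Delta\ge\Delta/4$, with a unique ground state and no sector restriction.

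\emph{Interacting $H_{\mathrm{br}}(A)$.} Here the free gap degenerates as $A\to0$, so a naive application fails. Since the bath is Gaussian and the interaction acts only on system modes, I will integrate out the bright modes exactly. The system-block covariance becomes
\[
G_{SS}(i\omega)=\bigl(i\mu(\omega)-h\bigr)^{-1},\qquad \mu(\omega)=\omega+\frac{2A^2}{\omega},
\]
which again obeys $\|G_{SS}\|\le(\mu^2+\Delta^2)^{-1/2}\le\Delta^{-1}$, with the same spatial decay, uniformly in $A$. The tree expansion for the effective system theory therefore converges under the same condition \eqref{app:model:weak} with $A$-independent constants. It yields a local, small self-energy $\Sigma$, with $\|\Sigma\|\le CJ_{\mathrm{int}}\Lambda/\Delta$ in the appropriate norm.

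The dressed one-body problem then reads $\lambda=\tilde\varepsilon+2A^2/\lambda$, where the renormalized $\tilde\varepsilon$ stays in $[-2\Lambda,-\Delta/2]\cup[\Delta/2,2\Lambda]$. This reproduces $|\lambda|\ge c\,\min\{A,A^2/\Lambda\}$ for the soft, bath-localized quasiparticles. The point is that interaction corrections enter only through system weight, and that weight vanishes at the same rate as the soft energy.

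\emph{Main obstacle.} Two points look hardest. The first is that DRS-type stability requires an $L^1$ bound in imaginary time on $G_{SS}$; near $\omega\sim\sqrt2A$ one must check that the $A$-dependence does not spoil integrability, and I would handle this by contour deformation in $\omega$. The second, and harder, is converting control of the dressed propagator into a genuine many-body gap and ground-state uniqueness for $H_{\mathrm{br}}(A)$. For this I would use the gap statement of Ref.~\cite{DRS}: the exponential imaginary-time decay rate of truncated correlations of all local operators bounds the gap. The bath correlations are kept exactly Gaussian and separated from the tree expansion, so the decay rate is set by the dressed single-particle spectrum above. A uniform choice of $c_{\mathrm{loc}}$ and $c_{\mathrm{gap}}$ for all $A>0$ and all $N$ then follows.
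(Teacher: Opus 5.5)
Your free analysis is correct. Your observation that integrating out the bright modes leaves a uniformly bounded system block, $G_{SS}(i\omega)=(i\mu(\omega)-h)^{-1}$, is the frequency-space form of the compensation $Z_{\mathrm{s}}/E_{\mathrm{s}}=Z_{\mathrm{f}}/E_{\mathrm{f}}=(\varepsilon^2+8A^2)^{-1/2}\le\Delta^{-1}$ that drives the paper's proof. The gap is in getting from there to a many-body gap of size $c\min\{A,A^2/\Lambda\}$.

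\textbf{The self-energy step does not give a gap.} The step ``small local self-energy $\Sigma$, renormalized $\tilde\varepsilon$, dressed equation $\lambda=\tilde\varepsilon+2A^2/\lambda$'' is a quasiparticle heuristic. The interacting self-energy is frequency dependent. Even a rigorous statement about poles of the one-particle Green function would not bound the gap to excitations invisible in the two-point function (even, number-conserving, multi-particle ones). In the criterion of Ref.~\cite{DRS} that you invoke at the end, the rate is not read off from a dressed spectrum; it must be built into the expansion. You need the system--system covariance, weighted by $e^{\gamma d_\beta(t,s)}$ with $\gamma=\gamma_A=c_{\mathrm{gap}}\min\{A,A^2/\Lambda\}$, summed absolutely over the spatial label and integrated absolutely over imaginary time, to be bounded by an $A$-independent $K_{\mathrm{cov}}$. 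Then the tree expansion converges under an $A$-independent smallness condition and gives decay at rate $\gamma_A$.

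Your pointwise bound $\|G_{SS}(i\omega)\|\le\Delta^{-1}$ controls neither the absolute time integral nor the weight. In imaginary time the soft part behaves like $Z_{\mathrm{s}}e^{-E_{\mathrm{s}}t}$ with $E_{\mathrm{s}}\simeq2A^2/|\varepsilon|$ for $A\ll\Delta$. The weighted integral is finite only for $\gamma<E_{\mathrm{s}}$, and for $\gamma\le E_{\mathrm{s}}/2$ it is at most $2Z_{\mathrm{s}}/E_{\mathrm{s}}\le2/\Delta$. This is what forces $\gamma_A\propto\min\{A,A^2/\Lambda\}$. For your contour shift, the obstruction is the soft poles at $\omega=\pm iE_{\mathrm{s}}$, not the region $\omega\approx\sqrt2A$, where $|\mu|=2\sqrt2A$ is merely minimal. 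The same applies to $H_u$: put the rate $\Delta/4$ into the weight rather than expecting a perturbed gap formula $\Delta-CJ_{\mathrm{int}}\Lambda/\Delta$.

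\textbf{Bright-mode observables are not covered.} The gap criterion needs decay for all pairs $O_1,O_2$. For small $A$ the low excitations are mostly bright, so observables with bright (and odd, number-changing) fields are essential. As you set it up, integrating out the bright modes controls only system correlations. Edges touching such observables are not uniformly controlled: the full weighted norm is only $\lesssim N/(g_A-\gamma_A)$.

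The paper keeps the bright modes in the covariance and uses that interaction vertices carry only system fields. In the De Roeck--Salmhofer tree, every internal edge then costs the uniform $K_{\mathrm{cov}}$. The non-uniform full norm is paid only on the at most $k(O_1)+k(O_2)$ edges incident on the external vertices. The external weight $e^{\gamma_A d_\beta}$ is distributed along the path joining the two observables. One edge of that path is removed using the pointwise bound $\|C_{A,\beta}(t,s)\|\le e^{-g_A d_\beta(t,s)}$, leaving two trees each anchored at an external root. This keeps the interaction threshold, hence $c_{\mathrm{loc}}$, independent of $A$ and $N$. All $A$- and $N$-dependence moves into $C_{\mathrm{obs}}$, which the gap criterion tolerates because it needs only $\beta$- and $t$-independence.

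\textbf{Remaining items.} Uniqueness of the interacting ground state needs its own argument. As $\beta\to\infty$, a ground space of rank $m_0>1$ contains a rank-one projector whose connected self-correlation stays at $m_0^{-1}-m_0^{-2}>0$, contradicting decay. You also need to identify the Taylor sum of the tree expansion with the actual correlation at real $u$.
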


Here and below the gap is the difference between the two lowest distinct
eigenvalues, after uniqueness of the ground state has been established.
The proof does not require the interaction to be smaller than the
right-hand side of the bright-gap bound.

\subsection{Free soft modes and the covariance to be controlled}

Set $u=0$ temporarily.
$\boldsymbol{c}=(c_{1},\ldots,c_{N})^{\mathsf{T}}, \, \boldsymbol{b}=(b_{1},\ldots,b_{N})^{\mathsf{T}}.$
Then one can rewrite $H_\mathrm{br}(A)$ as
\begin{equation}
\begin{aligned}
H_{\mathrm{br}}(A)
&=\sum_{j,k}h_{jk}c_{j}^{\dagger}c_{k}
-\sqrt{2}A\sum_{j}\left(c_{j}^{\dagger}b_{j}+b_{j}^{\dagger}c_{j}\right)
=\begin{pmatrix}\boldsymbol{c}^{\dagger}&\boldsymbol{b}^{\dagger}\end{pmatrix}
\begin{pmatrix}h&-\sqrt{2}AI\\-\sqrt{2}AI&0\end{pmatrix}
\begin{pmatrix}\boldsymbol{c}\\\boldsymbol{b}\end{pmatrix}.
\end{aligned}
\end{equation}
On the one-particle space ordered as system
then bright, the matrix of $H_{\mathrm{br}}(A)$ is 
\begin{equation}
h_{A}=\begin{pmatrix}h & -\sqrt{2}AI\\
-\sqrt{2}AI & 0
\end{pmatrix}.\label{app:stab:free-matrix}
\end{equation}
For a hopping eigenvalue $\lambda$ of $h$, the corresponding
system--bright block of \eqref{app:stab:free-matrix} has one positive
and one negative eigenvalue. Define the positive fast energy
$E_{\mathrm{f}}(\lambda,A)$ and soft energy $E_{\mathrm{s}}(\lambda,A)$
as their larger and smaller absolute values, respectively:
\begin{equation}
E_{\mathrm{f}}(\lambda,A)\coloneqq   \frac{\sqrt{\lambda^{2}+8A^{2}}+|\lambda|}{2},\qquad
E_{\mathrm{s}}(\lambda,A)\coloneqq \frac{\sqrt{\lambda^{2}+8A^{2}}-|\lambda|}{2}.
\label{app:stab:energies}
\end{equation}
The signed eigenvalues are $E_{\mathrm{f}},-E_{\mathrm{s}}$ for
$\lambda>0$ and $E_{\mathrm{s}},-E_{\mathrm{f}}$ for $\lambda<0$.
As $A\to0$, the fast energy tends to $|\lambda|$ and the soft
energy tends to zero.

For a normalized eigenvector of this block with signed eigenvalue
$\varepsilon$, denote its system and bright amplitudes by
$\alpha_{\varepsilon}$ and $\beta_{\varepsilon}$. The eigenvalue
equation and normalization give
$-\sqrt{2}A\alpha_{\varepsilon}=\varepsilon\beta_{\varepsilon}, \, |\alpha_{\varepsilon}|^{2}+|\beta_{\varepsilon}|^{2}=1.$
Its system weight $Z(\varepsilon)$ is the squared system amplitude:
$Z(\varepsilon)=|\alpha_{\varepsilon}|^{2} =\frac{\varepsilon^{2}}{\varepsilon^{2}+2A^{2}}.$
Using $E_{\mathrm{f}}E_{\mathrm{s}}=2A^{2}$ and
$E_{\mathrm{f}}+E_{\mathrm{s}}=\sqrt{\lambda^{2}+8A^{2}}$, the fast
and soft system weights $Z_{\mathrm{f}}(\lambda,A)$ and
$Z_{\mathrm{s}}(\lambda,A)$ therefore become
\begin{equation}
Z_{\mathrm{f}}(\lambda,A)=\frac{E_{\mathrm{f}}(\lambda,A)}{\sqrt{\lambda^{2}+8A^{2}}},\qquad
Z_{\mathrm{s}}(\lambda,A)=\frac{E_{\mathrm{s}}(\lambda,A)}{\sqrt{\lambda^{2}+8A^{2}}}.
\label{app:stab:weights}
\end{equation}
These probability weights multiply the corresponding contributions
to the system covariance. They satisfy $Z_{\mathrm{f}}+Z_{\mathrm{s}}=1$;
as $A\to0$, the fast mode becomes system-like and the soft mode
becomes bright-bath-like. This is because $Z_{\mathrm{f}} \to 1 $ and $Z_{\mathrm{s}} \to 0$ as $A\to0$.
Here ``system-like'' means that the bright-bath component vanishes.
In particular,
\begin{equation}
\frac{Z_{\mathrm{s}}}{E_{\mathrm{s}}}=\frac{Z_{\mathrm{f}}}{E_{\mathrm{f}}}=\frac{1}{\sqrt{\lambda^{2}+8A^{2}}}\le\frac{1}{\Delta}.\label{app:stab:compensation}
\end{equation}
Thus the inverse soft energy is compensated by its small system weight.
For small $A/|\lambda|$, both $E_{\mathrm{s}}$ and $Z_{\mathrm{s}}$
vanish quadratically in $A$. This modewise observation must still
be converted to an absolute spatial sum; cancellation between different
hopping eigenvectors cannot be used for that purpose.

Define the free-gap scale 
\begin{equation}
g_{A}=\min\!\left\{ A,\frac{A^{2}}{\Lambda}\right\} .\label{app:stab:free-gap-scale}
\end{equation}
The spectrum of $h_{A}$ avoids $(-g_{A},g_{A})$. Indeed, $E_{\mathrm{s}}$
decreases with $|\lambda|$, and $|\lambda|\le\Lambda$; squaring
the positive quantities in \eqref{app:stab:energies} gives $E_{\mathrm{s}}\ge A^{2}/\Lambda$
for $A\le\Lambda$ and $E_{\mathrm{s}}\ge A$ for $A\ge\Lambda$.
The free many-body ground state fills all negative one-particle levels.
It is unique on the full Fock space, and an excitation in that space
costs at least $g_{A}$, whether it is a particle or a hole.

We use imaginary-time covariances to estimate the interacting expansion~\cite{DRS,chen2025cumulant}.
Let $\beta>0$ be inverse temperature and let $t_{\mathrm{im}},s_{\mathrm{im}}\in[0,\beta]$
be imaginary times. Write 
$d_{\beta}(t_{\mathrm{im}},s_{\mathrm{im}})=\min_{k\in\mathbb{Z}}|t_{\mathrm{im}}-s_{\mathrm{im}}+k\beta|$
for distance on the imaginary-time circle. For $0<t_{\mathrm{im}}<\beta$,
the free covariance matrix is~\cite[Eqs.~(19)--(20)]{DRS}
\begin{equation}
C_{A,\beta}(t_{\mathrm{im}},0)=-e^{-t_{\mathrm{im}}h_{A}}(I+e^{-\beta h_{A}})^{-1}.\label{app:stab:covariance-definition}
\end{equation}
Time translation and fermionic antiperiodicity define the other unequal-time
entries. At equal times we use the normal-order convention $C_{A,\beta}(t_{\mathrm{im}},t_{\mathrm{im}})=(I+e^{\beta h_{A}})^{-1}$,
the limit from negative time difference. Products of distinct operator
insertions retain their ordering in coincident-time limits; their
Grassmann symbols are not first combined into a single vertex.

Let $C_{A,\beta}(t_{\mathrm{im}},x;s_{\mathrm{im}},y)$ denote a matrix
entry, where the mode labels may be system or bright, and let $S$
be the set of system-mode labels. Following \cite[Eq.~(21)]{DRS},
for a positive decay rate $\gamma$ we define the weighted system row norm
$K_{SS,\gamma}$ as the larger of the two expressions
\begin{equation}
\begin{split}K_{SS,\gamma}=\max\Bigg\{ & \sup_{t_{\mathrm{im}},\,x\in S}\sum_{y\in S}\int^{\beta}_{0}e^{\gamma d_{\beta}(t_{\mathrm{im}},s_{\mathrm{im}})}|C_{A,\beta}(t_{\mathrm{im}},x;s_{\mathrm{im}},y)|\,ds_{\mathrm{im}},\\
 & \sup_{t_{\mathrm{im}},\,x\in S}\sum_{y\in S}\int^{\beta}_{0}e^{\gamma d_{\beta}(t_{\mathrm{im}},s_{\mathrm{im}})}|C_{A,\beta}(s_{\mathrm{im}},y;t_{\mathrm{im}},x)|\,ds_{\mathrm{im}}\Bigg\}.
\end{split}
\label{app:stab:row-norm}
\end{equation}
The reversed row norm is included because a tree edge can have either
orientation. The analogous norm with both labels allowed to range
over all system and bright modes is denoted by $K_{\mathrm{full},\gamma}$.

\begin{lem}[Uniform system covariance]
\label{app:stab:covariance} Choose a fixed sufficiently small $c_{\mathrm{gap}}>0$
and set $\gamma_{A}=c_{\mathrm{gap}}g_{A}$. There is a local constant
$C_{\mathrm{cov}}$ such that 
\begin{equation}
K_{SS,\gamma_{A}}\le K_{\mathrm{cov}},\qquad K_{\mathrm{cov}}=C_{\mathrm{cov}}\frac{\Lambda}{\Delta^{2}}\left(1+\frac{R\Lambda}{\Delta}\right)^{D},\label{app:stab:Kcov}
\end{equation}
uniformly in $N$, $A>0$, and $\beta>0$. With $h_{A}$ replaced
by $h$, the system-only covariance obeys the same bound at rate $\Delta/4$.
For fixed $N$ and $A>0$, the full norm also has a finite bound independent
of $\beta$: 
\begin{equation}
K_{\mathrm{full},\gamma_{A}}\le\frac{CN}{g_{A}-\gamma_{A}}.\label{app:stab:Kfull}
\end{equation}
Only \eqref{app:stab:Kcov}, not \eqref{app:stab:Kfull}, controls
the interaction threshold. 
\end{lem}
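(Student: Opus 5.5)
The bound is proved by diagonalizing $h$ and working mode by mode, so that the imaginary-time covariance becomes an explicit spectral integral. Write $h=\sum_\lambda \lambda\, \Pi_\lambda$. Then $h_A$ is block diagonal in the two-dimensional sectors spanned by $(\psi_\lambda,0)$ and $(0,\psi_\lambda)$. In each sector the eigenvalues are the signed energies $\pm E_{\mathrm f},\mp E_{\mathrm s}$ of \eqref{app:stab:energies}, with system weights $Z_{\mathrm f},Z_{\mathrm s}$. The system--system block of \eqref{app:stab:covariance-definition} is therefore a function of $h$ alone:
\[
C_{SS}(t,0)=\sum_{\sigma\in\{\mathrm f,\mathrm s\}} F_\sigma(h),\qquad F_\sigma(\lambda)=-Z_\sigma(\lambda,A)\,\frac{e^{-t\varepsilon_\sigma(\lambda)}}{1+e^{-\beta\varepsilon_\sigma(\lambda)}},
\]
where $\varepsilon_\sigma(\lambda)$ is the signed eigenvalue. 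The task reduces to bounding the row norm $\sup_x\sum_y\int_0^\beta e^{\gamma d_\beta}|F(h)_{xy}|\,ds$ for functions $F$ of a finite-range matrix. The cancellation warning after \eqref{app:stab:compensation} rules out spectral shortcuts, so I will not sum over eigenvectors with signs. Instead I will obtain spatial decay of $F(h)_{xy}$ by writing $F$ as a contour integral of the resolvent and using Combes--Thomas estimates for $(h-z)^{-1}$.

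Concretely, I express the time-integrated kernel through a Matsubara/Laplace representation. For $0<t<\beta$, the covariance with weight $e^{\gamma \min(t,\beta-t)}$ is integrated over $t$. Since $\gamma_A=c_{\mathrm{gap}}g_A$ is at most a small fraction of every $|\varepsilon_\sigma|$, and $|\varepsilon_\sigma|\ge g_A$ uniformly, the $t$-integral of each mode is bounded by $C Z_\sigma/|\varepsilon_\sigma|$ uniformly in $\beta$. Here the $\beta$-antiperiodic tail is handled by the factor $(1+e^{-\beta\varepsilon})^{-1}$, and occupied modes are treated symmetrically via the other branch. By \eqref{app:stab:compensation}, $Z_\sigma/|\varepsilon_\sigma|=(\lambda^2+8A^2)^{-1/2}\le\Delta^{-1}$. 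This is the key scalar bound, uniform in $A$.

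To convert it into a spatial row bound, I write $C_{SS}$ as $\oint_\Gamma \frac{dz}{2\pi i}(z-h)^{-1}\, g_t(z)$. The contour $\Gamma$ encircles $[\Delta,\Lambda]$ (respectively $[-\Lambda,-\Delta]$) at distance $\Delta/2$. The function $g_t(z)$ is the analytic continuation of the modewise expression in $\lambda$. It is analytic in $\lambda$ on a neighbourhood of $\sigma(h)$ of width $\sim\Delta/2$, because $\sqrt{\lambda^2+8A^2}$ does not vanish there and $|\lambda|$ is replaced by $\pm\lambda$ on each component. Combes--Thomas then gives $|(z-h)^{-1}_{xy}|\le (C/\Delta)\,e^{-\mu|x-y|}$ with $\mu\sim \Delta/(R\Lambda)$. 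After the $t$-integration, the analogue of $|g|$ is bounded by $C/\Delta$ on $\Gamma$ from the compensation identity, and the contour length is $O(\Lambda)$. Summing $e^{-\mu|x-y|}$ over $y$ gives $C_{\mathrm{vol}}(1+R\Lambda/\Delta)^D$. The total is $C\Lambda\Delta^{-2}(1+R\Lambda/\Delta)^D$, which is $K_{\mathrm{cov}}$. The reversed row norm follows because $h$ is Hermitian, so the same estimate applies to the transpose. Setting $A=0$ recovers the system-only claim: there $E_{\mathrm s}$ disappears, and the rate $\Delta/4\le|\lambda|/4$ leaves enough exponential margin.

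The main obstacle is performing the $t$-integral with the exponential weight \emph{before} the contour representation while keeping analyticity. For soft modes, $e^{-t E_{\mathrm s}}$ decays only at rate $g_A$, so the integrand's dependence on $\lambda$ is non-uniform as $A\to0$. I will first integrate in $t$ exactly, getting rational-type expressions $Z_\sigma/(|\varepsilon_\sigma|-\gamma)$ times bounded Fermi factors, and then verify analyticity and the bound $C/\Delta$ on the strip. For $\beta$-dependent Fermi factors I will choose the strip width smaller than $\pi/\beta$-scale poles, or bound them using $|\varepsilon|\ge g_A$ directly. For \eqref{app:stab:Kfull} I will use the crude bound $\sum_y|C_{xy}|\le \sqrt{2N}\|C\|$ and integrate $e^{-(g_A-\gamma_A)t}$, which gives $CN/(g_A-\gamma_A)$.
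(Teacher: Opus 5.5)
Your architecture is the paper's. You reduce the $SS$ block to a scalar function of $h$, write it as $\frac{1}{2\pi i}\oint_\Gamma F(z)(z-h)^{-1}\,dz$ on contours around the two bands, and combine Combes--Thomas decay at rate $\sim\Delta/(R\Lambda)$, the compensation identity $Z/E=(w^2+8A^2)^{-1/2}$, the contour length $C\Lambda$ and the ball count. Your bound on $K_{\mathrm{full},\gamma_A}$ via $\|C_{A,\beta}(t,s)\|\le e^{-g_A d_\beta(t,s)}$ is also fine.

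However, the step you single out as the main obstacle is resolved the wrong way. The norm $K_{SS,\gamma}$ has $|C_{A,\beta}(t,x;s,y)|$ \emph{inside} the time integral, so you cannot integrate $F_t$ in $t$ first. The operator $\int_0^\beta e^{\gamma d_\beta}F_t(h)\,dt$ has entries $\int e^{\gamma d_\beta}C_{xy}\,dt$, whose modulus is only a lower bound for $\int e^{\gamma d_\beta}|C_{xy}|\,dt$. Controlling its row sums therefore says nothing about $K_{SS,\gamma}$. Reverse the order. For each fixed $t$ write $C^{SS}_{A,\beta}(t,0)=\frac{1}{2\pi i}\oint_\Gamma F_t(z)(z-h)^{-1}\,dz$ and bound $|C_{xy}(t)|\le\frac{1}{2\pi}\oint_\Gamma|F_t(z)|\,|(z-h)^{-1}_{xy}|\,|dz|$. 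Then integrate in $t$ and exchange the integrals. Analyticity is needed only at fixed $t$, where it follows from $\operatorname{Re}E_{\mathrm{f,s}}>0$ on the enclosed region, so that $1+e^{-\beta E}\neq0$. Afterwards you only need the nonnegative scalar $\sup_{z\in\Gamma}\int_0^\beta e^{\gamma_A d_\beta(t,0)}|F_t(z)|\,dt$, which need not be analytic at all.

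The missing ingredient is the bound $C/\Delta$ on that scalar at \emph{complex} $w$ (with $w=z$ on $\Gamma_+$ and $w=-z$ on $\Gamma_-$). There your real-$\lambda$ estimate $CZ_\sigma/|\varepsilon_\sigma|$ does not apply as stated. You need two facts on $\Gamma$, uniformly in $A$.
(i) $\operatorname{Re}E_{\mathrm s}(w)\ge c\,g_A$ with $c$ independent of $A$, so that one fixed $c_{\mathrm{gap}}$ gives $\gamma_A\le\frac12\operatorname{Re}E_{\mathrm{f,s}}$. The paper gets $\operatorname{Re}E_{\mathrm s}\ge 4A^2/[5(\Lambda+A)]\ge 2g_A/5$ from $E_{\mathrm s}=2A^2/E_{\mathrm f}$.
(ii) A sector bound $|\operatorname{Im}E_{\mathrm{f,s}}|\le\frac12\operatorname{Re}E_{\mathrm{f,s}}$. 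It follows from $|\operatorname{Im}E_{\mathrm f}|/\operatorname{Re}E_{\mathrm f}=|\operatorname{Im}w|/\operatorname{Re}\sqrt{w^2+8A^2}\le|\operatorname{Im}w|/\operatorname{Re}w\le\frac12$ on rectangles with $\operatorname{Re}w\ge\Delta/2$ and $|\operatorname{Im}w|\le\Delta/4$. It passes to $E_{\mathrm s}$ through the reciprocal.

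The sector bound does double duty. First, it gives $|E|\le C\operatorname{Re}E$, hence $|Z|/(\operatorname{Re}E-\gamma_A)\le C/|\sqrt{w^2+8A^2}|\le C/\Delta$. Second, it gives $|1+e^{-\beta E}|\ge\frac12$ for every $\beta>0$. If $\beta\operatorname{Re}E\ge\log 2$, use the reverse triangle inequality; otherwise $|\beta\operatorname{Im}E|\le\frac12\log2$, so $\operatorname{Re}e^{-\beta E}\ge0$.

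Neither of your alternatives supplies this. The bound $|E|\ge g_A$ does not control $|1+e^{-\beta E}|^{-1}$ for complex $E$; for instance $E=i\pi/\beta$ makes the denominator vanish. Shrinking the strip to width $\sim\pi/\beta$ would drive the contour onto $\sigma(h)$ as $\beta\to\infty$. That degrades both the resolvent bound and the Combes--Thomas rate, so uniformity in $\beta$ is lost. In fact the Fermi factor has no poles where $\operatorname{Re}E>0$; what is needed is a $\beta$-uniform lower bound on its denominator, which the sector condition provides.
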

In the following subsection, we provide a proof of Lemma~\ref{app:stab:covariance}. 

\subsection{Proof of Lemma~\ref{app:stab:covariance}}
\begin{proof}
We first obtain a scalar time-integral bound on contours surrounding
the two bands of $h$, and then perform the spatial sum using its
finite hopping range. Let $\Gamma_{+}$ be the counterclockwise boundary
of the rectangle 
\[
\frac{\Delta}{2}\le\operatorname{Re}z\le\Lambda+\frac{\Delta}{2},\qquad|\operatorname{Im}z|\le\frac{\Delta}{4},
\]
and let $\Gamma_{-}$ be its reflected rectangle around the negative
band, also oriented counterclockwise. Their total length is at most
$C\Lambda$, and their distance from $\sigma(h)$ is at least $\Delta/4$.

For $z\in\Gamma_{+}$ take $w=z$, and for $z\in\Gamma_{-}$ take
$w=-z$. Thus $\operatorname{Re}w\ge\Delta/2$, $|\operatorname{Im}w|\le\operatorname{Re}w/2$,
and $|w|\le2\Lambda$. Continue the scalar energies analytically by
\begin{equation}
E_{\mathrm{f}}(w,A)=\frac{\sqrt{w^{2}+8A^{2}}+w}{2},\qquad E_{\mathrm{s}}(w,A)=\frac{2A^{2}}{E_{\mathrm{f}}(w,A)},\qquad Z_{\mathrm{f},s}(w,A)=\frac{E_{\mathrm{f},s}(w,A)}{\sqrt{w^{2}+8A^{2}}},\label{app:stab:analytic-energies}
\end{equation}
where the square root has positive real part. It is analytic on the
rectangles: if $w=x+iy$, the real part of its radicand is $x^{2}-y^{2}+8A^{2}>0$.

The needed bounds can be checked without estimating a nearly cancelling
difference of roots. Write $\sqrt{w^{2}+8A^{2}}=v_{1}+iv_{2}$, with
$v_{1}>0$. Equating real and imaginary parts gives 
\[
v_{2}=\frac{xy}{v_{1}},\qquad(v^{2}_{1}-x^{2})(v^{2}_{1}+y^{2})=8A^{2}v^{2}_{1}.
\]
Consequently $v_{1}\ge x$, $|v_{2}|\le|y|$, and $v^{2}_{1}\ge x^{2}+(32/5)A^{2}$.
Moreover $v^{2}_{1}\le x^{2}+8A^{2}$. These inequalities imply 
\begin{align}
\operatorname{Re}E_{\mathrm{f}} & \ge A, & \operatorname{Re}E_{\mathrm{f}} & \le2(\Lambda+A),\nonumber \\
\operatorname{Re}E_{\mathrm{s}} & \ge\frac{4A^{2}}{5(\Lambda+A)}\ge\frac{2g_{A}}{5}, & |\operatorname{Im}E_{\mathrm{f},s}| & \le\frac{1}{2}\operatorname{Re}E_{\mathrm{f},s},\label{app:stab:sector}\\
|\sqrt{w^{2}+8A^{2}}| & \ge c(\Delta+A).\nonumber 
\end{align}
For the sector bound, $|\operatorname{Im}E_{\mathrm{f}}|/\operatorname{Re}E_{\mathrm{f}}=|y|/v_{1}\le1/2$,
and taking the reciprocal preserves this ratio for $E_{\mathrm{s}}$.
Also $\operatorname{Re}E_{\mathrm{s}}=2A^{2}\operatorname{Re}E_{\mathrm{f}}/|E_{\mathrm{f}}|^{2}\ge(8/5)A^{2}/\operatorname{Re}E_{\mathrm{f}}$,
which proves its displayed lower bound. Fix $c_{\mathrm{gap}}$ small
enough that $\gamma_{A}\le\frac{1}{2}\operatorname{Re}E_{\mathrm{f},s}$
throughout both contours. This choice is independent of $A$.

For any energy $E$ in the sector of \eqref{app:stab:sector}, 
\begin{equation}
|1+e^{-\beta E}|\ge\frac{1}{2}\qquad(\beta>0).\label{app:stab:thermal-denominator}
\end{equation}
If $\beta\operatorname{Re}E\ge\log2$, this follows by the reverse
triangle inequality. Otherwise $|\beta\operatorname{Im}E|\le(\log2)/2$,
so $e^{-\beta E}$ has nonnegative real part.

On the positive contour, the scalar function producing the system
block of \eqref{app:stab:covariance-definition} is 
\begin{equation}
F_{A,\beta,t_{\mathrm{im}}}(z)=-\frac{Z_{\mathrm{f}}e^{-t_{\mathrm{im}}E_{\mathrm{f}}}}{1+e^{-\beta E_{\mathrm{f}}}}-\frac{Z_{\mathrm{s}}e^{-(\beta-t_{\mathrm{im}})E_{\mathrm{s}}}}{1+e^{-\beta E_{\mathrm{s}}}}.\label{app:stab:scalar-covariance}
\end{equation}
On the negative contour, the fast and soft terms have the opposite
time directions, with the same weights. The circle distance is at
most each of $t_{\mathrm{im}}$ and $\beta-t_{\mathrm{im}}$. Equations~\eqref{app:stab:sector}
and \eqref{app:stab:thermal-denominator} therefore give, for either
term with its corresponding $Z$ and $E$, 
\[
\int^{\beta}_{0}e^{\gamma_{A}d_{\beta}(t_{\mathrm{im}},0)}\left|\frac{Ze^{-t_{\mathrm{im}}E}}{1+e^{-\beta E}}\right|dt_{\mathrm{im}}\le\frac{2|Z|}{\operatorname{Re}E-\gamma_{A}}\le\frac{C|Z|}{|E|}.
\]
The reversed-time term satisfies the same estimate after replacing
$t_{\mathrm{im}}$ by $\beta-t_{\mathrm{im}}$. The identity $Z/E=1/\sqrt{w^{2}+8A^{2}}$
now yields 
\begin{equation}
\sup_{z\in\Gamma_{+}\cup\Gamma_{-}}\int^{\beta}_{0}e^{\gamma_{A}d_{\beta}(t_{\mathrm{im}},0)}|F_{A,\beta,t_{\mathrm{im}}}(z)|\,dt_{\mathrm{im}}\le\frac{C}{\Delta}.\label{app:stab:scalar-integral}
\end{equation}
This is the analytic version of the compensation in \eqref{app:stab:compensation}.
For the system-only covariance the same argument uses energy $w$
on each contour. Its real part is at least $\Delta/2$, so the rate
$\Delta/4$ is allowed and gives the same bound.

It remains to control the spatial resolvent. Define 
\[
\mu_{\mathrm{sp}}=\frac{1}{R}\log\!\left(1+\frac{\Delta}{8\Lambda}\right).
\]
For a fixed mode $y$, let $T_{y}$ be the diagonal matrix with entries
$(T_{y})_{xx}=e^{\mu_{\mathrm{sp}}\operatorname{dist}(x,y)}$. The
hopping range and the row-sum bound, together with Hermiticity for
the column sum, imply by the Schur bound 
\[
\|T_{y}hT^{-1}_{y}-h\|\le\Lambda(e^{\mu_{\mathrm{sp}}R}-1)=\frac{\Delta}{8}.
\]
The unweighted resolvent on either contour has norm at most $4/\Delta$.
Its Neumann series after this perturbation gives 
\begin{equation}
|(z-h)^{-1}_{xy}|\le\frac{8}{\Delta}e^{-\mu_{\mathrm{sp}}\operatorname{dist}(x,y)}.\label{app:stab:resolvent}
\end{equation}
The reversed estimate follows by the adjoint and the reflected contour.
The bound on lattice balls in Appendix~\ref{app:model:setting} gives
\begin{align*}
\sum_{y}e^{-\mu_{\mathrm{sp}}\operatorname{dist}(x,y)} & =\mu_{\mathrm{sp}}\int^{\infty}_{0}e^{-\mu_{\mathrm{sp}}\ell}|\{y:\operatorname{dist}(x,y)\le\ell\}|\,d\ell\\
 & \le C(1+\mu^{-1}_{\mathrm{sp}})^{D}\le C\left(1+\frac{R\Lambda}{\Delta}\right)^{D}.
\end{align*}
The integral representation uses the ball count directly and hence
introduces no additional power of the spatial dimension factor.

Functional calculus gives the system block as 
\[
C^{SS}_{A,\beta}(t_{\mathrm{im}},0)=\frac{1}{2\pi i}\int_{\Gamma_{+}\cup\Gamma_{-}}F_{A,\beta,t_{\mathrm{im}}}(z)(z-h)^{-1}\,dz.
\]
Taking absolute values, integrating over imaginary time, and summing
over the system endpoint combines \eqref{app:stab:scalar-integral},
\eqref{app:stab:resolvent}, the contour length, and the ball estimate.
The resulting factors are respectively $C/\Delta$, $C/\Delta$, $C\Lambda$,
and $C(1+R\Lambda/\Delta)^{D}$, proving \eqref{app:stab:Kcov}. Time
translation, antiperiodicity, and the adjoint give both orientations
in \eqref{app:stab:row-norm}.

Finally, spectral calculus for the full free matrix gives 
\begin{equation}
\|C_{A,\beta}(t_{\mathrm{im}},s_{\mathrm{im}})\|\le e^{-g_{A}d_{\beta}(t_{\mathrm{im}},s_{\mathrm{im}})}.\label{app:stab:pointwise}
\end{equation}
For example, at positive time difference a positive eigenvalue contributes
a factor at most $e^{-g_{A}(t_{\mathrm{im}}-s_{\mathrm{im}})}$, whereas
a negative one contributes at most $e^{-g_{A}(\beta-t_{\mathrm{im}}+s_{\mathrm{im}})}$.
At equal times the covariance is a positive contraction. Summing the
entrywise bound over $2N$ possible labels and integrating the circle
distance proves \eqref{app:stab:Kfull}. The same reasoning gives
the system-only pointwise estimate with $g_{A}$ replaced by $\Delta$. 
\end{proof}

\subsection{Connected correlations from system-only interactions}

We use the tree representation of connected Taylor coefficients and
the determinant bound of De Roeck and Salmhofer \cite[Theorem 6 and Corollary 9]{DRS}.
Their algebraic result applies to finite Fock space, a self-adjoint
free one-particle matrix, and an even interaction expressed in normal-ordered
monomials. At interaction order $p$, it represents the coefficient
by directed trees on $p$ interaction vertices and two observable
vertices, with covariance factors on tree edges and an interpolated
covariance determinant for the remaining contractions. The interpolation
matrix is positive semidefinite with unit diagonal. For normalized
mode vectors, the determinant is bounded by $2^{\bar{\nu}+\nu}$,
where $\bar{\nu}$ and $\nu$ count its remaining creation and annihilation
fields; unequal counts contribute zero. These hypotheses hold for
the self-adjoint matrix $h_{A}$ and the even system interaction of
Appendix~\ref{app:model:setting}.

The application below keeps separate covariance norms at interaction
vertices and at observable vertices. This separation is needed because
a direct estimate using the full weighted covariance norm would not
remain uniform as $A$ tends to zero.

Discard scalar interaction terms, which only shift the energy, and
write 
\[
\sum_{X}V_{X}=\sum_{M}v_{M}M,\qquad v_{\mathrm{loc}}=\sup_{x\in S}\sum_{M:\,x\in\operatorname{supp}M}|v_{M}|.
\]
Here $M$ ranges over normal-ordered system CAR monomials. Let $m_{\mathrm{int}}\ge2$
be a fixed upper bound on their degrees. Bounded local support implies
\begin{equation}
|u|v_{\mathrm{loc}}\le C_{\mathrm{mon}}J_{\mathrm{int}}.\label{app:stab:monomial-norm}
\end{equation}
If a support has at most $s_{V}$ modes, expand its operator in at
most $4^{s_{V}}$ matrix units. Express each matrix unit through creation
fields, the vacuum projection $\prod_{j\in X}(1-c^{\dagger}_{j}c_{j})$,
and annihilation fields. Expanding this projection produces at most
$2^{s_{V}}$ monomials, while every matrix entry is bounded by $\|V_{X}\|$.
The sum of coefficient magnitudes is therefore at most $8^{s_{V}}\|V_{X}\|$.
Summing over supports containing a fixed system mode proves \eqref{app:stab:monomial-norm},
with a constant fixed by the local structure.

For any operator polynomial $O$, let $k(O)$ be its largest monomial
degree and $L(O)$ its sum of absolute monomial coefficients, and
define 
\[
L_{\mathrm{obs}}(O)=2^{k(O)}\max\{1,k(O)\}^{k(O)}L(O).
\]
The degree-zero factor is defined to be one. For fixed $N$ and $A$,
choose the $\beta$-independent bound $K_{\mathrm{full}}=CN/(g_{A}-\gamma_{A})$
in \eqref{app:stab:Kfull}, and set $\chi_{\mathrm{ext}}=\max\{1,K_{\mathrm{full}}/K_{\mathrm{cov}}\}$.
This factor will affect only observables, not the permitted interaction
strength. We write $Z_{\beta}(u)=\operatorname{Tr}e^{-\beta H_{\mathrm{br}}(A)}$
for the partition function and $\langle O\rangle_{\beta,u}=Z_{\beta}(u)^{-1}\operatorname{Tr}(e^{-\beta H_{\mathrm{br}}(A)}O)$
for a Gibbs expectation.
\begin{lem}[Uniform convergence from system-only interaction vertices]
\label{app:stab:decay} There is a constant $C_{\mathrm{deg}}$,
depending only on $m_{\mathrm{int}}$, such that if 
\begin{equation}
C_{\mathrm{deg}}K_{\mathrm{cov}}|u|v_{\mathrm{loc}}\le\frac{1}{2},\label{app:stab:tree-smallness}
\end{equation}
then the actual Gibbs connected correlation of any finite polynomial
observables $O_{1},O_{2}$ satisfies 
\begin{align}
F_{\beta,u}(t_{\mathrm{im}}) & :=\frac{\operatorname{Tr}\!\left[e^{-(\beta-t_{\mathrm{im}})H_{\mathrm{br}}(A)}O_{1}e^{-t_{\mathrm{im}}H_{\mathrm{br}}(A)}O_{2}\right]}{Z_{\beta}(u)}-\langle O_{1}\rangle_{\beta,u}\langle O_{2}\rangle_{\beta,u},\label{app:stab:connected}\\
|F_{\beta,u}(t_{\mathrm{im}})| & \le C_{\mathrm{obs}}e^{-\gamma_{A}d_{\beta}(t_{\mathrm{im}},0)}.\label{app:stab:correlation-decay}
\end{align}
The constant $C_{\mathrm{obs}}$ may depend on $N,A,u,O_{1},O_{2}$,
but is independent of $\beta$ and $t_{\mathrm{im}}$. Odd and number-changing
observables are included. The same conclusion holds for $H_{u}$ alone,
with decay rate $\Delta/4$ and the same smallness condition. 
\end{lem}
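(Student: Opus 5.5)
The plan is to apply the tree expansion of De Roeck and Salmhofer \cite[Theorem 6 and Corollary 9]{DRS} to the connected correlation $F_{\beta,u}(t_{\mathrm{im}})$, treating the system-only monomials $u\,v_M M$ as interaction vertices and $O_1$ at time $t_{\mathrm{im}}$ and $O_2$ at time $0$ as two observable vertices. Expanding in $u$, the order-$p$ Taylor coefficient of the connected correlation becomes a sum over directed trees on $p+2$ vertices, where each tree edge carries a covariance entry $C_{A,\beta}$ and the remaining contractions form a Gram-type determinant bounded by $2^{\bar\nu+\nu}$. The key structural point is that every interaction vertex carries only system fields, so every tree edge incident to two interaction vertices is an $SS$ entry. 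Such an edge is controlled by $K_{SS,\gamma_A}\le K_{\mathrm{cov}}$ from Lemma~\ref{app:stab:covariance}, uniformly in $A$ and $\beta$. Only the edges touching an observable vertex (whose fields may be bright, odd, or number-changing) need the full norm $K_{\mathrm{full},\gamma_A}$, and there are at most $k(O_1)+k(O_2)$ of these. They are absorbed into $\chi_{\mathrm{ext}}$ and $L_{\mathrm{obs}}(O_i)$, which depend on $N$ and $A$ but not on $\beta$.

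The steps, in order, are as follows.
\begin{enumerate}
\item Integrate the tree from the leaves toward the root, following the standard DRS tree bound. Each interaction vertex contributes $|u|v_{\mathrm{loc}}$ from summing over monomials containing a given site, together with combinatorial factors $C^{m_{\mathrm{int}}}$ for choosing which field carries the edge. Each $SS$ edge contributes $K_{\mathrm{cov}}$ after the site sum and the imaginary-time integral.
\item Put the exponential weight $e^{\gamma_A d_\beta}$ on the edges, and use the triangle inequality on the imaginary-time circle along the tree path from $O_1$ to $O_2$. This pulls out the factor $e^{-\gamma_A d_\beta(t_{\mathrm{im}},0)}$, since the weighted row norms dominate the unweighted ones.
\item Count trees with Cayley's formula: the $1/p!$ cancels against the $p!$ from the time integrals and vertex labels. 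This leaves a geometric series $\sum_p (C_{\mathrm{deg}}K_{\mathrm{cov}}|u|v_{\mathrm{loc}})^p$, which converges by \eqref{app:stab:tree-smallness}.
\item Resum the series to get the actual Gibbs correlation. Since $N$ is finite, $F_{\beta,u}$ is entire in $u$ at fixed $\beta$, so the convergent series equals it on the disc given by the smallness condition.
\end{enumerate}
For $H_u$ alone the same argument applies verbatim, using the system-only covariance at rate $\Delta/4$, which Lemma~\ref{app:stab:covariance} bounds by the same $K_{\mathrm{cov}}$.

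The main obstacle I expect is the bookkeeping at the observable vertices and at coincident times. The bright components of $C_{A,\beta}$ do not have $A$-uniform row norms, because they involve $1/E_{\mathrm s}$ without the compensating factor $Z_{\mathrm s}$. So one must check that DRS's tree and determinant bounds can be applied with a \emph{mixed} norm: a tree edge between two interaction vertices uses $K_{\mathrm{cov}}$, while edges at $O_1,O_2$ use $K_{\mathrm{full}}\le\chi_{\mathrm{ext}}K_{\mathrm{cov}}$. Observables have bounded degree, so the factor $\chi_{\mathrm{ext}}^{k(O_1)+k(O_2)}$ enters only $C_{\mathrm{obs}}$. I would handle this by rooting the tree at $O_2$ and integrating outward, so that each observable-incident edge is charged exactly once. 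For the determinant, I would check that the normalized mode vectors in the Gram representation still satisfy the $2^{\bar\nu+\nu}$ bound for mixed system and bright fields. This holds because the interpolated covariance in the full one-particle space is a contraction, by \eqref{app:stab:pointwise}. Odd observables are handled by the same expansion: parity mismatches give vanishing determinants, and the ordering convention at equal times is the one fixed above.
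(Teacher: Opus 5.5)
Your route is the paper's: De Roeck--Salmhofer trees whose interaction vertices carry only system fields, $K_{\mathrm{cov}}$ on edges between interaction vertices, and $K_{\mathrm{full}}$ only on the at most $k(O_1)+k(O_2)$ edges at the observables. The paper then has $\chi_{\mathrm{ext}}$ enter $C_{\mathrm{obs}}$ but not \eqref{app:stab:tree-smallness}, uses the triangle inequality for $d_\beta$ along the $O_1$--$O_2$ path, and sums a geometric series. Step~4, however, rests on a false claim: $F_{\beta,u}$ is not entire in $u$. It equals $\mathcal{N}_\beta(\zeta)/Z_\beta(\zeta)^2$ with an entire numerator $\mathcal{N}_\beta$, and $Z_\beta(\zeta)$ can vanish at complex $\zeta$. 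For example, $H=\zeta(2c^{\dagger}c-1)$ gives $Z_\beta=2\cosh(\beta\zeta)$. So $F$ is only meromorphic, and you must identify the series as the paper does. The coefficient bound makes the Taylor sum $\mathcal{S}_\beta$ analytic on $\lvert\zeta\rvert<(C_{\mathrm{deg}}K_{\mathrm{cov}}v_{\mathrm{loc}})^{-1}$. The identity $Z_\beta^2\mathcal{S}_\beta=\mathcal{N}_\beta$ holds near $0$, hence on the whole disc by the identity theorem. Since $Z_\beta(u)>0$ for real $u$, $\mathcal{S}_\beta(u)=F_{\beta,u}$, with no assumption about complex zeros of $Z_\beta$.

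Your edge bookkeeping also does not close as described. A tree on $p+2$ vertices has $p+1$ edges, but only the $p$ interaction vertices have coordinates to sum and integrate; neither external time is integrated. So exactly one edge must be bounded entrywise, by $e^{\gamma_A d_\beta}\lvert C_{A,\beta}\rvert\le1$ from \eqref{app:stab:pointwise} and $\gamma_A<g_A$. That edge must lie on the $O_1$--$O_2$ path: removing any other edge leaves a component with no external root, whose free position sum and time integral cost factors of $N$ and $\beta$. The paper removes such a path edge, leaving two trees rooted at $O_1$ and $O_2$ with $p$ edges in total. In your rooted-at-$O_2$ version the natural choice is the parent edge of $O_1$, which your plan instead charges with $K_{\mathrm{full}}$; no row norm can pay for it.

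Relatedly, there is no $p!$ from the time integrals, which are unordered and absorbed into the row norms. The factorial is controlled by $(p+2)^p/p!\le e^{p+2}$ for Cayley's count, together with $2^{p+1}$ orientations, $m_{\mathrm{int}}^{2p}$ field assignments, and the unexpanded determinant bound $2^{m_{\mathrm{int}}p+k_1+k_2}$. This is why $C_{\mathrm{deg}}$ depends only on $m_{\mathrm{int}}$. Finally, ``parity mismatch'' only disposes of mixed-parity pairs; two odd observables need the auxiliary odd Grassmann sources the paper attaches.
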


\begin{proof}
Let $F_{\beta,p}$ be the coefficient of $u^{p}$, including $1/p!$,
in the connected expansion at $u=0$. Write $k_{i}=k(O_{i})$. We
first show 
\begin{equation}
e^{\gamma_{A}d_{\beta}(t_{\mathrm{im}},0)}|F_{\beta,p}(t_{\mathrm{im}})|\le CL_{\mathrm{obs}}(O_{1})L_{\mathrm{obs}}(O_{2})\chi^{k_{1}+k_{2}}_{\mathrm{ext}}(C_{\mathrm{deg}}K_{\mathrm{cov}}v_{\mathrm{loc}})^{p}.\label{app:stab:coefficient-bound}
\end{equation}
Each tree has $p+1$ edges. Distribute the external factor $e^{\gamma_{A}d_{\beta}(t_{\mathrm{im}},0)}$
along the unique path joining the two observables, using the triangle
inequality for $d_{\beta}$. All remaining edges can also be given
the same exponential weight, which is at least one. Select and remove
one edge on the external path. Its weighted absolute covariance is
at most one by \eqref{app:stab:pointwise}, since $\gamma_{A}<g_{A}$.
The remaining graph consists of two trees, each rooted at one external
observable, with exactly $p$ edges in total.

Sum the interaction coordinates and integrate their times from the
leaves towards these roots. Once the field occurrence attached to
a parent edge is fixed, summing the monomial coefficients at that
vertex costs at most $v_{\mathrm{loc}}$. If its parent is another
interaction vertex, both endpoints are system fields and the weighted
edge integral costs at most $K_{\mathrm{cov}}$. If the parent is
an external root, use $K_{\mathrm{full}}$. The bound after eliminating
a leaf is independent of its parent's coordinates, so this operation
iterates. No volume factor remains unanchored, even when the removed
edge joined two interaction vertices: each component still has one
external root.

There are at most $k_{1}+k_{2}$ edges incident on the external vertices,
independently of $p$. If $p_{\mathrm{ext}}$ surviving parent edges
touch those vertices, the integrations are thus bounded by 
\[
v^{p}_{\mathrm{loc}}K^{p-p_{\mathrm{ext}}}_{\mathrm{cov}}K^{p_{\mathrm{ext}}}_{\mathrm{full}}L(O_{1})L(O_{2})\le(K_{\mathrm{cov}}v_{\mathrm{loc}})^{p}\chi^{k_{1}+k_{2}}_{\mathrm{ext}}L(O_{1})L(O_{2}).
\]
Neither external time is integrated. This is the step that replaces
an amplitude-dependent convergence radius by an amplitude-dependent
external prefactor.

The remaining count is exponential in the interaction order. There
are $(p+2)^{p}$ labeled trees and at most $2^{p+1}$ edge orientations.
Since the two external vertices have combined degree at least two,
at most $2p$ tree-edge attachments belong to interaction vertices;
their field assignments cost at most $m^{2p}_{\mathrm{int}}$. An
external vertex of degree at most $k_{i}$ costs at most $\max\{1,k_{i}\}^{k_{i}}$
assignments. The unexpanded residual determinant is bounded by $2^{m_{\mathrm{int}}p+k_{1}+k_{2}}$.
Finally, 
\[
\frac{(p+2)^{p}}{p!}\le e^{p+2}\qquad(p\ge0).
\]
For $p\ge1$ this follows from $p!\ge(p/e)^{p}$ and $(1+2/p)^{p}\le e^{2}$;
the case $p=0$ is immediate. Combining these factors proves \eqref{app:stab:coefficient-bound},
for example with $C_{\mathrm{deg}}$ enlarged to dominate $e\,2^{m_{\mathrm{int}}+1}m^{2}_{\mathrm{int}}$.
The determinant must remain unexpanded: a separate count of every
Wick pairing would introduce a spurious factorial.

Scalar components of either observable cancel from a connected correlation.
Odd insertions can be handled by attaching an auxiliary odd Grassmann
source to each odd observable and extracting the two-source coefficient
in a fixed order. These sources are uncontracted; the numerical covariances,
tree count, and determinant bound are unchanged, apart from a fixed
sign. Mixed-parity correlations vanish because the Gibbs state is
even. No number-conservation assumption on the observables enters
this argument.

Under \eqref{app:stab:tree-smallness}, summing \eqref{app:stab:coefficient-bound}
is a geometric series and gives \eqref{app:stab:correlation-decay}
for its analytic Taylor sum. We still have to identify that sum with
the physical correlation at the allowed real coupling. At fixed finite
$N$ and $\beta$, replace $u$ by a complex variable $\zeta$ in
$H_{\mathrm{br}}(A)$ and write the resulting matrix as $H_{\mathrm{br}}(A;\zeta)$.
Its partition function $Z_{\beta}(\zeta)$ and all unnormalized traces
are entire. The connected correlation is a ratio with denominator
$Z_{\beta}(\zeta)^{2}$ and entire numerator 
\begin{align*}
\mathcal{N}_{\beta}(\zeta)={} & Z_{\beta}(\zeta)\operatorname{Tr}\!\left[e^{-(\beta-t_{\mathrm{im}})H_{\mathrm{br}}(A;\zeta)}O_{1}e^{-t_{\mathrm{im}}H_{\mathrm{br}}(A;\zeta)}O_{2}\right]\\
 & -\operatorname{Tr}\!\left[e^{-\beta H_{\mathrm{br}}(A;\zeta)}O_{1}\right]\operatorname{Tr}\!\left[e^{-\beta H_{\mathrm{br}}(A;\zeta)}O_{2}\right].
\end{align*}
The tree formula identifies its Taylor series in a neighborhood of
zero. By \eqref{app:stab:coefficient-bound}, the Taylor sum $\mathcal{S}_{\beta}(\zeta)$
is analytic for $|\zeta|<(C_{\mathrm{deg}}K_{\mathrm{cov}}v_{\mathrm{loc}})^{-1}$.
Near zero, $Z_{\beta}(\zeta)^{2}\mathcal{S}_{\beta}(\zeta)=\mathcal{N}_{\beta}(\zeta)$;
the identity theorem extends this equality throughout that disk. For
real $u$ in \eqref{app:stab:tree-smallness}, Hermiticity gives $Z_{\beta}(u)>0$,
so the Taylor sum equals the actual Gibbs correlation. This argument
requires no assumption that the partition function is nonzero at complex
coupling. If $v_{\mathrm{loc}}=0$, the nonconstant interaction is
absent and the statement follows directly from the free expansion.

For $H_{u}$ alone, use its covariance at rate $\Delta/4$. Every
interaction edge is still bounded by $K_{\mathrm{cov}}$, and the
same combinatorics and analytic identification apply. The resulting
prefactor is again independent of inverse temperature and imaginary
time. 
\end{proof}

\subsection{From correlation decay to the full Fock-space gap}
\begin{lem}[Correlation decay and the spectral gap]
\label{app:stab:decay-gap-criterion} Let $H$ be a finite-dimensional
Hermitian Hamiltonian with a unique ground state, and let $\langle\cdot\rangle_{\infty}$
denote its ground-state expectation. If, for some $\gamma>0$ and
every pair of operators $O_{1},O_{2}$, 
\[
\left|\langle e^{t_{\mathrm{im}}H}O_{1}e^{-t_{\mathrm{im}}H}O_{2}\rangle_{\infty}-\langle O_{1}\rangle_{\infty}\langle O_{2}\rangle_{\infty}\right|\le C_{\mathrm{obs}}e^{-\gamma t_{\mathrm{im}}},\qquad t_{\mathrm{im}}\ge0,
\]
where $C_{\mathrm{obs}}$ may depend on $H,O_{1},O_{2}$ but is finite
and independent of $t_{\mathrm{im}}$, then $\operatorname{gap}(H)\ge\gamma$
\cite[Lemmas~4 and~5]{DRS}. 
\end{lem}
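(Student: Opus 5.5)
The plan is a direct spectral decomposition followed by a choice of observables that isolates the lowest excitation. Write $H=\sum_{k\ge0}E_{k}Q_{k}$ with distinct eigenvalues $E_{0}<E_{1}<\cdots$ and spectral projections $Q_{k}$. Uniqueness of the ground state gives $Q_{0}=|G\rangle\langle G|$, so $\operatorname{gap}(H)=E_{1}-E_{0}$. If $H$ has only one distinct eigenvalue there is nothing to prove, so assume $E_{1}$ exists.

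First I will compute the connected correlation exactly. Since $e^{-t_{\mathrm{im}}H}|G\rangle=e^{-t_{\mathrm{im}}E_{0}}|G\rangle$ and $\langle G|e^{t_{\mathrm{im}}H}=e^{t_{\mathrm{im}}E_{0}}\langle G|$, inserting the spectral decomposition of $e^{-t_{\mathrm{im}}H}$ gives
\[
\langle e^{t_{\mathrm{im}}H}O_{1}e^{-t_{\mathrm{im}}H}O_{2}\rangle_{\infty}
=\sum_{k\ge0}e^{-t_{\mathrm{im}}(E_{k}-E_{0})}\langle G|O_{1}Q_{k}O_{2}|G\rangle .
\]
The $k=0$ term equals $\langle O_{1}\rangle_{\infty}\langle O_{2}\rangle_{\infty}$ exactly, so the connected correlation is
\[
\sum_{k\ge1}e^{-t_{\mathrm{im}}(E_{k}-E_{0})}\langle G|O_{1}Q_{k}O_{2}|G\rangle .
\]

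Next I will choose observables that pick out $k=1$ alone. Take a normalized eigenvector $|\psi_{1}\rangle$ in the range of $Q_{1}$, and set $O_{2}=|\psi_{1}\rangle\langle G|$ and $O_{1}=O_{2}^{\dagger}=|G\rangle\langle\psi_{1}|$. Orthogonality of the eigenspaces gives
\[
\langle G|O_{1}Q_{k}O_{2}|G\rangle=\langle\psi_{1}|Q_{k}|\psi_{1}\rangle=\delta_{k1},
\]
so the connected correlation equals $e^{-t_{\mathrm{im}}(E_{1}-E_{0})}$ exactly.

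Finally, the hypothesis gives $e^{-t_{\mathrm{im}}(E_{1}-E_{0})}\le C_{\mathrm{obs}}e^{-\gamma t_{\mathrm{im}}}$ for all $t_{\mathrm{im}}\ge0$, with $C_{\mathrm{obs}}$ independent of $t_{\mathrm{im}}$. If $E_{1}-E_{0}<\gamma$, the ratio $e^{(\gamma-(E_{1}-E_{0}))t_{\mathrm{im}}}$ would be unbounded as $t_{\mathrm{im}}\to\infty$, a contradiction. Hence $\operatorname{gap}(H)\ge\gamma$.

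The only point needing care is that the hypothesis must actually apply to these rank-one observables. On a finite-dimensional fermionic Fock space every operator is a finite polynomial in the CAR generators, so $O_{1},O_{2}$ are admissible observables in the sense of Lemma~\ref{app:stab:decay}. That lemma explicitly allows odd and number-changing observables, which covers the case where $|\psi_{1}\rangle$ and $|G\rangle$ differ in parity or particle number, as happens on the full Fock space. I therefore do not expect a genuine obstacle; the argument is a short spectral computation.
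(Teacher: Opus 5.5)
Your proof is correct and complete. With $O_{2}=|\psi_{1}\rangle\langle G|$ and $O_{1}=O_{2}^{\dagger}$, both one-point expectations vanish, and the connected correlation equals $e^{-t_{\mathrm{im}}(E_{1}-E_{0})}$ exactly. A $t_{\mathrm{im}}$-independent $C_{\mathrm{obs}}$ therefore forces $E_{1}-E_{0}\ge\gamma$.

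The paper gives no argument of its own at this point. It simply cites \cite[Lemmas~4 and~5]{DRS}, and supplies the neighboring ingredients separately in the proof of Theorem~\ref{app:stab:theorem}: the $\beta\to\infty$ limit of the Gibbs correlations, and uniqueness via the rank-one-projection argument excluding $m_{0}>1$. So the difference is a citation versus a self-contained spectral computation.

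Your version makes clear what the hypothesis is actually used for. Only one explicit pair of rank-one, generally nonlocal observables is needed, and these are odd or number-changing whenever $|\psi_{1}\rangle$ and $|G\rangle$ lie in different sectors. That is exactly why Lemma~\ref{app:stab:decay} has to be formulated for arbitrary CAR polynomials, with an observable-dependent but $\beta$-independent constant, as the paper does.

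Your closing admissibility remark is not needed for the lemma as stated, since its hypothesis already quantifies over all operators. It is, however, the right check for the lemma's use in Theorem~\ref{app:stab:theorem}, and it matches the paper's own remark there.
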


\begin{proof}[Proof of Theorem~\ref{app:stab:theorem}]
Combining \eqref{app:model:weak}, \eqref{app:stab:Kcov}, and \eqref{app:stab:monomial-norm}
makes the left-hand side of \eqref{app:stab:tree-smallness} at most
$C_{\mathrm{deg}}C_{\mathrm{mon}}C_{\mathrm{cov}}c_{\mathrm{loc}}$.
Choose $c_{\mathrm{loc}}$ sufficiently small that this is at most
$1/2$. This fixes a single interaction threshold for all amplitudes
and volumes. Lemma~\ref{app:stab:decay} then supplies connected-correlation
decay for $H_{\mathrm{br}}(A)$ at rate $\gamma_{A}$ and for $H_{u}$
at rate $\Delta/4$.

Fix a finite volume, $A>0$, and the allowed real coupling. Every
operator on either Fock space is a CAR polynomial, including number-
and parity-changing operators, so Lemma~\ref{app:stab:decay} applies
to all operators.

For either Hamiltonian, let $P_{0}$ be its ground projection and
$m_{0}=\operatorname{rank}P_{0}$. At fixed $t_{\mathrm{im}}$, take
$\beta\to\infty$: the Gibbs state tends to $P_{0}/m_{0}$ and $d_{\beta}(t_{\mathrm{im}},0)\to t_{\mathrm{im}}$.
If $m_{0}>1$, a rank-one projection in the ground space has constant
connected correlation $m^{-1}_{0}-m^{-2}_{0}>0$ with itself, contradicting
decay as $t_{\mathrm{im}}\to\infty$. Thus $m_{0}=1$.

Lemma~\ref{app:stab:decay-gap-criterion} now gives \eqref{app:stab:gaps},
since $\gamma_{A}=c_{\mathrm{gap}}g_{A}$. The limits are taken at
fixed volume and amplitude, so no uniform bound on the observable
prefactors is needed. 

Restoring the decoupled dark modes gives the ground projection $P_{\mathrm{br}}(A)\otimes I_{\mathrm{dark}}$
used by the protocol. 
\end{proof}

%% file: appendix/appendix_C_local_dressing.tex
\section{Local dressing and leakage estimates}

\label{app:local:section}

In this section, we establish a framework for controlling the leakage
from a target subspace under adiabatic evolution and under a weak,
time-dependent perturbation of a fixed Hamiltonian. The construction
follows the local adiabatic expansion of Ref.~\cite{bachmann2018adiabatic} and the
perturbative scheme of Ref.~\cite{teufel2020nonequilibrium}. We first formulate
the common structure and then derive the recursion, keeping explicit
track of the expansion order, the reference gap, and the localization
cutoffs. For completeness, we reproduce the derivations needed from
these references. Appendix~\ref{app:stages} then applies this framework
to the three pulse stages.

\subsection{Setting and main estimate}

\label{app:local:setting}

We retain the lattice geometry, the range $R$, and the constant convention
of Appendix \ref{app:model:setting}. For support counting, the bath
modes attached to system mode $j$ are assigned to that same site
$j$.

Let $s\in[0,1]$ be normalized time, write $\partial_{s}$ as a prime,
and consider 
\begin{equation}
iz\partial_{s}\psi(s)=[H(s)+zV(s)]\psi(s),\qquad z>0.\label{app:local:equation}
\end{equation}
The dimensionless Hermitian reference $H$ and perturbation $V$ are
sums of even fermionic terms with this range bound. The parameter
$z$ controls the driving rate and, in the perturbative case, the
perturbation strength. In the adiabatic case it is the inverse duration
in the energy units used for $H$. Appendix \ref{app:stages} specifies
the physical normalization for each stage.

We use the local norm and local-to-global bound of Eq. \eqref{app:model:localnorm}
and the paragraph following it. Support assignments are held fixed
when differentiating, and each commutator is assigned to the union
of its input supports. Scalar shifts may be omitted or distributed
among onsite terms. Assume the Gevrey-2 bounds 
\begin{equation}
\sup_{s}\|\partial^{j}_{s}H(s)\|_{\mathrm{loc}},\quad\sup_{s}\|\partial^{j}_{s}V(s)\|_{\mathrm{loc}}\le CD^{j}_{\mathrm{sw}}(j!)^{2},\qquad j\ge0,\label{app:local:gevrey}
\end{equation}
where $D_{\mathrm{sw}}$ is fixed.

Let $\Pi(s)$ be the smooth ground-space projection of $H(s)$, separated
from its orthogonal complement by a gap at least $g\in(0,1]$, uniformly
in $s$. The ground space may be degenerate. Constants may also depend
on the fixed smoothness bounds above, but their dependence on $g$
is displayed explicitly. We treat two cases: 
\begin{enumerate}
\item \emph{Adiabatic evolution:} $V=0$, while $H(s)$ and $\Pi(s)$ vary.
The filtered Hermitian generator $K(s)$ defined in Eq.~\eqref{app:local:exact-transport}
satisfies $\Pi'=-i[K,\Pi]$. 
\item \emph{Perturbation of a fixed Hamiltonian:} $H$ and $\Pi$ are fixed,
while $V(s)$ varies; set $K=0$. Only the reference $H$ is assumed
gapped. No isolated spectral band is required for $H+zV(s)$. 
\end{enumerate}
A useful special case of the second setting is an onsite reference
$H=D_{0}$, a sum of commuting, even onsite operators with nonnegative
integer spectra and uniformly bounded strengths. We assume its kernel
is nonempty and take $\Pi=\mathbf{1}_{\{0\}}(D_{0})$, so its gap
is at least one. This structure admits an exact support-preserving
inverse.

Fix an integer $n\ge3$. An endpoint is \emph{flat to order $n+1$}
if the derivatives of the varying $H$ or $V$ of orders $1,\ldots,n+1$
vanish there.
\begin{thm}[Local dressing and leakage]
\label{app:local:theorem} Under the assumptions above, for every
integer $n\ge3$ there are positive scales $B=B(n,g)$ and $\mathcal{P}=\mathcal{P}(n,g)$,
independent of $N$ and $z$, that grow at most polynomially in $n$
and $g^{-1}$. Explicit choices are given in Eqs.~\eqref{app:local:scale}
and~\eqref{app:local:inverse-defect}. There are finite-range Hermitian
coefficients $S_{k}$ and Hermitian comparison operators $D_{k}$
with $[D_{k},\Pi]=0$. They define 
\begin{align}
S_{\mathrm{dr}}(z,s) & =\sum^{n}_{k=1}z^{k}S_{k}(s),\qquad W_{n}(z,s)=e^{iS_{\mathrm{dr}}(z,s)},\nonumber \\
D^{(n)}(z,s) & =\sum^{n}_{k=1}z^{k}D_{k}(s).\label{app:local:dressing-unitary}
\end{align}
For real $z>0$ with $Bz\le\theta$, where $\theta\in(0,1)$ is a
sufficiently small fixed constant, the transformed generator satisfies
\begin{equation}
W^{\dagger}_{n}(H+zV)W_{n}-izW^{\dagger}_{n}W_{n}'=H+zK+D^{(n)}+R_{n},\label{app:local:exact-generator}
\end{equation}
with 
\begin{align}
\sup_{s}\|R_{n}(z,s)\| & \le CN(Bz)^{n+1}+CN^{2}\mathcal{P}(n,g)(Bz)e^{-cn},\nonumber \\
\sup_{s}\|S_{\mathrm{dr}}(z,s)\| & \le CNBz.\label{app:local:remainder}
\end{align}
For the onsite reference $H=D_{0}$ defined above, $B=C(n+2)^{2}$
suffices, the second remainder term is absent and $[D_{k},D_{0}]=0$.

Let $U(s,0)$ be the exact propagator in Eq.~\eqref{app:local:equation}
and define 
\[
\Pi_{\mathrm{dr}}(s)=W_{n}(z,s)\Pi(s)W_{n}(z,s)^{\dagger}.
\]
Then 
\begin{equation}
\|(I-\Pi_{\mathrm{dr}}(1))U(1,0)\Pi_{\mathrm{dr}}(0)\|\le\frac{1}{z}\int^{1}_{0}\|R_{n}(z,s)\|\,ds.\label{app:local:leakage}
\end{equation}
At $z=\theta/B$ this is bounded by 
\begin{equation}
C\bigl[NB\theta^{n}+N^{2}\mathcal{P}(n,g)Be^{-cn}\bigr],\label{app:local:chosen-speed}
\end{equation}
again omitting the second term when $H=D_{0}$.

In the adiabatic case, every endpoint $s_{*}\in\{0,1\}$ flat to order
$n+1$ has $W_{n}(z,s_{*})=I$. In the perturbative case, at an endpoint
flat to order $n+1$, the dressing agrees with the static construction
that holds $V$ at its endpoint value, using the same inverse, cutoffs,
and support assignments. A zero perturbation at a flat endpoint therefore
gives identity dressing. For $H=D_{0}$, the conditions $[D_{0},V(0)]=0$
and $V^{(j)}(0)=0$ for $1\le j\le n+1$ also imply $W_{n}(z,0)=I$. 
\end{thm}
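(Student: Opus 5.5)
We will prove Theorem~\ref{app:local:theorem} by a local adiabatic/Schrieffer--Wolff expansion in the style of Refs.~\cite{bachmann2018adiabatic,teufel2020nonequilibrium}, organized so that every object has finite range and controlled local norm. The first step is to fix a quasi-local inverse of the Liouvillian on off-diagonal operators. For an off-diagonal $X$ (i.e.\ $\Pi X\Pi=(I-\Pi)X(I-\Pi)=0$) we use the filter map
\[
\mathcal{I}_g(X)=\int_{\mathbb{R}}w_g(t)\,e^{itH}Xe^{-itH}\,dt,
\]
with $w_g$ a Gevrey-class weight whose Fourier transform equals $-1/\omega$ for $|\omega|\ge g$. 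This map solves $i[H,\mathcal{I}_g(X)]=X$ off-diagonally. We then truncate time integrals at a scale $T\sim n^{2}/g$ and Lieb--Robinson localize to range $\ell\sim n$. The truncation costs $e^{-cn}$ per application, times a polynomial $\mathcal{P}(n,g)$ that collects the local norms of the filter. The local norm of $\mathcal{I}_g$ grows like $g^{-1}$ times polynomial factors in $n$, and this defines the scale $B(n,g)$. In the onsite case $H=D_0$ we replace $\mathcal{I}_g$ by the exact inverse $X\mapsto\sum_{a\neq b}(a-b)^{-1}P_aXP_b$, taken sitewise. This inverse preserves supports exactly, which is why the second remainder term is absent there and $B=C(n+2)^2$ comes only from combinatorics.

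The second step is the recursion. We expand $W_n^\dagger(H+zV)W_n-izW_n^\dagger W_n'$ using nested commutators $\mathrm{ad}_{S_{\mathrm{dr}}}^m$ and collect powers $z^k$. At order $k$ the equation reads $i[H,S_k]+Y_k=D_k$. Here $Y_k$ depends on $S_1,\dots,S_{k-1}$, on $V$ (or on $K$ and $H'$ in the adiabatic case, where $K=\mathcal{I}_g(H')$ gives $\Pi'=-i[K,\Pi]$), and on $s$-derivatives $S_{k-1}'$. We set $D_k$ equal to the block-diagonal part of $Y_k$, so that $[D_k,\Pi]=0$, and take $S_k=\mathcal{I}(\text{off-diagonal part of }Y_k)$. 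To bound this we track $\|\partial_s^jS_k\|_{\mathrm{loc}}\le C B^k (k+j)!^{2}/k!^2$ type Gevrey estimates by induction. Three ingredients make the induction close: the Gevrey hypothesis~\eqref{app:local:gevrey}, the fact that commutators of range-$r$ terms grow supports additively, and the fact that each derivative of $S_{k-1}$ contributes one factorial. Summing the leftover orders $>n$ in the BCH series, together with the truncation errors, gives $R_n$. The local-to-global bound $\|\cdot\|\le N\|\cdot\|_{\mathrm{loc}}$ then yields the first term $CN(Bz)^{n+1}$ once $Bz\le\theta$. The localization errors carry an extra factor of $N$ from summing over sites, which produces $CN^2\mathcal{P}Bz\,e^{-cn}$. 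The same local-to-global bound gives $\|S_{\mathrm{dr}}\|\le CNBz$.

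The main obstacle is this Gevrey bookkeeping. The generator contains $-izW^\dagger W'$, so each order differentiates the previous coefficients, and naive estimates give $(k!)^{3}$ growth. Only the combination of $n\sim$ optimal truncation with $Bz\le\theta$ absorbs the resulting $(n!)^2$ into $(Bz)^{n+1}$ with $B$ polynomial in $n$. Our first attempt will be the norm-majorant induction of Ref.~\cite{teufel2020nonequilibrium}, with range growth bounded linearly in $k$ and absorbed into $B$ through the volume factor $(1+\ell)^D$.

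The leakage bound then follows by a Duhamel argument. Let $\tilde U=W_n^\dagger U W_n$. This is generated by $z^{-1}(H+zK+D^{(n)}+R_n)$, and the first three terms commute with $\Pi$ up to exact transport by $K$. Comparing $\tilde U$ with the propagator of $z^{-1}(H+zK+D^{(n)})$, which maps $\operatorname{ran}\Pi(0)$ into $\operatorname{ran}\Pi(s)$, gives~\eqref{app:local:leakage}. Substituting $z=\theta/B$ gives~\eqref{app:local:chosen-speed}. For the endpoint statements, every $S_k$ is linear in derivatives $H^{(j)}$ or $V^{(j)}$ with $1\le j\le k$ in the adiabatic case, so all $S_k$ vanish at a flat endpoint. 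In the perturbative case, flatness removes every term containing derivatives, which leaves the static construction. Finally, in the onsite case, $[D_0,V(0)]=0$ makes every $Y_k$ block-diagonal, so all $S_k$ vanish and $W_n=I$.
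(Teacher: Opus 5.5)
Your overall architecture matches the paper: an inverse of $\operatorname{ad}_H$, an order-by-order recursion in $z$, Gevrey bookkeeping that absorbs factorials into a $B$ polynomial in $n$, a Duhamel comparison with $H+zK+D^{(n)}$, and a flat-endpoint induction. The recursion step as you formulate it, however, fails.

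\textbf{The gap.} You take $D_k$ to be the block-diagonal part of $Y_k$ relative to $\Pi$, and $S_k=\mathcal{I}(\text{off-diagonal part of }Y_k)$. In every application $\Pi$ is a many-body projection: the bath-vacuum projector $\Pi_{\mathrm{in}}$, a rank-one ground state, or $P_u\otimes I_{\mathrm{br}}$. Compressing a local term by such a projection gives an operator supported on the whole lattice. For example, take $\Pi=\prod_{j=1}^{N}(1-a_j^\dagger a_j)$ and the two-site term $X=a_1^\dagger a_2^\dagger+a_2a_1$. Its off-diagonal part is $a_1^\dagger a_2^\dagger\Pi+\Pi a_2a_1$, which acts on all $N$ modes.

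Consequences:
\begin{itemize}
\item Your $S_k$ has no finite-range decomposition, so the termwise localized map $\mathcal{J}$ cannot even be applied.
\item The support-growth commutator estimate and the local-to-global bound are unavailable; only operator-norm control remains.
\item In operator norm, $\|S_1\|\lesssim N/g$, and each further nested commutator with $S_{\mathrm{dr}}$ costs another factor of order $N$. The order-$k$ coefficients are then only bounded by $N^k$ times constants.
\item The remainder comes out of order $(NBz)^{n+1}$ rather than $CN(Bz)^{n+1}$, and $\|S_{\mathrm{dr}}\|\le CNBz$ is lost. Downstream this would force $z\lesssim 1/N$, i.e.\ durations linear in $N$.
\end{itemize}

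\textbf{The missing idea} is never to split a local operator into blocks of $\Pi$. The paper applies the localized inverse to the entire local forcing term, $S_k=-\mathcal{J}F_k$, which is finite range. It then defines $D_k=F_k-i[H,\mathcal{I}F_k]$ using the exact filter inverse. Since $\widehat{w}_{\mathrm{inv}}(\xi)=-i/\xi$ for $|\xi|\ge 1$, one has $[\Phi-i[H,\mathcal{I}\Phi],\Pi]=0$ for every $\Phi$. Hence $[D_k,\Pi]=0$ holds exactly, even though $D_k$ is in general neither local nor equal to the block-diagonal part of $F_k$ (the two differ by a commutator with $H$). This non-locality is harmless because $D_k$ enters only the comparison Hamiltonian and never feeds back into the recursion.

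The exact identity $F_k+i[H,S_k]=D_k+i[H,(\mathcal{I}-\mathcal{J})F_k]$ moves the truncation defect into $R_n$. There, $\|H\|\le CN$ multiplied by $\|(\mathcal{I}-\mathcal{J})F_k\|\le CN\mathcal{P}e^{-cn}\|F_k\|_{\mathrm{loc}}$ is the actual source of the $N^2\mathcal{P}(Bz)e^{-cn}$ term. No $s$-derivatives of $\mathcal{I}-\mathcal{J}$ are ever needed. Likewise, the recursion uses $K_{\mathrm{tr}}=\mathcal{J}H'$, and $z(K_{\mathrm{tr}}-K)$ is placed in $R_n$. In the onsite case your exact inverse is fine, provided ``diagonal'' means $\mathcal{E}_0(F_k)$, the part commuting with $D_0$, rather than the block-diagonal part relative to $\ker D_0$.

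\textbf{Smaller points.}
\begin{itemize}
\item Lieb--Robinson localization for $|t|\le T$ requires $\ell\gtrsim v_{\mathrm{LR}}T$. So $\ell\sim n$ together with $T\sim n^2/g$ is inconsistent; fixing it changes only polynomial factors.
\item The full BCH series for $e^{iS_{\mathrm{dr}}}$ has unbounded commutator depth, and supports grow with depth. This produces a factor $(m+1)!$ that must be cancelled against the BCH $1/m!$; the paper does this via Taylor's formula with unitary conjugations.
\item The onsite initial-endpoint identity also needs $V^{(j)}(0)=0$. Since $S_{k-1}'$ enters $Y_k$, the condition $[D_0,V(0)]=0$ alone does not make $S_k(0)$ vanish.
\end{itemize}
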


The term $zK$ accounts for the changing target subspace. Since $[H+D^{(n)},\Pi]=0$
and $\Pi'=-i[K,\Pi]$, the comparison Hamiltonian $H+D^{(n)}+zK$
transports $\Pi$ exactly. It is used to estimate leakage; $zK$ is
not an additional physical control.

The recursive dressing and its flat-endpoint properties are established
parts of many-body adiabatic and perturbative theory~\cite[Lemmas~4.3--4.4]{bachmann2018adiabatic}\cite[Proposition~5.1]{teufel2020nonequilibrium}.
Here we adapt those constructions to a finite local inverse and derive
the displayed finite-volume operator-norm bounds. Quantitative dependence
on expansion order and gap also appears in switching adiabatic estimates~\cite[Lemma~III.1]{EH2012}
and local Schrieffer--Wolff theory~\cite[Lemma~4.2]{BDL2011}; their
bounds concern different settings. The estimates below keep the inverse
cutoff, support growth, and derivative costs in a common notation.

\subsection{Inverse maps and localization error}

\label{app:local:inverses}

At each order we remove the part of an operator that couples $\operatorname{Ran}\Pi$
to its complement. For the onsite reference this can be done without
enlarging supports. For a general gapped reference we use spectral
filtering, then truncate the filter in time and space.

The following integer-spectrum averaging identity is standard; see,
for example, Ref.~\cite[Sec.~5.4]{ADHH2017}. Its short proof fixes
our sign convention. 
\begin{prop}[Exact onsite inverse]
\label{app:local:onsite-inverse} For the fixed onsite reference
$D_{0}$ in Section~\ref{app:local:setting}, define 
\begin{align}
\mathcal{E}_{0}(F) & =\frac{1}{2\pi}\int^{2\pi}_{0}e^{itD_{0}}Fe^{-itD_{0}}\,dt,\nonumber \\
\mathcal{I}_{0}(F) & =\frac{1}{2\pi}\int^{2\pi}_{0}(t-\pi)e^{itD_{0}}Fe^{-itD_{0}}\,dt.\label{app:local:integer-maps}
\end{align}
Here $t$ is an auxiliary integration variable. These maps preserve
Hermiticity, evenness, and each assigned support. They commute with
$s$-derivatives and satisfy 
\begin{equation}
i[D_{0},\mathcal{I}_{0}(F)]=F-\mathcal{E}_{0}(F),\quad[D_{0},\mathcal{E}_{0}(F)]=0,\quad\|\mathcal{I}_{0}(F)\|_{\mathrm{loc}}\le\frac{\pi}{2}\|F\|_{\mathrm{loc}}.\label{app:local:integer-identities}
\end{equation}
In particular, $\mathcal{I}_{0}(F)=0$ whenever $[D_{0},F]=0$. 
\end{prop}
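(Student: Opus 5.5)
The plan is to diagonalize $D_{0}$ and reduce everything to scalar identities on matrix elements. Write $D_{0}=\sum_{m\ge0}m\,Q_{m}$ with spectral projections $Q_{m}$; the nonnegative integer spectrum makes this sum finite. For any operator $F$, decompose it as $F=\sum_{m,m'}Q_{m}FQ_{m'}$. On the block $Q_{m}FQ_{m'}$ the conjugation acts by the phase $e^{itD_{0}}Q_{m}FQ_{m'}e^{-itD_{0}}=e^{it(m-m')}Q_{m}FQ_{m'}$. Because $m-m'=k$ is an integer, the $t$-averages are elementary:
\[
\frac{1}{2\pi}\int_{0}^{2\pi}e^{ikt}\,dt=\delta_{k0},\qquad \frac{1}{2\pi}\int_{0}^{2\pi}(t-\pi)e^{ikt}\,dt=\frac{1}{ik}\ (k\neq0),\quad 0\ (k=0).
\]
The second value follows from one integration by parts.

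Consequently $\mathcal{E}_{0}(F)=\sum_{m}Q_{m}FQ_{m}$ is block diagonal, which gives $[D_{0},\mathcal{E}_{0}(F)]=0$ at once. Similarly $\mathcal{I}_{0}(F)=\sum_{m\neq m'}(i(m-m'))^{-1}Q_{m}FQ_{m'}$. Applying $i[D_{0},\cdot\,]$ multiplies each off-diagonal block by $i(m-m')$, which returns $F-\mathcal{E}_{0}(F)$. If $[D_{0},F]=0$, then $F$ has no off-diagonal blocks, so $\mathcal{I}_{0}(F)=0$.

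\textbf{Hermiticity.} Since $D_{0}$ is Hermitian and $t$ and $(t-\pi)$ are real weights, $e^{itD_{0}}F^{\dagger}e^{-itD_{0}}=(e^{itD_{0}}Fe^{-itD_{0}})^{\dagger}$. Hence both maps commute with taking adjoints.

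\textbf{Evenness.} $D_{0}$ is a sum of even operators, so it commutes with the parity operator. Conjugation by $e^{itD_{0}}$ therefore preserves parity, and so does averaging.

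\textbf{Derivatives.} $D_{0}$ is fixed in $s$, so $\partial_{s}$ passes inside the $t$-integral.

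\textbf{Supports.} This is the only point needing care, because the maps are applied term by term to $F=\sum_{X}F_{X}$. Write $D_{0}=\sum_{j}d_{j}$ with commuting even onsite terms. Then $e^{itD_{0}}=\prod_{j}e^{itd_{j}}$, and every factor with $j\notin X$ is even and commutes with $F_{X}$. Evenness is what allows moving through fermionic operators on other sites, including odd ones. Those factors cancel, so each conjugated term $e^{itD_{0}}F_{X}e^{-itD_{0}}$ stays supported on $X$. Setting $\mathcal{I}_{0}(F)_{X}:=\mathcal{I}_{0}(F_{X})$ then preserves the assigned support.

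\textbf{Local norm.} Conjugation is isometric, so
\[
\|\mathcal{I}_{0}(F_{X})\|\le\frac{1}{2\pi}\int_{0}^{2\pi}|t-\pi|\,dt\,\|F_{X}\|=\frac{\pi}{2}\|F_{X}\|.
\]
Summing over $X\ni j$ and taking the supremum over $j$ gives $\|\mathcal{I}_{0}(F)\|_{\mathrm{loc}}\le\frac{\pi}{2}\|F\|_{\mathrm{loc}}$.

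None of these steps is a genuine obstacle. The only subtlety is the support and parity argument, which is needed so that the locality bound holds term by term rather than only for the global operator.
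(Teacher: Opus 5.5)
Your proposal is correct and follows essentially the same route as the paper. Both arguments reduce to matrix elements between eigenvectors of $D_{0}$ with integer energy difference, evaluate $\frac{1}{2\pi}\int_{0}^{2\pi}(t-\pi)e^{ikt}\,dt=1/(ik)$ by parts, and note that conjugation by $e^{itD_{0}}$ acts within individual sites, so supports are preserved (your explicit parity argument is a more detailed version of the paper's one-line remark). The local-norm bound then comes from the absolute integral $\pi/2$ of the real kernel.
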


\begin{proof}
For two eigenvectors of $D_{0}$ with integer energy difference $m\ne0$,
integration by parts gives 
\[
\frac{1}{2\pi}\int^{2\pi}_{0}(t-\pi)e^{imt}\,dt=\frac{1}{im}.
\]
For $m=0$ this integral vanishes, while the average defining $\mathcal{E}_{0}$
equals one. Evaluating matrix elements proves the two commutator identities.
Conjugation by $e^{itD_{0}}$ acts within individual local sites,
so it changes no support. The kernel is real and has absolute integral
$(2\pi)^{-1}\int^{2\pi}_{0}|t-\pi|\,dt=\pi/2$, proving the norm and
Hermiticity statements. Both maps are independent of $s$. 
\end{proof}

For a general gapped reference, Lemma~2.6(ii),(iv) and Corollary~2.8
of Ref.~\cite{BMNS} provide a real odd integrable filter $w_{\mathrm{inv}}$
satisfying, in our Fourier convention, 
\begin{align}
\widehat{w}_{\mathrm{inv}}(\xi) & =\int_{\mathbb{R}}w_{\mathrm{inv}}(t)e^{it\xi}\,dt=-\frac{i}{\xi}\quad(|\xi|\ge1),\nonumber \\
\int_{|t|\ge T}|w_{\mathrm{inv}}(t)|\,dt & \le C\exp\!\left[-\frac{cT}{\log^{2}(e+T)}\right].\label{app:local:filter-input}
\end{align}
For even observables, the fermionic Lieb--Robinson bound~\cite[Theorem~3.1]{NSY2018}
takes the finite-range form 
\begin{equation}
\|[e^{itH_{\Omega}}F_{Z}e^{-itH_{\Omega}},B_{Y}]\|\le C|Z||Y|\|F_{Z}\|\|B_{Y}\|e^{-\mu\operatorname{dist}(Z,Y)+v_{\mathrm{LR}}|t|}.\label{app:local:lr-input}
\end{equation}
The bound is uniform over restricted regions $\Omega$ for even local
interactions and does not require a gap of $H_{\Omega}$.

The exact inverse and its finite local approximation are 
\begin{align}
\mathcal{I}\Phi & =\int_{\mathbb{R}}w_{\mathrm{inv}}(gt)e^{itH}\Phi e^{-itH}\,dt,\nonumber \\
\mathcal{J}\Phi_{Z} & =\int^{t_{f}}_{-t_{f}}w_{\mathrm{inv}}(gt)e^{itH_{Z^{\ell}}}\Phi_{Z}e^{-itH_{Z^{\ell}}}\,dt,\qquad\mathcal{J}\Phi=\sum_{Z}\mathcal{J}\Phi_{Z}.\label{app:local:spectral-maps}
\end{align}
The region $Z^{\ell}$ is the $\ell$-neighborhood of $Z$, and $H_{Z^{\ell}}$
contains the reference terms supported inside that region. The map
$\mathcal{J}$ acts on decompositions with their fixed assigned supports.
Set $L_{n}=\log(e+n)$ and choose 
\begin{equation}
t_{f}=C_{f}nL^{2}_{n}/g,\qquad\ell=\lceil C_{\ell}t_{f}+C_{\ell}'n\rceil.\label{app:local:cutoffs}
\end{equation}
The fixed constants are chosen sufficiently large below. 
\begin{lem}[Exact spectral inverse and projection transport]
\label{app:local:exact-inverse} Under the assumptions of Section~\ref{app:local:setting},
the exact map $\mathcal{I}$ in Eq.~\eqref{app:local:spectral-maps}
preserves Hermiticity and, for every Hermitian $\Phi$, satisfies
\begin{equation}
[\Phi-i[H,\mathcal{I}\Phi],\Pi]=0,\qquad K=\mathcal{I}H',\qquad\Pi'=-i[K,\Pi]\label{app:local:exact-transport}
\end{equation}
in our Fourier convention~\cite[Proposition~4.1(ii) and Corollary~4.2]{bachmann2018adiabatic}.
\end{lem}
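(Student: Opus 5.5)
The plan is to prove all three identities by working blockwise in an eigenbasis of the reference $H=H(s)$ at a fixed $s$. Since the whole argument is pointwise in $s$, no smooth choice of eigenbasis is needed. Everything will then reduce to the one property of $w_{\mathrm{inv}}$ recorded in Eq.~\eqref{app:local:filter-input}: its Fourier transform is exactly $-i/\xi$ for $|\xi|\ge1$.

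\emph{Well-definedness and Hermiticity.} Because $w_{\mathrm{inv}}$ is integrable and $\|e^{itH}\Phi e^{-itH}\|=\|\Phi\|$, the integral defining $\mathcal{I}\Phi$ converges absolutely in finite dimension, and $\|\mathcal{I}\Phi\|\le g^{-1}\|w_{\mathrm{inv}}\|_{L^1}\|\Phi\|$. For Hermitian $\Phi$, each conjugate $e^{itH}\Phi e^{-itH}$ is Hermitian. The weight $w_{\mathrm{inv}}(gt)$ is real, so $\mathcal{I}\Phi$ is Hermitian.

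\emph{Matrix elements of $\mathcal{I}\Phi$.} Let $|E\rangle$ and $|E'\rangle$ be eigenvectors of $H$. A change of variables gives
\[
\langle E|\mathcal{I}\Phi|E'\rangle=\int_{\mathbb{R}}w_{\mathrm{inv}}(gt)e^{it(E-E')}\,dt\;\Phi_{EE'}=\frac{1}{g}\widehat{w}_{\mathrm{inv}}\!\left(\frac{E-E'}{g}\right)\Phi_{EE'}.
\]
Suppose $|E\rangle\in\operatorname{Ran}\Pi$ and $|E'\rangle\in\operatorname{Ran}(I-\Pi)$, or the reverse. The gap assumption gives $|E-E'|\ge g$, so Eq.~\eqref{app:local:filter-input} applies and $\langle E|\mathcal{I}\Phi|E'\rangle=-i\Phi_{EE'}/(E-E')$. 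On these off-diagonal blocks we then get
\[
\langle E|\,i[H,\mathcal{I}\Phi]\,|E'\rangle=i(E-E')\,\langle E|\mathcal{I}\Phi|E'\rangle=\Phi_{EE'}.
\]
Hence $\Pi(\Phi-i[H,\mathcal{I}\Phi])(I-\Pi)=0$ and $(I-\Pi)(\Phi-i[H,\mathcal{I}\Phi])\Pi=0$. An operator with vanishing off-diagonal blocks commutes with $\Pi$, which proves the first identity in Eq.~\eqref{app:local:exact-transport}. The diagonal blocks, where $|E-E'|$ may be smaller than $g$ and $\widehat w_{\mathrm{inv}}$ is not controlled, never enter. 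I expect this to be the only delicate point, and it is why the construction tolerates a degenerate ground space.

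\emph{Projection transport.} Since $H(s)$ is smooth and $\Pi(s)$ is an isolated spectral projection, the Riesz formula shows that $\Pi$ is differentiable. Differentiating $\Pi^2=\Pi$ gives $\Pi\Pi'\Pi=0=(I-\Pi)\Pi'(I-\Pi)$, so $\Pi'$ is block off-diagonal. Differentiating $\Pi H=H\Pi$ and taking the matrix element between $|E\rangle\in\operatorname{Ran}\Pi$ and $|E'\rangle\in\operatorname{Ran}(I-\Pi)$ gives
\[
E'\,\Pi'_{EE'}+H'_{EE'}=E\,\Pi'_{EE'},\qquad\text{so}\qquad \Pi'_{EE'}=\frac{H'_{EE'}}{E-E'}.
\]
The same relation holds with the roles of the two subspaces reversed. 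Now set $K=\mathcal{I}H'$. The commutator $-i[K,\Pi]$ is block off-diagonal, and on the block above it equals $iK_{EE'}=i\cdot(-i)H'_{EE'}/(E-E')=\Pi'_{EE'}$, using the previous step with $\Phi=H'$. The reversed block gives the same by the symmetric computation. Therefore $\Pi'=-i[K,\Pi]$, and $K$ is Hermitian by the first step because $H'$ is Hermitian.
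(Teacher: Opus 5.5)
Your proof is correct, including the signs: with $\widehat w_{\mathrm{inv}}(\xi)=-i/\xi$, $\mathcal{I}$ acts on the off-diagonal blocks as multiplication by $-i/(E-E')$, and $\Pi'$ has off-diagonal entries $\pm H'_{EE'}/(E-E')$, which match $-i[\mathcal{I}H',\Pi]$ on both blocks. The paper gives no independent proof and only cites Proposition~4.1(ii) and Corollary~4.2 of Ref.~\cite{bachmann2018adiabatic}, which rest on this same spectral computation, so your argument makes the cited step self-contained and confirms it in the paper's Fourier convention.
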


\begin{prop}[Finite-inverse defect]
\label{app:local:spectral-inverse} The map $\mathcal{J}$ in Eq.~\eqref{app:local:spectral-maps}
preserves Hermiticity and maps a term on $Z$ to one on $Z^{\ell}$.
Suppose $S_{\max}$ bounds the number of sites in every retained support
and inverse region. With the cutoffs in Eq.~\eqref{app:local:cutoffs},
\begin{align}
\|(\mathcal{I}-\mathcal{J})\Phi\| & \le CN\mathcal{P}(n,g)e^{-cn}\|\Phi\|_{\mathrm{loc}},\nonumber \\
\mathcal{P}(n,g) & =Cg^{-1}S^{2}_{\max}(1+\ell)^{2D+2}(1+t_{f}).\label{app:local:inverse-defect}
\end{align}
These estimates use only the stated filter and Lieb--Robinson bounds.
The finite map $\mathcal{J}$ does not, in general, obey the exact
identities of Lemma~\ref{app:local:exact-inverse}. 
\end{prop}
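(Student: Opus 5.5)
The plan is to treat each local term separately and then sum. Write $\Phi=\sum_Z\Phi_Z$ using the fixed support assignment.

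The structural claims come first and are immediate from the definition of $\mathcal{J}$.
\begin{itemize}
\item \emph{Hermiticity.} Since $w_{\mathrm{inv}}$ is real and $e^{itH_{Z^\ell}}$ is unitary, $(\mathcal{J}\Phi_Z)^\dagger=\mathcal{J}(\Phi_Z^\dagger)$. So $\mathcal{J}$ preserves Hermiticity.
\item \emph{Support.} $H_{Z^\ell}$ contains only reference terms inside $Z^\ell$, so the conjugated operator $e^{itH_{Z^\ell}}\Phi_Ze^{-itH_{Z^\ell}}$ is supported in $Z^\ell$. The same therefore holds for $\mathcal{J}\Phi_Z$.
\end{itemize}

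For the defect I split
\[
(\mathcal{I}-\mathcal{J})\Phi_Z=\underbrace{\int_{|t|>t_f}w_{\mathrm{inv}}(gt)\,e^{itH}\Phi_Ze^{-itH}\,dt}_{\text{tail}}+\underbrace{\int_{-t_f}^{t_f}w_{\mathrm{inv}}(gt)\bigl(e^{itH}\Phi_Ze^{-itH}-e^{itH_{Z^\ell}}\Phi_Ze^{-itH_{Z^\ell}}\bigr)dt}_{\text{localization}}.
\]

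\emph{Tail.} The tail is bounded by $g^{-1}\|\Phi_Z\|\int_{|u|\ge gt_f}|w_{\mathrm{inv}}(u)|\,du$. By Eq.~\eqref{app:local:filter-input} with $gt_f=C_fnL_n^2$, this is at most $Cg^{-1}\exp[-cC_fnL_n^2/\log^2(e+C_fnL_n^2)]$. Since $\log(e+C_fnL_n^2)\le C'L_n$, the exponent is at least $c''C_fn$. For $C_f$ large this gives $Cg^{-1}e^{-cn}\|\Phi_Z\|$.

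\emph{Localization.} I use Duhamel's formula:
\[
e^{itH}\Phi_Ze^{-itH}-e^{itH_{Z^\ell}}\Phi_Ze^{-itH_{Z^\ell}}=i\int_0^t e^{i(t-s)H}\bigl[H-H_{Z^\ell},\,e^{isH_{Z^\ell}}\Phi_Ze^{-isH_{Z^\ell}}\bigr]e^{-i(t-s)H}\,ds .
\]
The norm equals that of the inner commutator. Expand $H-H_{Z^\ell}=\sum_Y H_Y$ over terms $Y\not\subset Z^\ell$; each such $Y$ has $\operatorname{dist}(Z,Y)\ge\ell-R$. Apply the restricted-region bound~\eqref{app:local:lr-input} with $\Omega=Z^\ell$ (the terms are even). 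Each term is bounded by $C|Z||Y|\,\|\Phi_Z\|\,\|H_Y\|e^{-\mu\operatorname{dist}(Z,Y)+v_{\mathrm{LR}}|s|}$.

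Sum over $Y$ using the ball count and $|Y|\le S_{\max}$. This gives a factor of at most $CS_{\max}^2(1+\ell)^{D}\,|Z|\,e^{-\mu(\ell-R)+v_{\mathrm{LR}}t_f}$, up to polynomial growth in the distance shells absorbed into $(1+\ell)^{2D+2}$. Choosing $C_\ell$ large relative to $v_{\mathrm{LR}}/\mu$, and $C'_\ell$ large, makes $\mu(\ell-R)-v_{\mathrm{LR}}t_f\ge cn$. Integrating $s$ over $[0,|t|]$ and then $t$ against $|w_{\mathrm{inv}}(gt)|$ costs at most $t_f\,g^{-1}\|w_{\mathrm{inv}}\|_1$. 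Collecting factors, the localization error is at most $Cg^{-1}S_{\max}^2(1+\ell)^{2D+2}(1+t_f)e^{-cn}\|\Phi_Z\|=\mathcal{P}(n,g)e^{-cn}\|\Phi_Z\|$. Here the spare powers of $(1+\ell)$ cover $|Z|\le S_{\max}$ and the shell sums.

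\emph{Summation.} Finally $\sum_Z\|\Phi_Z\|\le N\|\Phi\|_{\mathrm{loc}}$, since each term contains a site. This yields~\eqref{app:local:inverse-defect}, with the tail contribution $Cg^{-1}e^{-cn}$ dominated by $\mathcal{P}$.

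\emph{Main obstacle.} The delicate step is the bookkeeping in the localization estimate. The constants $C_f,C_\ell,C'_\ell$ must be fixed once, independent of $n$, $g$ and $N$, so that the Lieb--Robinson growth $e^{v_{\mathrm{LR}}t_f}$ is beaten by $e^{-\mu\ell}$ while leaving a clean $e^{-cn}$. The polynomial remainders must then be absorbed into $\mathcal{P}$ without hidden $g$ dependence.
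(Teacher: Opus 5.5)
Your proposal is correct and follows the paper's proof essentially step for step. Both arguments use the filter-tail bound with $gt_f=C_fnL_n^2$ for the time truncation, and Duhamel interpolation with the restricted-region Lieb--Robinson bound for the spatial truncation. Both then integrate against the filter to get the factor $g^{-1}(1+t_f)$ and sum with $\sum_Z\|\Phi_Z\|\le N\|\Phi\|_{\mathrm{loc}}$.

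One bookkeeping point needs correcting. Only terms that meet $Z^\ell$ without being contained in it have nonzero commutators. Since $S_{\max}$ bounds $|Z^\ell|$, there are at most $CS_{\max}$ such terms, which gives the paper's factor $CS_{\max}^2$ directly. Any leftover polynomial factors in $n$ and $\ell$, such as the $n^D$ hidden in $|Z|\le S_{\max}$, are not covered by spare powers of $(1+\ell)$ as you claim. Instead they are absorbed into the exponential $e^{-\mu\ell+v_{\mathrm{LR}}t_f}$, using $\ell\ge C_\ell t_f+C'_\ell n$.
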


\begin{proof}
To estimate the time truncation, use Eq.~\eqref{app:local:filter-input}
and the change of variable $v=gt$: 
\[
\int_{|t|>t_{f}}|w_{\mathrm{inv}}(gt)|\,dt\le\frac{C}{g}\exp\!\left[-\frac{cgt_{f}}{\log^{2}(e+gt_{f})}\right]\le\frac{C}{g}e^{-c_{1}n}.
\]
The last inequality follows from $gt_{f}=C_{f}nL^{2}_{n}$ and $\log(e+C_{f}nL^{2}_{n})\le C(C_{f})L_{n}$,
with $C_{f}$ fixed before $n$.

For the spatial truncation, interpolate between the full and restricted
Heisenberg evolutions. Duhamel's formula bounds their difference on
$\Phi_{Z}$ by the time integral of commutators with terms crossing
the boundary of $Z^{\ell}$. Such a term is at distance at least $\ell-R$
from $Z$, and the sum of their norms is at most $C|Z^{\ell}|$. Equation~\eqref{app:local:lr-input}
therefore gives 
\[
\|e^{itH}\Phi_{Z}e^{-itH}-e^{itH_{Z^{\ell}}}\Phi_{Z}e^{-itH_{Z^{\ell}}}\|\le CS^{2}_{\max}|t|e^{-\mu\ell+v_{\mathrm{LR}}|t|}\|\Phi_{Z}\|.
\]
Choose $\mu C_{\ell}>v_{\mathrm{LR}}$ and $C_{\ell}'>0$ sufficiently
large. For $|t|\le t_{f}$ the exponential is bounded by $e^{-c_{2}n}$.
Integration against the filter costs at most $C/g$. Summing the resulting
termwise bound uses $\sum_{Z}\|\Phi_{Z}\|\le CN\|\Phi\|_{\mathrm{loc}}$.
The displayed $\mathcal{P}$ is a sufficient common polynomial prefactor
for both errors. 
\end{proof}

\subsection{Common recursion and quantitative bounds}

\label{app:local:finite-order}

We follow the recursive construction of Ref.~\cite[Lemma~4.3]{bachmann2018adiabatic}
and Ref.~\cite[Sec.~6.1]{teufel2020nonequilibrium}. We give the algebra explicitly,
then estimate the coefficients and the remainder for the finite inverse.
The gap bound, cutoffs, and support assignments remain fixed under
differentiation in $s$. Set $\mathcal{I}=\mathcal{J}=\mathcal{I}_{0}$
in the onsite case. In the fixed-reference case take $K=K_{\mathrm{tr}}=0$;
in the adiabatic case approximate the exact generator from Eq.\eqref{app:local:exact-transport}
by 
\begin{equation}
K_{\mathrm{tr}}=\mathcal{J}H'.\label{app:local:transport-sources}
\end{equation}

\paragraph{One recursion.}

Write $\operatorname{ad}_{S_{\mathrm{dr}}}F=[S_{\mathrm{dr}},F]$.
The Baker--Campbell--Hausdorff expansion and the differentiated
exponential give 
\begin{align}
e^{-iS_{\mathrm{dr}}}Fe^{iS_{\mathrm{dr}}} & =\sum_{m\ge0}\frac{(-i)^{m}}{m!}\operatorname{ad}^{m}_{S_{\mathrm{dr}}}F,\nonumber \\
-izW^{\dagger}_{n}W_{n}' & =z\sum_{m\ge0}\frac{(-i)^{m}}{(m+1)!}\operatorname{ad}^{m}_{S_{\mathrm{dr}}}S_{\mathrm{dr}}'.\label{app:local:bch-series}
\end{align}
The second identity follows from $W^{\dagger}_{n}W_{n}'=i\int^{1}_{0}e^{-ivS_{\mathrm{dr}}}S_{\mathrm{dr}}'e^{ivS_{\mathrm{dr}}}\,dv$;
its leading term is $+zS_{\mathrm{dr}}'$. At this point only finite
coefficients are extracted. Convergence of the full series is justified
below.

Suppose $S_{1},\ldots,S_{k-1}$ have been chosen and put $S_{<k}(z,s)=\sum^{k-1}_{j=1}z^{j}S_{j}(s)$.
Let $[z^{k}]$ denote the coefficient of $z^{k}$ in a formal series.
The known part of the next coefficient is 
\begin{equation}
\begin{split}F_{k}=[z^{k}]\bigl\{ & e^{-iS_{<k}}(H+zV)e^{iS_{<k}}\\
 & -ize^{-iS_{<k}}\partial_{s}e^{iS_{<k}}-H-zK_{\mathrm{tr}}\bigr\}.
\end{split}
\label{app:local:recursion-forcing}
\end{equation}
Adding $z^{k}S_{k}$ to the generator changes this coefficient by
$i[H,S_{k}]$. Therefore choose 
\begin{equation}
S_{k}=-\mathcal{J}F_{k},\qquad D_{k}=F_{k}-i[H,\mathcal{I}F_{k}].\label{app:local:recursion}
\end{equation}
The exact inverse gives $[D_{k},\Pi]=0$; for $\mathcal{I}_{0}$,
$D_{k}=\mathcal{E}_{0}(F_{k})$. The coefficients are Hermitian, and
the first two forcing terms are 
\[
F_{1}=V-K_{\mathrm{tr}},\qquad F_{2}=-\tfrac{1}{2}[S_{1},[S_{1},H]]+i[V,S_{1}]+S_{1}'.
\]
In the adiabatic case $F_{1}=-K_{\mathrm{tr}}$ and $S_{1}=\mathcal{J}^{2}H'$.
Thus the change of $\Pi$ is included at first order, and the support
estimate must allow two initial inverse applications.

\paragraph{Support and derivative bounds.}

The local commutator estimates used here are the finite-support version
of the bounds in Ref.~\cite[Lemma~4.5]{bachmann2018adiabatic}. If the terms of two
even interactions have support sizes at most $s_{\Phi},s_{\Psi}$,
disjoint terms commute and assigning the remaining commutators to
unions gives 
\begin{equation}
\|[\Phi,\Psi]\|_{\mathrm{loc}}\le2(s_{\Phi}+s_{\Psi})\|\Phi\|_{\mathrm{loc}}\|\Psi\|_{\mathrm{loc}}.\label{app:local:commutator-bound}
\end{equation}
Indeed, at any fixed site, split the sum according to which input
support contains that site and sum the other interaction over the
overlap.

A nonzero nested commutator joins overlapping supports; differentiation
changes no support and $\mathcal{J}$ adds a neighborhood of radius
$\ell$. Induction in Eq.~\eqref{app:local:recursion-forcing} bounds
the diameter at order $k$ by $(2k-1)L_{0}$, for a fixed multiple
$L_{0}=C(R+\ell+1)$ large enough to include the two first-order inverses.
Hence every retained support and inverse region has size at most 
\begin{equation}
S_{\max}=C[1+n(\ell+R)]^{D}.\label{app:local:support-bound}
\end{equation}
With the onsite inverse, no support is added. The same induction bounds
support size by $s_{0}k$, where $s_{0}$ is a fixed bound on the
size of a bare perturbation term.

To control all derivatives needed by the recursion, assign order zero
to $H$ and order one to $V,K_{\mathrm{tr}}$, and use 
\begin{equation}
\|\Phi\|_{k,*}=\max_{0\le j\le n+1-k}\frac{\sup_{s}\|\partial^{j}_{s}\Phi(s)\|_{\mathrm{loc}}}{D^{j}_{\mathrm{jet}}(j!)^{2}},\qquad0\le k\le n+1.\label{app:local:jet-norm}
\end{equation}
The scale $D_{\mathrm{jet}}$ is chosen below. This derivative budget
includes $S_{n}'$ in the remainder at order $n+1$. A derivative
costs 
\begin{equation}
\|\Phi'\|_{k+1,*}\le D_{\mathrm{jet}}(n+2)^{2}\|\Phi\|_{k,*}.\label{app:local:derivative-cost}
\end{equation}
For products differentiated $j$ times, the normalized Gevrey weights
give inverse binomial coefficients. Since $\sum^{j}_{r=0}\binom{j}{r}^{-1}\le3$,
the commutator bound holds in these derivative norms with only a fixed
extra factor.

For an $s$-dependent reference, derivatives also act on the restricted
propagator $U_{\Omega}(s,t)=e^{itH_{\Omega}(s)}$ inside $\mathcal{J}$.
Iterated Duhamel differentiation, with real auxiliary time $t$, gives
\begin{equation}
\|\partial^{j}_{s}U_{\Omega}(s,t)\|\le[C(1+S_{\max}|t|)]^{j}(j!)^{2}.\label{app:local:unitary-jets}
\end{equation}
To see the dependence on $j$, a term with $m$ insertions has positive
derivative orders $\alpha_{1}+\cdots+\alpha_{m}=j$, coefficient $j!/\prod_{r}\alpha_{r}!$,
and integration simplex of volume $1/m!$. Use $\|H^{(\alpha)}_{\Omega}\|\le CS_{\max}D^{\alpha}_{\mathrm{sw}}(\alpha!)^{2}$,
$\prod_{r}\alpha_{r}!\le j!$, and the $\binom{j-1}{m-1}$ possible
compositions. Summing over $m$ yields Eq.~\eqref{app:local:unitary-jets}.

Take $D_{\mathrm{jet}}=CS_{\max}(1+t_{f})$, large enough that the
normalized derivatives of $U_{\Omega}$ and $U^{\dagger}_{\Omega}$
are bounded by $q^{j}$ for a fixed $0<q<1$. The Leibniz sum for
$U_{\Omega}\Phi_{Z}U^{\dagger}_{\Omega}$ is then bounded by $\sum_{a,c\ge0}q^{a+c}=(1-q)^{-2}$.
The filter integral costs $C/g$, while support enlargement costs
$C(1+\ell)^{D}$. If $H$ is fixed, the inverse commutes with derivatives
and a fixed $D_{\mathrm{jet}}$ suffices.

It follows that the retained operations satisfy 
\begin{align}
\|\mathcal{J}\Phi\|_{k,*} & \le G\|\Phi\|_{k,*},\nonumber \\
\|[\Phi,\Psi]\|_{a+b,*} & \le\mathfrak{L}\|\Phi\|_{a,*}\|\Psi\|_{b,*},\nonumber \\
\|\Phi'\|_{k+1,*} & \le\mathfrak{F}\|\Phi\|_{k,*},\label{app:local:operations}
\end{align}
with sufficient choices 
\begin{equation}
G=C(1+\ell)^{D}/g,\qquad\mathfrak{L}=CS_{\max},\qquad\mathfrak{F}=CD_{\mathrm{jet}}(n+2)^{2}.\label{app:local:cost-definitions}
\end{equation}
All three costs may be taken at least one. For $\mathcal{I}_{0}$
the sharper values are $G=O(1)$, $\mathfrak{L}=O(n)$, and $\mathfrak{F}=O(n^{2})$.
These are commutator estimates for decomposed interactions; no product
norm bound for extensive operators is assumed.

For later substitutions it is useful to collect the explicit bounds
following from Eq.~\eqref{app:local:cutoffs}: 
\begin{align}
G & \le Cn^{D}L^{2D}_{n}g^{-(D+1)}, & \mathfrak{L} & \le Cn^{2D}L^{2D}_{n}g^{-D},\nonumber \\
\mathfrak{F}_{\mathrm{ad}} & \le Cn^{2D+3}L^{2D+2}_{n}g^{-(D+1)}, & \mathfrak{F}_{\mathrm{pert}} & \le Cn^{2}.\label{app:local:explicit-costs}
\end{align}
Here the subscripts distinguish the two evolution cases. Substitution
in Eq.~\eqref{app:local:inverse-defect} also gives 
\begin{equation}
\mathcal{P}(n,g)\le Cn^{6D+3}L^{8D+6}_{n}g^{-(4D+4)}.\label{app:local:defect-polynomial}
\end{equation}

\paragraph{Coefficient growth.}

Let $h_{\mathrm{jet}}=\|H\|_{0,*}\le h_{0}$, with fixed $h_{0}\ge1$,
and choose a source bound 
\begin{equation}
\nu_{*}\ge\max\{1,\|V\|_{1,*}+\|K_{\mathrm{tr}}\|_{1,*}\}.\label{app:local:source-size}
\end{equation}
The inverse estimate and the Gevrey bounds allow $\nu_{*}\le C(1+G)$
in the adiabatic case. For a fixed reference with bounded source,
$\nu_{*}=O(1)$. For sufficiently large fixed $C_{B}$ and sufficiently
small fixed $c_{x}>0$, set 
\begin{equation}
B=C_{B}(G^{2}\mathfrak{L}^{2}\nu_{*}+G\mathfrak{F}),\qquad r_{x}=\frac{c_{x}}{G\mathfrak{L}^{2}}.\label{app:local:scale}
\end{equation}
We show that the recursion obeys 
\begin{equation}
\|S_{k}\|_{k,*}\le r_{x}(B/8)^{k},\qquad\|F_{k}\|_{k,*}\le(r_{x}/G)(B/8)^{k}.\label{app:local:coefficient-bounds}
\end{equation}
This is a scalar-majorant estimate of the type used in local perturbation
theory~\cite[Lemma~4.2]{BDL2011}; the calculation here includes
the time-derivative term and the cost of $\mathcal{J}$.

Use a scalar variable $w$, distinct from $z$, and the nonnegative
series $x(w)$ defined by 
\begin{equation}
x=Gh_{\mathrm{jet}}(e^{\mathfrak{L}x}-1-\mathfrak{L}x)+Gw(\nu_{*}+\mathfrak{F}x)e^{\mathfrak{L}x}.\label{app:local:scalar-majorant}
\end{equation}
The reference contribution begins quadratically because $i[H,S_{k}]$
was separated. The factor $w$ pays for $V$, $K_{\mathrm{tr}}$,
or a derivative; replacing $1/(m+1)!$ by $1/m!$ only enlarges the
derivative contribution. Equation~\eqref{app:local:operations} therefore
bounds $F_{k}$ by $x_{k}/G$ and $S_{k}$ by $x_{k}$, inductively
in $k$.

Choose $a=c_{x}>0$ small depending on $h_{0}$, and then $b>0$ small
relative to $a$. On 
\begin{equation}
|x|\le r_{x}=\frac{a}{G\mathfrak{L}^{2}},\qquad|w|\le r_{w}=\frac{b}{G^{2}\mathfrak{L}^{2}\nu_{*}+G\mathfrak{F}},\label{app:local:scalar-disks}
\end{equation}
the reference term and its $x$-derivative are bounded by $2h_{0}ar_{x}$
and $2h_{0}a$. Since $Gr_{w}\nu_{*}\le b/(G\mathfrak{L}^{2})$ and
$Gr_{w}\mathfrak{F}\le b$, the source term is at most $2b(1+a)/(G\mathfrak{L}^{2})$
and its derivative is $O(b)$. Thus the right side is a strict contraction
of the $x$-disk into itself. Iteration from zero gives an analytic
series with nonnegative coefficients. Taking $C_{B}$ large enough
that $B\ge8/r_{w}$, Cauchy's estimate proves Eq.~\eqref{app:local:coefficient-bounds}.
The small prefactor $r_{x}$ is needed to control the full exponential
below. For the onsite inverse and a bounded source, Eq.~\eqref{app:local:scale}
gives $B=C(n+2)^{2}$.

\paragraph{Remainder of the full unitary.}

For fixed $n$, define 
\begin{equation}
X(w)=r_{x}\frac{Bw/8}{1-Bw/8}.\label{app:local:generator-majorant}
\end{equation}
It bounds the finite generator coefficients, and $|X(w)|\le r_{x}$
on $|w|\le4/B$. At unrestricted commutator depth $m$, a support
may contain the union of $m+1$ retained supports. Applying Eq.~\eqref{app:local:commutator-bound}
successively gives 
\begin{equation}
\|\operatorname{ad}^{m}_{S_{\mathrm{dr}}}F\|_{\mathrm{loc}}\le\|F\|_{\mathrm{loc}}(C_{\mathrm{ad}}S_{\max}\|S_{\mathrm{dr}}\|_{\mathrm{loc}})^{m}(m+1)!.\label{app:local:all-depth-bound}
\end{equation}
Here the support grows with the unrestricted commutator depth, so
the factorial factor must be retained. After the BCH factorial is
divided out, the scalar sums are geometric: $\sum_{m\ge0}(m+1)y^{m}=(1-y)^{-2}$,
with a smaller bound for the derivative series.

Following the remainder argument in Ref.~\cite[Lemma~4.3]{bachmann2018adiabatic},
we use Taylor's integral formula with unitary conjugations. For real
$z$, $S_{\mathrm{dr}}(z,s)$ is Hermitian, and the formula in an
auxiliary real parameter $v\in[0,1]$ gives 
\begin{align}
e^{-iS_{\mathrm{dr}}}Fe^{iS_{\mathrm{dr}}}-\sum^{J}_{m=0}\frac{(-i)^{m}}{m!}\operatorname{ad}^{m}_{S_{\mathrm{dr}}}F=\frac{(-i)^{J+1}}{J!}\int^{1}_{0}(1-v)^{J}e^{-ivS_{\mathrm{dr}}}\operatorname{ad}^{J+1}_{S_{\mathrm{dr}}}Fe^{ivS_{\mathrm{dr}}}\,dv.\label{app:local:real-taylor}
\end{align}
The unexpanded conjugations are unitary. Equation~\eqref{app:local:all-depth-bound}
bounds this remainder by $CN\|F\|_{\mathrm{loc}}(J+2)y^{J+1}$, where
$y=C_{\mathrm{ad}}S_{\max}\|S_{\mathrm{dr}}\|_{\mathrm{loc}}<1$.
It tends to zero as $J\to\infty$. The same argument inside the integral
for $W^{\dagger}_{n}W_{n}'$ proves its series. Both exact series
converge absolutely and can be grouped by total degree in $z$.

After increasing $\mathfrak{L}$ by a fixed factor if needed, $C_{\mathrm{ad}}S_{\max}|X(w)|\le a<1/4$
on $|w|\le4/B$. A nonnegative scalar majorant for all terms is 
\begin{equation}
T(w)=\frac{C\{h_{0}+w\|V\|_{1,*}+w\mathfrak{F}X(w)\}}{[1-C_{\mathrm{ad}}S_{\max}X(w)]^{2}}.\label{app:local:tail-majorant}
\end{equation}
It is bounded by a fixed constant on that disk. The denominator has
a fixed margin; the numerator is controlled by the two terms defining
$B$. Cauchy's estimate therefore bounds the terms of total degree
greater than $n$ by $C(Bz)^{n+1}$. Multiplying by the local-to-global
factor gives 
\begin{equation}
\|R_{\mathrm{alg}}\|\le CN(Bz)^{n+1}.\label{app:local:algebraic-remainder}
\end{equation}
Cauchy's estimate is applied to the scalar majorant; the operator
remainder is controlled by the real unitary conjugations in Eq.~\eqref{app:local:real-taylor}.

The retained coefficients satisfy the exact identity 
\[
F_{k}+i[H,S_{k}]=D_{k}+i[H,(\mathcal{I}-\mathcal{J})F_{k}].
\]
Restoring the exact transport source consequently gives 
\begin{equation}
R_{n}=R_{\mathrm{alg}}+\sum^{n}_{k=1}z^{k}i[H,(\mathcal{I}-\mathcal{J})F_{k}]+z(K_{\mathrm{tr}}-K).\label{app:local:defect-identity}
\end{equation}
Use $\|H\|\le CN$, Eq.~\eqref{app:local:inverse-defect}, and $\sum^{n}_{k=1}z^{k}\|F_{k}\|_{k,*}\le C(r_{x}/G)(Bz)$.
The commutator defects are at most $CN^{2}\mathcal{P}(Bz)e^{-cn}$.
In the adiabatic case, the final source defect is bounded by $CNz\mathcal{P}e^{-cn}\|H'\|_{\mathrm{loc}}$,
and is absorbed in the same bound since $\|H'\|_{\mathrm{loc}}\le C$
and $B\ge1$. In the perturbative case this source defect is zero;
for the onsite inverse every inverse defect is zero. This proves the
remainder estimate. The generator bound follows from its coefficient
sum and the local-to-global inequality. In particular, $\mathcal{P}$
stays outside the coefficient scale $B$; no derivatives of $\mathcal{I}-\mathcal{J}$
enter the recursion.

\subsection{Propagation and flat endpoints}

\label{app:local:endpoints}

We complete the proof of Theorem~\ref{app:local:theorem} by a propagator
comparison. The comparison Hamiltonian is $H_{\mathrm{cmp}}=H+D^{(n)}+zK$.
Since $[D^{(n)},\Pi]=0$, 
\[
\Pi'=-\frac{i}{z}[H_{\mathrm{cmp}},\Pi].
\]
Differentiating $U_{\mathrm{cmp}}(s,0)^{\dagger}\Pi(s)U_{\mathrm{cmp}}(s,0)$
shows that the comparison propagator transports $\Pi$ exactly. The
comparison operators $D_{k}$ need not have finite support: they are
Hermitian finite-volume matrices used to preserve the reference projection
exactly.

After the unitary change of variables, the propagator is $\widetilde{U}(s,0)=W_{n}(z,s)^{\dagger}U(s,0)W_{n}(z,0)$.
Both its equation and the comparison equation have $iz\partial_{s}$
on the left. Therefore 
\[
\partial_{s}(U^{\dagger}_{\mathrm{cmp}}\widetilde{U})=-\frac{i}{z}U^{\dagger}_{\mathrm{cmp}}R_{n}\widetilde{U},
\]
and unitarity yields $\|\widetilde{U}(1,0)-U_{\mathrm{cmp}}(1,0)\|\le z^{-1}\int^{1}_{0}\|R_{n}\|\,ds$.
Conjugating the endpoint projections gives Eq.~\eqref{app:local:leakage}.
This introduces no factor from the rank of $\Pi$. Substituting $z=\theta/B$
proves Eq.~\eqref{app:local:chosen-speed}; every Hamiltonian remainder
is divided by $z$.

Flat-endpoint cancellation is the standard induction of Ref.~\cite[Lemma~4.4]{bachmann2018adiabatic};
agreement with static dressing is also established in Ref.~\cite[Proposition~5.1(iii)]{teufel2020nonequilibrium}.
We check the same argument for the finite inverse used here. At a
flat endpoint of the adiabatic path $s_{*}$, the inverse may still
depend on $s$, so keep its product derivatives: 
\begin{equation}
\partial^{j}_{s}K_{\mathrm{tr}}(s_{*})=\sum^{j}_{r=0}\binom{j}{r}(\partial^{r}_{s}\mathcal{J})(s_{*})H^{(j-r+1)}(s_{*})=0,\qquad0\le j\le n.\label{app:local:flat-source}
\end{equation}
All Hamiltonian derivatives here lie within the order-zero budget.
Thus $F_{1}=-K_{\mathrm{tr}}$ and $S_{1}=-\mathcal{J}F_{1}$ have
zero retained endpoint jets. Suppose this holds below order $k$.
Every term in the forcing then contains a zero retained derivative
of a lower generator. A derivative term at order $k$ uses a generator
of order $k_{0}\le k-1$ and requests at most $n+2-k\le n+1-k_{0}$
derivatives. The budget is therefore sufficient. Applying the full
product rule to $\mathcal{J}F_{k}$ preserves the zero jets and completes
the induction. Hence $W_{n}(z,s_{*})=I$.

At a flat endpoint in the perturbative case, the same induction sets
every positive retained derivative of the coefficients to zero there.
Their values consequently obey exactly the static recursion with perturbation
$V(s_{*})$, provided the inverse parameters and support assignments
agree. For $V(s)=f(s)X$ with flat scalar endpoints, define $W^{\mathrm{stat}}_{n}(a)$
by the static construction for $H+aX$. Homogeneity of the recursion
then gives 
\begin{equation}
W_{n}(z,s_{*})=W^{\mathrm{stat}}_{n}(zf(s_{*})).\label{app:local:static-endpoint}
\end{equation}
Thus $f(0)=1$, $f(1)=0$ imply $W_{n}(z,0)=W^{\mathrm{stat}}_{n}(z)$
and $W_{n}(z,1)=I$. Write $\Pi^{\mathrm{stat}}_{\mathrm{dr}}(a)$
for the dressed projection $\Pi_{\mathrm{dr}}$ of Theorem \ref{app:local:theorem}
specialized to this static construction.

Finally, at a flat initial endpoint for the onsite reference, $[D_{0},V(0)]=0$
implies $S_{1}(0)=-\mathcal{I}_{0}V(0)=0$, and all its retained positive
derivatives vanish. At every higher order the forcing contains a lower
generator, possibly differentiated. The same derivative-budget induction
proves zero retained jets of all $S_{k}$ at zero, so the initial
dressing is exactly $I$. The diagonal term $D_{1}(0)$ may remain
nonzero; it preserves $\operatorname{Ran}\Pi$.

%% file: appendix/appendix_D_stages.tex
\section{The three stages of the physical pulse}

\label{app:stages}

We apply Theorem \ref{app:local:theorem} to the three consecutive
stages, using the gaps from Theorem \ref{app:stab:theorem}. Retain
the expansion order $n\ge3$, the small constant $\theta$, and the
logarithmic factor $L_{n}$ from Appendix \ref{app:local:section}.
All construction constants are fixed before the final choice of $n$
in Appendix \ref{app:resources:section}. The propagators below use
the bath-rotating frame of Section \ref{app:model:frame}, which leaves
the reduced system state unchanged.

Use the same scalar switch on every segment: 
\begin{equation}
f_{\uparrow}(s)=\frac{{\displaystyle \int^{s}_{0}e^{-1/[v(1-v)]}\,dv}}{{\displaystyle \int^{1}_{0}e^{-1/[v(1-v)]}\,dv}},\qquad0\le s\le1.\label{app:stages:switch}
\end{equation}
The integrand is extended by zero at the endpoints. This switch increases
from zero to one, all its positive-order endpoint derivatives vanish,
and its Gevrey-2 bounds are standard~\cite[Sec.~I.1]{EH2012}: $\sup_{s}|f^{(k)}_{\uparrow}(s)|\le CD^{k}_{\mathrm{sw}}(k!)^{2}$
with the derivative scale $D_{\mathrm{sw}}$ of Eq.\eqref{app:local:gevrey}.
The same Gevrey-2 bounds hold for the sine and cosine of $\vartheta(s)=(\pi/2)f_{\uparrow}(s)$
and for $\vartheta'$, after increasing the fixed derivative scale.
The pulse angle $\vartheta$ is distinct from the small convergence
parameter $\theta$.

\subsection{Entrance: transfer of the full input subspace}

Use the projections $\Pi_{\mathrm{in}}$ and $\Pi_{\mathrm{br}}(A)$
from Eq.~\eqref{app:model:interfaces}. The entrance uses 
\begin{equation}
\begin{split}B_{\mathrm{ent}} & =C_{\mathrm{ent}}(n+2)^{2},\qquad r=\frac{\Lambda B_{\mathrm{ent}}}{\theta},\qquad z_{\mathrm{ent}}=\frac{\Lambda}{r}=\frac{\theta}{B_{\mathrm{ent}}},\\
A(t) & =\frac{r}{\sqrt{2}}\sin\vartheta(\Lambda t),\qquad\omega(t)=r\cos\vartheta(\Lambda t),\qquad0\le t\le\Lambda^{-1}.
\end{split}
\label{app:stages:entrance-pulse}
\end{equation}
Integrate $\omega$ with the initial phase fixed in Section \ref{app:model:frame}.
The endpoint is $A_{\mathrm{ent}}=r/\sqrt{2}$, with zero chirp. Choose
the fixed constant $C_{\mathrm{ent}}$ so that $A_{\mathrm{ent}}\ge\Lambda$.
\begin{lem}[Entrance estimate]
\label{app:stages:entrance} There are fixed constants $C,c_{0}>0$
such that, if $CN\theta^{n+1}<c_{0}$, the entrance propagator obeys
\begin{equation}
\bigl\|(I-\Pi_{\mathrm{br}}(A_{\mathrm{ent}}))U_{\mathrm{ent}}\Pi_{\mathrm{in}}\bigr\|\le\eta_{\mathrm{ent}}:=CNB_{\mathrm{ent}}\theta^{n}.\label{app:stages:entrance-error}
\end{equation}
Its physical duration is $\Lambda^{-1}$ and its amplitude and chirp
peaks are $r/\sqrt{2}$ and $r$. No spectral isolation of the full
interacting rank-$2^{N}$ entrance sector is assumed. 

\paragraph{Exact reference equation.}

\end{lem}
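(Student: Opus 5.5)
The plan is to reduce the entrance stage to the \emph{perturbative, onsite} case of Theorem~\ref{app:local:theorem}, with $H=D_{0}$. The actual reference $H_{\mathrm{ctrl}}(\vartheta)$ of Eq.~\eqref{eq:entrance-control} moves with $s$, so I would first pass to an onsite co-moving frame. I take this route because the adiabatic case with a general spectral inverse would introduce the $\mathcal{P}(n,g)$ term. That term is absent in the target bound $\eta_{\mathrm{ent}}=CNB_{\mathrm{ent}}\theta^{n}$.

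\emph{Rescaling and co-moving frame.} Set $s=\Lambda t$ and work in the bath-rotating frame. Dividing by $r$, the evolution takes the form~\eqref{app:local:equation} with $z=z_{\mathrm{ent}}=\Lambda/r$ and generator $r^{-1}H_{\mathrm{ctrl}}(\vartheta(s))+z\Lambda^{-1}H_{u}$. For each site, the single-particle matrix of $r^{-1}H_{\mathrm{ctrl}}$ on $(c_j,b_j,d_j)$ has eigenvalues $0,\pm1$. Its zero mode is $\cos\vartheta\,c_j+\sin\vartheta\,d_j$, and the $\pm1$ modes are fixed combinations of $b_j$ and $(\sin\vartheta\,c_j-\cos\vartheta\,d_j)$.

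Let $Y(s)=\prod_jY_j(s)$ be the onsite Gaussian unitary generated by the real rotation in the $(c_j,d_j)$ plane by angle $\vartheta(s)$, and rotate these modes back to their $s=0$ positions. Then
\[
Y^{\dagger}r^{-1}H_{\mathrm{ctrl}}(\vartheta)Y=r^{-1}H_{\mathrm{ctrl}}(0)=\sum_j(b_j^{\dagger}d_j+d_j^{\dagger}b_j).
\]
After adding a constant, this is $D_{0}=\sum_j(e^{\dagger}_{j,+}e_{j,+}+e^{\dagger}_{j,-}e_{j,-})$, by the identity below Eq.~\eqref{app:model:Hrot}. This $D_{0}$ is onsite with nonnegative integer spectrum, and its kernel is exactly $\operatorname{Ran}\Pi_{\mathrm{in}}$. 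The transformed equation has perturbation
\[
V(s)=Y^{\dagger}\Lambda^{-1}H_{u}Y-i\,\vartheta'(s)\,G_{\mathrm{rot}},
\]
where $G_{\mathrm{rot}}=\sum_j i(c^{\dagger}_jd_j-d^{\dagger}_jc_j)$ is the fixed onsite rotation generator. Conjugation by $Y$ is onsite, so $V$ stays even, of range $R$, with bounded local norm. It is also Gevrey-2, by the Gevrey bounds on $\sin\vartheta$, $\cos\vartheta$ and $\vartheta'$ from the switch~\eqref{app:stages:switch}.

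The onsite case of Theorem~\ref{app:local:theorem} then gives $B=C(n+2)^{2}=B_{\mathrm{ent}}$, and $z=\theta/B_{\mathrm{ent}}$ is exactly the chosen speed. By Eq.~\eqref{app:local:chosen-speed},
\[
\|(I-\Pi_{\mathrm{dr}}(1))\widetilde U\,\Pi_{\mathrm{dr}}(0)\|\le CNB_{\mathrm{ent}}\theta^{n}.
\]
Note that $Y(0)=I$, so the co-moving and rotating frames agree at $s=0$.

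\emph{Endpoints.} At $s=0$ all positive derivatives of $V$ vanish, since $f_{\uparrow}$ is flat. Moreover $V(0)=\Lambda^{-1}H_{u}$ acts only on system modes and commutes with $D_{0}$. The last clause of Theorem~\ref{app:local:theorem} therefore gives $W_n(z,0)=I$, so $\Pi_{\mathrm{dr}}(0)=\Pi_{\mathrm{in}}$ exactly.

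At $s=1$, Eq.~\eqref{app:local:static-endpoint} identifies $W_n(z,1)$ with the static dressing of $D_{0}+zV(1)$. Undoing $Y(1)$, this is the static dressing of $r^{-1}H_{\mathrm{br}}(A_{\mathrm{ent}})\otimes I_{\mathrm{dark}}$. Here $Y(1)$ sends the kernel of $D_{0}$ to the state in which the $(c,b)$ pair occupies its negative mode and the dark sector is arbitrary. The whole static construction acts trivially on the dark modes. Hence $\widehat\Pi:=Y(1)\Pi_{\mathrm{dr}}(1)Y(1)^{\dagger}=Q\otimes I_{\mathrm{dark}}$, with $Q$ a rank-one projection on system--bright space.

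\emph{Main obstacle: comparing $Q$ with $P_{\mathrm{br}}(A_{\mathrm{ent}})$.} The static identity~\eqref{app:local:exact-generator} shows that $Q$ projects onto an approximate eigenvector of $r^{-1}H_{\mathrm{br}}(A_{\mathrm{ent}})$. Its residual is at most $\|R_n\|\le CN(Bz)^{n+1}=CN\theta^{n+1}$, because $[D^{(n)},\Pi]=0$ makes $\Pi$ invariant for the comparison generator. Theorem~\ref{app:stab:theorem} with $A_{\mathrm{ent}}\ge\Lambda$ gives a gap of $H_{\mathrm{br}}(A_{\mathrm{ent}})$ of at least $c_{\mathrm{gap}}A_{\mathrm{ent}}=c_{\mathrm{gap}}r/\sqrt2$, which is order one in these units. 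A Davis--Kahan estimate then puts $Q$ within $CN\theta^{n+1}$ of \emph{some} spectral projection near that eigenvalue.

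The difficulty is to ensure that this eigenvector is the ground state. I would first try a continuity argument in the static coupling $a\in[0,z]$. The family $W^{\mathrm{stat}}_n(a)$ and the ground projection of $D_{0}+aV(1)$ are continuous in $a$, and they coincide at $a=0$. The gap bound holds uniformly, and it can be repeated for $H_{\mathrm{br}}$ at every intermediate amplitude along the static family. The same residual bound holds for every $a$, so the distance to the ground projection stays below $CN\theta^{n+1}<c_{0}$. With $c_0$ small enough, this distance can never reach the jump size that a switch to an excited projection would require. This is where the hypothesis $CN\theta^{n+1}<c_{0}$ enters.

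\emph{Assembly.} Finally, combine the pieces:
\[
\|(I-\Pi_{\mathrm{br}})U_{\mathrm{ent}}\Pi_{\mathrm{in}}\|\le\|(I-\widehat\Pi)U_{\mathrm{ent}}\Pi_{\mathrm{in}}\|+\|\widehat\Pi-\Pi_{\mathrm{br}}\|\le CNB_{\mathrm{ent}}\theta^{n}+CN\theta^{n+1}.
\]
This is at most $\eta_{\mathrm{ent}}$. The duration is $\Lambda^{-1}$ and the peaks are $r/\sqrt2$ and $r$, both read directly off Eq.~\eqref{app:stages:entrance-pulse}. No isolation of the interacting rank-$2^{N}$ sector is used, since only the onsite kernel of $D_{0}$ and the rank-one system--bright gap enter.
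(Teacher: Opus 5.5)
Your proposal is correct and follows the paper's proof essentially step for step: your $Y(s)$ is the paper's $U_0(s)=e^{i\vartheta(s)Y}$, with $D_0=N_E$ and the onsite exact inverse giving $B=C(n+2)^2$; the flat initial endpoint gives $W_n(z_{\mathrm{ent}},0)=I$; the endpoint dressing factors as $P_{\mathrm{app}}\otimes I_{\mathrm{dark}}$; and the branch is selected by continuity along the same static interpolation in the coupling, which the paper makes quantitative via $\gamma\sqrt{p(1-p)}\le\|[H,Q]\|$ using only the ground-state gap (so the Davis--Kahan detour is unnecessary). One slip to fix: the frame term in $V$ is $-iY(s)^{\dagger}Y'(s)=\vartheta'(s)G_{\mathrm{rot}}$ for $Y(s)=e^{i\vartheta(s)G_{\mathrm{rot}}}$, which is Hermitian, not $-i\vartheta'(s)G_{\mathrm{rot}}$.
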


\begin{proof}
Specialize the onsite reference of Section \ref{app:local:setting}
to the physical bath number $N_{E}$ from Section \ref{app:model:frame}:
\begin{equation}
D_{0}:=N_{E},\qquad Y=i\sum_{j}(c^{\dagger}_{j}d_{j}-d^{\dagger}_{j}c_{j}),\qquad U_{0}(s)=e^{i\vartheta(s)Y}.\label{app:stages:entrance-reference}
\end{equation}
The CAR give $[Y,c_{j}]=-id_{j}$, $[Y,d_{j}]=ic_{j}$, and $[Y,b_{j}]=0$.
Consequently 
\begin{align}
U_{0}c_{j}U^{\dagger}_{0} & =c_{j}\cos\vartheta+d_{j}\sin\vartheta, & U_{0}d_{j}U^{\dagger}_{0} & =d_{j}\cos\vartheta-c_{j}\sin\vartheta,\nonumber \\
U_{0}D_{0}U^{\dagger}_{0} & =NI+\cos\vartheta\sum_{j}(b^{\dagger}_{j}d_{j}+d^{\dagger}_{j}b_{j})-\sin\vartheta\sum_{j}(c^{\dagger}_{j}b_{j}+b^{\dagger}_{j}c_{j}).\label{app:stages:reference-rotation}
\end{align}
Keeping the scalar from the bath rotation for this calculation, the
exact rotating generator satisfies 
\begin{equation}
H_{u}-\sqrt{2}A\sum_{j}(c^{\dagger}_{j}b_{j}+b^{\dagger}_{j}c_{j})+\omega N_{E}=rU_{0}D_{0}U^{\dagger}_{0}+H_{u}+N(\omega-r)I.\label{app:stages:entrance-frame-identity}
\end{equation}
Remove its scalar phase and write the rotating-frame state as $U_{0}(s)\chi(s)$,
with $s=\Lambda t$. Since $U^{\dagger}_{0}U_{0}'=i\vartheta'Y$,
the exact equation becomes 
\begin{equation}
iz_{\mathrm{ent}}\partial_{s}\chi=[D_{0}+z_{\mathrm{ent}}V(s)]\chi,\qquad V(s)=\frac{U^{\dagger}_{0}H_{u}U_{0}}{\Lambda}+\vartheta'(s)Y.\label{app:stages:entrance-equation}
\end{equation}
Write $U_{\chi}(s,0)$ for the propagator of this equation. The positive
sign of the last term follows from $-iU^{\dagger}_{0}U_{0}'=\vartheta'Y$.
The reference $D_{0}$ is an onsite integer-spectrum operator with
$\operatorname{Proj}\ker D_{0}=\Pi_{\mathrm{in}}$. Its kernel has
dimension $2^{N}$. The local rotation preserves the supports and
the bounded field degree of every native interaction term, so $V$
has uniform finite-range Gevrey-2 local bounds. These statements concern
the exact interacting equation; no one-particle replacement of $H_{u}$
has been made.

\paragraph{Exact initial dressing and dynamical error.}

At $s=0$, flatness gives 
\begin{equation}
V(0)=H_{u}/\Lambda,\qquad[D_{0},V(0)]=0,\qquad V^{(k)}(0)=0\quad(k\ge1).\label{app:stages:entrance-initial-data}
\end{equation}
Apply the onsite case of Theorem~\ref{app:local:theorem}. The exact
inverse has $G=O(1)$, the retained commutators have $\mathfrak{L}=O(n)$,
and $\mathfrak{F}=O(n^{2})$. The normalized source has $\nu_{*}=O(1)$.
Substitution in Eq.~\eqref{app:local:scale} therefore gives $B=C(n+2)^{2}$,
justifying $B_{\mathrm{ent}}$ in Eq.~\eqref{app:stages:entrance-pulse}.
Use $W_{n}$ and $\Pi_{\mathrm{dr}}$ from that theorem with $\Pi=\Pi_{\mathrm{in}}$
and $z=z_{\mathrm{ent}}$. The endpoint statement of the theorem,
applied to the commuting initial value and flat derivatives in Eq.~\eqref{app:stages:entrance-initial-data},
gives 
\begin{equation}
W_{n}(z_{\mathrm{ent}},0)=I.\label{app:stages:entrance-initial-identity}
\end{equation}
The transformed comparison generator commutes with $D_{0}$, and the
remainder is 
\begin{equation}
\sup_{s}\|R_{\mathrm{ent},n}(s)\|\le\mathcal{R}_{\mathrm{ent}}:=CN\theta^{n+1}.\label{app:stages:entrance-remainder}
\end{equation}
There is no filter-truncation term. The leakage bound \eqref{app:local:leakage}
gives 
\begin{equation}
\|(I-\Pi_{\mathrm{dr}}(1))U_{\chi}(1,0)\Pi_{\mathrm{in}}\|\le\frac{1}{z_{\mathrm{ent}}}\int^{1}_{0}\|R_{\mathrm{ent},n}(s)\|\,ds\le CNB_{\mathrm{ent}}\theta^{n}.\label{app:stages:entrance-dressed-error}
\end{equation}
The divisor $z_{\mathrm{ent}}$ is necessary: the physical Hamiltonian
remainder is $rR_{\mathrm{ent},n}$ and the physical duration is $\Lambda^{-1}$.
No factor equal to the dimension of the input subspace enters this
operator-norm estimate.

\paragraph{Ground-state identification at the entrance endpoint.}

At $s=1$, the reference in the rotating frame is 
\begin{equation}
U_{0}(1)D_{0}U_{0}(1)^{\dagger}=NI-\sum_{j}(c^{\dagger}_{j}b_{j}+b^{\dagger}_{j}c_{j}).\label{app:stages:entrance-end-reference}
\end{equation}
On one system--bright pair, its local term has spectrum $0,1,1,2$
and unique ground vector $(|10\rangle+|01\rangle)/\sqrt{2}$. The
sum therefore has one system--bright ground vector and gap one; the
full reference degeneracy is entirely in the dark factor. All endpoint
derivatives vanish. After conjugation by $U_{0}(1)$, the endpoint
recursion uses only the system--bright reference and the system-only
static source. Its commutators and exact inverse never introduce dark
fields. Hence its dressed projection factors exactly as 
\begin{equation}
U_{0}(1)\Pi_{\mathrm{dr}}(1)U_{0}(1)^{\dagger}=P_{\mathrm{app}}\otimes I_{\mathrm{dark}},\qquad\operatorname{rank}P_{\mathrm{app}}=1.\label{app:stages:entrance-factorization}
\end{equation}

To select its ground-state branch, introduce the static interpolation
\begin{equation}
\widehat{H}_{\mu}=NI-\sum_{j}(c^{\dagger}_{j}b_{j}+b^{\dagger}_{j}c_{j})+\mu z_{\mathrm{ent}}H_{u}/\Lambda,\qquad0\le\mu\le1.\label{app:stages:entrance-static-path}
\end{equation}
At $\mu=1$ this is $H_{\mathrm{br}}(A_{\mathrm{ent}})/r$ plus a
scalar. For $\mu>0$, apply Theorem~\ref{app:stab:theorem} with
$(h,u,\Lambda,\Delta)$ replaced by $(\mu h,\mu u,\mu\Lambda,\mu\Delta)$.
The weak-interaction condition is homogeneous under this scaling.
Thus 
\begin{equation}
\operatorname{gap}\widehat{H}_{\mu}\ge c_{\mathrm{gap}}\min\left\{ \frac{1}{\sqrt{2}},\frac{r}{2\mu\Lambda}\right\} \ge\gamma_{\mathrm{ent}},\qquad\gamma_{\mathrm{ent}}:=\min\{1,c_{\mathrm{gap}}/\sqrt{2}\}>0.\label{app:stages:entrance-static-gap}
\end{equation}
The endpoint $\mu=0$ has the reference gap one. Let $P_{\mathrm{gs}}(\mu)$
be the true rank-one ground projection and $P_{\mathrm{app}}(\mu)$
the static dressed reference obtained with the source $\mu H_{u}$.
Keep its inverse and support assignments fixed as $\mu$ varies. The
finite coefficients are then polynomial in $\mu$, so $P_{\mathrm{app}}$
is continuous and equals the reference at zero. The spectral gap makes
$P_{\mathrm{gs}}$ continuous as well. The uniform static remainder
implies 
\begin{equation}
\|[\widehat{H}_{\mu},P_{\mathrm{app}}(\mu)]\|\le2\mathcal{R}_{\mathrm{ent}}.\label{app:stages:entrance-commutator}
\end{equation}

We use the following projection calculation here and again at the
exit. If $H$ has a normalized nondegenerate ground vector $\psi$,
ground projection $P_{\mathrm{gs}}$ and gap $\gamma>0$, then for
any orthogonal projection $Q$ and $p=\langle\psi,Q\psi\rangle$,
\begin{equation}
\gamma\sqrt{p(1-p)}\le\|[H,Q]\|,\qquad\|(I-Q)P_{\mathrm{gs}}\|=\sqrt{1-p}.\label{app:stages:overlap}
\end{equation}
Indeed, $\|(I-P_{\mathrm{gs}})Q\psi\|^{2}=p-p^{2}$, and the ground-energy-shifted
Hamiltonian is bounded below by $\gamma$ on this orthogonal component.
The second identity follows by applying $I-Q$ to the rank-one ground
space. If $Q$ also has rank one, $\|Q-P_{\mathrm{gs}}\|=\sqrt{1-p}$.

Apply this identity to the static interpolation. Its overlap begins
at one and obeys $\sqrt{p(1-p)}\le2\mathcal{R}_{\mathrm{ent}}/\gamma_{\mathrm{ent}}$.
For $\mathcal{R}_{\mathrm{ent}}<\gamma_{\mathrm{ent}}/8$, continuity
prevents the overlap from crossing $1/2$. On the branch $p>1/2$,
$\sqrt{1-p}\le\sqrt{2}\sqrt{p(1-p)}$. It follows that 
\begin{equation}
\|P_{\mathrm{app}}-P_{\mathrm{br}}(A_{\mathrm{ent}})\|\le C\mathcal{R}_{\mathrm{ent}}.\label{app:stages:entrance-identification}
\end{equation}
This fixes the threshold $c_{0}$ in the lemma. Adding this endpoint
error to Eq.~\eqref{app:stages:entrance-dressed-error} proves Eq.~\eqref{app:stages:entrance-error}. 
\end{proof}

\subsection{Transport: normalized amplitude segments}

Let $0<A_{c}<\Lambda$ be the positive amplitude at which exit begins,
and write $a_{c}=A_{c}/\Lambda$. Its value is chosen in the next
subsection. Keep $\omega=0$. Starting from $A_{0}=A_{\mathrm{ent}}$,
define the lower endpoints recursively by 
\begin{equation}
A_{j+1}=\max\{A_{j}/2,A_{c}\},\qquad0\le j<M_{\mathrm{tr}},\qquad A_{M_{\mathrm{tr}}}=A_{c},\label{app:stages:amplitude-sequence}
\end{equation}
stopping on the first occurrence of $A_{c}$. Segment $j$ uses normalized
time $s\in[0,1]$ and 
\begin{equation}
A^{(j)}(s)=A_{j}-(A_{j}-A_{j+1})f_{\uparrow}(s),\quad a_{j}=A_{j}/\Lambda,\quad J_{j}=\max\{1,a_{j}\},\quad g_{j}=c_{g}\min\{1,a^{2}_{j}\},\label{app:stages:segment-normalization}
\end{equation}
where $0<c_{g}\le\min\{1,c_{\mathrm{gap}}/4\}$ is fixed. The index
$j$ in this subsection labels amplitude segments. Set 
\begin{equation}
B_{j}=B_{\mathrm{tr}}(n,g_{j}),\qquad B_{\mathrm{tr}}(n,g)=C_{\mathrm{tr}}n^{7D}L^{10D}_{n}g^{-(5D+5/2)},\qquad\tau_{j}=\frac{B_{j}}{\theta\Lambda J_{j}}.\label{app:stages:transport-scale}
\end{equation}

\begin{lem}[Transport estimate]
\label{app:stages:transport} For sufficiently large fixed $C_{\mathrm{tr}}$,
segment $j$ satisfies 
\begin{equation}
\|(I-\Pi_{\mathrm{br}}(A_{j+1}))U_{j}\Pi_{\mathrm{br}}(A_{j})\|\le\eta_{j}:=C\left[NB_{j}\theta^{n}+N^{2}\mathcal{P}(n,g_{j})B_{j}e^{-cn}\right].\label{app:stages:segment-error}
\end{equation}
The endpoint projections are the true interacting ground projections.
The total transport leakage is at most $\sum^{M_{\mathrm{tr}}-1}_{j=0}\eta_{j}$,
and its duration satisfies 
\begin{equation}
\Lambda\tau_{\mathrm{tr}}\le Cn^{7D}L^{10D}_{n}\bigl(1+a^{-(10D+5)}_{c}\bigr).\label{app:stages:geometric-cost}
\end{equation}
There is no extra factor equal to the number of segments in this duration
bound. 
\end{lem}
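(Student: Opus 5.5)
Within each segment, the phase is constant and the dark modes are exactly decoupled, so the rotating-frame generator is $H_{\mathrm{br}}(A^{(j)}(s))\otimes I_{\mathrm{dark}}$. We will therefore apply the adiabatic case of Theorem~\ref{app:local:theorem} to the system--bright Hamiltonian alone and tensor the resulting estimate with $I_{\mathrm{dark}}$. No factor from the dark dimension enters, since Eq.~\eqref{app:local:leakage} is an operator-norm bound.

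\textbf{Normalization.} Set $H(s)=H_{\mathrm{br}}(A^{(j)}(s))/(\Lambda J_j)$ and $z=1/(\Lambda J_j\tau_j)=\theta/B_j$. Then the physical Schr\"odinger equation of duration $\tau_j$ becomes Eq.~\eqref{app:local:equation} with $V=0$.

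\textbf{Hypotheses of Theorem~\ref{app:local:theorem}.} The local norm of $H$ is bounded by $(\Lambda+\sqrt2 A+J_{\mathrm{int}})/(\Lambda J_j)=O(1)$, uniformly in $j$. Its $s$-derivatives are $(A_j-A_{j+1})f_\uparrow^{(k)}(s)$ times the onsite hopping divided by $\Lambda J_j$. Because $A_j-A_{j+1}\le A_j$, these obey the Gevrey-2 bounds with constants independent of $j$.

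\textbf{Gap.} Along segment $j$ the amplitude stays in $[A_j/2,A_j]$. Theorem~\ref{app:stab:theorem} then gives a normalized gap
\[
c_{\mathrm{gap}}\min\{A,A^2/\Lambda\}/(\Lambda J_j)\ge (c_{\mathrm{gap}}/4)\min\{1,a_j^2\}\ge g_j .
\]
This holds in both regimes: $a_j\ge1$, where $J_j=a_j$, and $a_j<1$. The ground projection $\Pi(s)=P_{\mathrm{br}}(A^{(j)}(s))\otimes I_{\mathrm{dark}}$ is therefore rank one on the system--bright factor, smooth, and uniformly isolated.

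\textbf{Scale $B_j$.} We will check that $B_{\mathrm{tr}}(n,g_j)$ dominates the scale $B(n,g)$ of Eq.~\eqref{app:local:scale}. Insert the explicit costs of Eq.~\eqref{app:local:explicit-costs} and use $\nu_*\le C(1+G)$ in the adiabatic case. This gives
\[
G^3\mathfrak{L}^2\lesssim n^{7D}L_n^{10D}g^{-(5D+3)} .
\]
The exponent on $g$ here must be compared with the stated $5D+5/2$. I expect some bookkeeping to be needed at this point, e.g.\ using $g\le1$ together with a sharper bound on $\nu_*$, or simply enlarging $C_{\mathrm{tr}}$ and the exponent. Every such power is polylogarithmic after the final choice of $n$, so this does not threaten the result.

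\textbf{Leakage for one segment.} With $z=\theta/B_j$, Eq.~\eqref{app:local:chosen-speed} gives exactly $\eta_j$. The endpoints $s=0,1$ are flat to all orders because $f_\uparrow$ is flat. Hence $W_n=I$ there, and $\Pi_{\mathrm{dr}}$ coincides with the true interacting projections $\Pi_{\mathrm{br}}(A_j)$ and $\Pi_{\mathrm{br}}(A_{j+1})$. This is what makes the endpoint projections exact and lets segments be chained with no matching error.

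\textbf{Chaining.} Write $Q_j=\Pi_{\mathrm{br}}(A_j)$ and use the telescoping identity
\[
(I-Q_M)U_{M-1}\cdots U_0Q_0=\sum_j (I-Q_M)U_{M-1}\cdots U_{j+1}(I-Q_{j+1})U_jQ_j\,Q_{j-1}\cdots
\]
(after inserting $Q_j+(I-Q_j)$ successively). By unitarity this gives the total bound $\sum_j\eta_j$.

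\textbf{Duration.} The total time is $\Lambda\tau_{\mathrm{tr}}=\sum_j B_j/(\theta J_j)$, which we split into two groups.
\begin{itemize}
\item Segments with $a_j\ge1$: here $g_j=c_g$ is constant and $1/J_j=2^{-j}a_0^{-1}$ decays geometrically. Their sum is $O(n^{7D}L_n^{10D})$.
\item Segments with $a_j<1$: here $B_j\propto a_j^{-(10D+5)}$ with $a_j=2^{-k}$ increasing geometrically back toward $1$. The sum is dominated by the last term, $a_c^{-(10D+5)}$, up to a constant factor $1/(1-2^{-(10D+5)})$. This is why no factor counting the segments appears.
\end{itemize}
The only term not of this geometric form is the final truncated segment; it is bounded by its own endpoint value.

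The main obstacle I anticipate is making the Gevrey and local bounds genuinely uniform in $j$ and in $a_j\to0$. In particular, the Lieb--Robinson velocity and the cutoffs $t_f,\ell\propto1/g_j$ must enter only through $G$, $\mathfrak{L}$ and $\mathcal{P}$ and not through hidden $a_j$-dependence. The normalization by $\Lambda J_j$ keeps the reference $O(1)$, so I would verify this first.
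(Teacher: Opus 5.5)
Your normalization, gap verification, flat-endpoint argument, telescoping chain, and geometric duration sum all match the paper's proof. The genuine gap is the step you flag and then set aside: showing that the prescribed scale $B_j=B_{\mathrm{tr}}(n,g_j)$ dominates the scale $B(n,g_j)$ required by Theorem~\ref{app:local:theorem}.

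With the crude source bound $\nu_*\le C(1+G)$ you get $G^3\mathfrak{L}^2\le Cn^{7D}L_n^{10D}g_j^{-(5D+3)}$, which exceeds $B_{\mathrm{tr}}(n,g_j)$ by a factor $g_j^{-1/2}$. Neither of your proposed remedies works for the lemma as stated.
\begin{itemize}
\item This factor cannot be absorbed into $C_{\mathrm{tr}}$, because $g_j=c_g a_j^2$ has no uniform lower bound along the transport.
\item Enlarging the exponent changes the result. The exponent $5D+5/2$ produces $a_c^{-(10D+5)}$ in \eqref{app:stages:geometric-cost} through $g_j\propto a_j^2$, whereas $5D+3$ would give $a_c^{-(10D+6)}$.
\end{itemize}
It is worse than a weaker duration bound. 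If $B_j<B(n,g_j)$, then $z_j=\theta/B_j$ violates the hypothesis $Bz\le\theta$ of Theorem~\ref{app:local:theorem}, so even the per-segment leakage $\eta_j$ is unproved for the prescribed $\tau_j$. Your duration paragraph also quietly uses $B_j\propto a_j^{-(10D+5)}$, which is inconsistent with your own scale bound. Polylogarithmic growth would rescue the main theorem with a different $\kappa$, but not this lemma.

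The missing idea is already in your hypotheses paragraph. You computed that the $s$-derivatives of $H_j$ carry the factor $(A_j-A_{j+1})/(\Lambda J_j)\le a_j/J_j=\min\{1,a_j\}$, but then recorded only a $j$-uniform bound. Keeping the factor gives $\|H_j'\|_{1,*}\le C\min\{1,a_j\}\le C\sqrt{g_j}$. The adiabatic source then satisfies $\|K_{\mathrm{tr}}\|_{1,*}=\|\mathcal{J}H_j'\|_{1,*}\le CG\sqrt{g_j}$. Since $G\sqrt{g_j}\ge1$ for the chosen cost bound, you may take $\nu_*=CG\sqrt{g_j}$. Equation~\eqref{app:local:scale} then gives $B\le C(G^3\mathfrak{L}^2\sqrt{g_j}+G\mathfrak{F})\le Cn^{7D}L_n^{10D}g_j^{-(5D+5/2)}$. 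You still need to check that the $G\mathfrak{F}$ term is dominated by the same scale for $D\ge1$, $n\ge3$, $g_j\le1$. As the paper remarks, the gap bound alone cannot supply this improvement; it comes from the small amplitude change in the low-amplitude segments.
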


\begin{proof}
The normalized Hamiltonian for segment $j$ is $H_{j}(s)=H_{\mathrm{br}}(A^{(j)}(s))/(\Lambda J_{j})$.
Throughout the segment, $A_{j}/2\le A^{(j)}(s)\le A_{j}$, including
the final segment. If $a_{j}\ge2$, the linear branch of the bright-gap
bound gives $\operatorname{gap}H_{j}\ge c_{\mathrm{gap}}/2$. If $1\le a_{j}<2$,
the portion below $\Lambda$ gives $\operatorname{gap}H_{j}\ge c_{\mathrm{gap}}a_{j}/4\ge c_{\mathrm{gap}}/4$,
and the portion above $\Lambda$ obeys the linear bound. If $a_{j}\le1$,
the quadratic branch gives $\operatorname{gap}H_{j}\ge c_{\mathrm{gap}}a^{2}_{j}/4$.
These are the normalized forms of the bound $c_{\mathrm{gap}}g_{A}$
in Eq. \eqref{app:stab:gaps}, with $g_{A}$ defined in Eq. \eqref{app:stab:free-gap-scale}.
They prove the segment gap $g_{j}$.

The local strength of $H_{j}$ is bounded uniformly. Only its coupling
amplitude varies, and the switch gives, for every retained positive
derivative, 
\begin{equation}
\sup_{s}\|\partial^{k}_{s}H_{j}(s)\|_{\mathrm{loc}}\le C\min\{1,a_{j}\}D^{k}_{\mathrm{sw}}(k!)^{2}.\label{app:stages:amplitude-derivatives}
\end{equation}
In the derivative norm of Eq.~\eqref{app:local:jet-norm}, Eq.~\eqref{app:stages:amplitude-derivatives}
gives $\|H_{j}'\|_{1,*}\le C\min\{1,a_{j}\}\le C'\sqrt{g_{j}}$. Since
$V=0$, the source satisfies $\|K_{\mathrm{tr}}\|_{1,*}\le CG\sqrt{g_{j}}$.
Choosing the inverse-cost bound $G$ with $G\sqrt{g_{j}}\ge1$, we
may take $\nu_{*}=CG\sqrt{g_{j}}$. Equations~\eqref{app:local:scale}
and~\eqref{app:local:explicit-costs} then give 
\begin{equation}
B\le C(G^{3}\mathfrak{L}^{2}\sqrt{g_{j}}+G\mathfrak{F})\le Cn^{7D}L^{10D}_{n}g^{-(5D+5/2)}_{j}.\label{app:stages:transport-scale-derivation}
\end{equation}
For $D\ge1$, $n\ge3$, and $g_{j}\le1$, the second contribution
is bounded by the same scale. This proves that the choice in Eq.~\eqref{app:stages:transport-scale}
is sufficient. The small amplitude derivative supplies the factor
$\sqrt{g_{j}}$; a gap bound alone would not give this improvement.

With physical time measured from the start of the segment, $s=t/\tau_{j}$
gives 
\begin{equation}
iz_{j}\partial_{s}\psi=H_{j}(s)\psi,\qquad z_{j}=(\Lambda J_{j}\tau_{j})^{-1}=\theta/B_{j}.\label{app:stages:transport-equation}
\end{equation}
Theorem~\ref{app:local:theorem} bounds propagation error by the
Hamiltonian remainder divided by $z_{j}$. Its two contributions are
$CNB_{j}\theta^{n}$ and $CN^{2}\mathcal{P}(n,g_{j})B_{j}e^{-cn}$.
All positive endpoint derivatives of $H_{j}$ vanish, so the transport
dressing is exactly $I$ at both endpoints. This proves Eq.~\eqref{app:stages:segment-error};
tensoring with the dark identity leaves the norm unchanged. Inserting
each common endpoint projection between adjacent propagators adds
the segment leakages, with no rank factor.

For the duration, the upper amplitudes halve before the last segment.
In the range $A_{j}\ge\Lambda$, the normalized gap is constant, and
\begin{equation}
\sum_{j:A_{j}\ge\Lambda}\frac{1}{A_{j}}\le\frac{2}{\Lambda}.\label{app:stages:large-amplitude-sum}
\end{equation}
In the range $A_{j}<\Lambda$, $J_{j}=1$ and the time is proportional
to $(\Lambda/A_{j})^{10D+5}$. The last upper amplitude lies in $(A_{c},2A_{c}]$,
so reading the sequence backward gives 
\begin{equation}
\sum_{j:A_{j}<\Lambda}\left(\frac{\Lambda}{A_{j}}\right)^{10D+5}\le\frac{a^{-(10D+5)}_{c}}{1-2^{-(10D+5)}}.\label{app:stages:small-amplitude-sum}
\end{equation}
The segment crossing $\Lambda$ is already covered by the normalized
large-amplitude bound. These sums also cover an empty range of either
type. Substituting them into Eq.~\eqref{app:stages:transport-scale}
proves Eq.~\eqref{app:stages:geometric-cost}. The segment count
is bounded separately by $M_{\mathrm{tr}}\le1+\lceil\log_{2}(A_{\mathrm{ent}}/A_{c})\rceil$;
it is needed in the error sum, not as a multiplier of the duration. 
\end{proof}

\subsection{Exit: matching and release through the system gap}

During exit, the dark modes remain spectators, so we first work on
the system--bright factor. Let $E_{u}$ be the ground energy of $H_{u}$
and $I_{\mathrm{br}}$ the bright identity. Apply the fixed-reference
construction of Appendix \ref{app:local:section} with 
\begin{equation}
H_{\mathrm{ref}}=\frac{H_{u}-E_{u}I_{S}}{\Lambda}\otimes I_{\mathrm{br}},\qquad\Pi=P_{u}\otimes I_{\mathrm{br}},\qquad\delta_{0}=\frac{\Delta}{4\Lambda}.\label{app:stages:exit-reference}
\end{equation}
Then $H_{\mathrm{ref}}\Pi=0$ and $H_{\mathrm{ref}}\ge\delta_{0}(I-\Pi)$.
The rank of $\Pi$ is $2^{N}$. This is the output projection of Eq.
\eqref{app:model:interfaces} before restoring the dark factor: $\Pi\otimes I_{\mathrm{dark}}=\Pi_{\mathrm{out}}$.
For the coupling and switch, take 
\begin{equation}
X_{\mathrm{br}}=-\sqrt{2}\sum_{j}(c^{\dagger}_{j}b_{j}+b^{\dagger}_{j}c_{j}),\qquad f(s)=1-f_{\uparrow}(s).\label{app:stages:exit-coupling}
\end{equation}
For a dimensionless amplitude $a=A/\Lambda$, $H_{\mathrm{br}}(\Lambda a)/\Lambda$
differs from $H_{\mathrm{ref}}+aX_{\mathrm{br}}$ by a scalar. Since
$|E_{u}|\le CN\Lambda$, the shift may be distributed among $N$ onsite
constants of bounded local strength. It does not alter the hypotheses
or commutator inverse of the local construction.

In Appendix \ref{app:local:section}, take $H=H_{\mathrm{ref}}$,
$V(s)=f(s)X_{\mathrm{br}}$, and $g=\delta_{0}$. The source $f(s)X_{\mathrm{br}}$
has bounded local Gevrey-2 strength, so $\nu_{*}=O(1)$ and $\mathfrak{F}=O(n^{2})$.
Substituting the fixed-reference costs of Eq.~\eqref{app:local:explicit-costs}
into Eq.~\eqref{app:local:scale} gives the sufficient scale 
\begin{equation}
B_{\mathrm{fix}}(n,g)=C_{\mathrm{fix}}n^{6D}L^{8D}_{n}g^{-(4D+2)}.\label{app:stages:fixed-reference-scale}
\end{equation}
Here $G\mathfrak{F}$ is bounded by the displayed $G^{2}\mathfrak{L}^{2}$
scale for $D\ge1$, $n\ge3$, and $g\le1$. Absorbing the fixed powers
of $\delta_{0}$ into $C_{e}$, set 
\begin{equation}
B_{e}=C_{e}n^{6D}L^{8D}_{n}\ge\max\{1,B_{\mathrm{fix}}(n,\delta_{0})\},\qquad a_{c}=\frac{\theta}{B_{e}},\qquad A_{c}=\Lambda a_{c}<\Lambda.\label{app:stages:exit-scale}
\end{equation}
This supplies the positive matching amplitude used by transport. The
actual exit has $A(s)=A_{c}f(s)$ and $\omega=0$. With time measured
from its start, $s=\Lambda a_{c}t$, it obeys 
\begin{equation}
ia_{c}\partial_{s}\psi=[H_{\mathrm{ref}}+a_{c}f(s)X_{\mathrm{br}}]\psi,\qquad\tau_{\mathrm{exit}}=\frac{1}{\Lambda a_{c}}=\frac{1}{A_{c}}.\label{app:stages:exit-equation}
\end{equation}

\paragraph{Static dressing and the incoming state.}

Fix the order, the exact and localized inverses, the time and spatial
cutoffs, and all support assignments at the reference gap $\delta_{0}$.
Use these same choices for every static amplitude $0\le a\le a_{c}$
and for the dynamic exit. Write $\mathcal{P}_{e}=\mathcal{P}(n,\delta_{0})$
for the defect prefactor of Theorem~\ref{app:local:theorem}. Use
the static dressing $W^{\mathrm{stat}}_{n}(a)$ and projection $\Pi^{\mathrm{stat}}_{\mathrm{dr}}(a)$
from Section \ref{app:local:endpoints}, with $X=X_{\mathrm{br}}$
and the reference projection $\Pi$ above. The static remainder obeys
\begin{equation}
\|R_{\mathrm{stat}}(a)\|\le CN(B_{e}a)^{n+1}+CN^{2}\mathcal{P}_{e}(B_{e}a)e^{-cn},\qquad0<a\le a_{c}.\label{app:stages:static-remainder}
\end{equation}
The small factor $B_{e}a$ in the localization defect is retained.
Since the static comparison generator commutes with $\Pi$, 
\begin{equation}
\|[H_{\mathrm{ref}}+aX_{\mathrm{br}},\Pi^{\mathrm{stat}}_{\mathrm{dr}}(a)]\|\le2\|R_{\mathrm{stat}}(a)\|.\label{app:stages:static-commutator}
\end{equation}
Let $\psi_{a}$ be a normalized ground vector of $H_{\mathrm{br}}(\Lambda a)$
and set $\pi(a)=\langle\psi_{a},\Pi^{\mathrm{stat}}_{\mathrm{dr}}(a)\psi_{a}\rangle$.
Its dimensionless gap is at least $c_{\mathrm{gap}}a^{2}$ on this
amplitude interval. Equation~\eqref{app:stages:overlap} therefore
gives 
\begin{equation}
\sqrt{\pi(a)[1-\pi(a)]}\le C\frac{\|R_{\mathrm{stat}}(a)\|}{a^{2}}.\label{app:stages:exit-overlap}
\end{equation}
We must still select the branch near $\pi=1$.
\begin{lem}[Exit matching and dynamics]
\label{app:stages:exit} For a fixed sufficiently small $c_{\mathrm{low}}>0$,
define the comparison amplitude 
\begin{equation}
a_{\mathrm{low}}=\frac{c_{\mathrm{low}}\delta_{0}}{NB_{e}},\qquad0<a_{\mathrm{low}}<a_{c},\label{app:stages:comparison-amplitude}
\end{equation}
and the error bounds 
\begin{align}
\eta_{\mathrm{match}} & =C\left[NB^{2}_{e}\theta^{n-1}+N^{3}\mathcal{P}_{e}B^{2}_{e}e^{-cn}\right],\label{app:stages:matching-error}\\
\eta_{\mathrm{dyn}} & =C\left[NB_{e}\theta^{n}+N^{2}\mathcal{P}_{e}B_{e}e^{-cn}\right].\label{app:stages:exit-dynamical-error}
\end{align}
If $\eta_{\mathrm{match}}<1/8$, then 
\begin{equation}
\|(I-\Pi^{\mathrm{stat}}_{\mathrm{dr}}(a_{c}))P_{\mathrm{br}}(A_{c})\|\le\sqrt{2}\,\eta_{\mathrm{match}}.\label{app:stages:one-sided-matching}
\end{equation}
The dressing $W_{n}(a_{c},s)$ from Theorem \ref{app:local:theorem},
applied to the dynamic equation \eqref{app:stages:exit-equation},
satisfies the exact endpoint identities 
\begin{equation}
W_{n}(a_{c},0)=W^{\mathrm{stat}}_{n}(a_{c}),\qquad W_{n}(a_{c},1)=I.\label{app:stages:exact-exit-endpoints}
\end{equation}
Consequently, after restoring the dark factor, 
\begin{equation}
\|(I-\Pi_{\mathrm{out}})U_{\mathrm{exit}}\Pi_{\mathrm{br}}(A_{c})\|\le\sqrt{2}\,\eta_{\mathrm{match}}+\eta_{\mathrm{dyn}}.\label{app:stages:exit-error}
\end{equation}
The amplitude $a_{\mathrm{low}}$ is used only to identify the branch.
The physical pulse switches from transport to exit at $A_{c}$, and
no additional pulse segment goes to $\Lambda a_{\mathrm{low}}$. 
\end{lem}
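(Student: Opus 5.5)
The plan has three parts: the endpoint identities \eqref{app:stages:exact-exit-endpoints}, the one-sided matching \eqref{app:stages:one-sided-matching}, and the combination into \eqref{app:stages:exit-error}.

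\emph{Endpoint identities.} The exit perturbation is $V(s)=f(s)X_{\mathrm{br}}$ with $f=1-f_{\uparrow}$. This scalar profile satisfies $f(0)=1$ and $f(1)=0$, and all its positive derivatives vanish at both endpoints. The static-endpoint identity \eqref{app:local:static-endpoint} of Theorem~\ref{app:local:theorem} therefore applies directly. We keep the inverse, cutoffs and support assignments fixed at gap $\delta_0$ for both the static and the dynamic constructions. This gives $W_n(a_c,0)=W^{\mathrm{stat}}_n(a_c)$ and $W_n(a_c,1)=I$.

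\emph{Dynamical error.} We apply the leakage bound \eqref{app:local:leakage} to \eqref{app:stages:exit-equation} with $z=a_c=\theta/B_e$. This yields $\|(I-\Pi)U\Pi^{\mathrm{stat}}_{\mathrm{dr}}(a_c)\|\le\eta_{\mathrm{dyn}}$, since the dressed projection at $s=1$ is $\Pi$ itself.

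\emph{Matching.} This step is the main obstacle, because the gap of $H_{\mathrm{br}}$ closes like $a^2$. We use \eqref{app:stages:exit-overlap} along the static path $a\in[a_{\mathrm{low}},a_c]$, where both projections vary continuously. The ground projection varies continuously because its gap is positive for $a>0$. The dressed projection varies continuously because its coefficients are polynomial in $a$.

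First we fix the branch at $a_{\mathrm{low}}$ by a direct perturbative comparison. At this amplitude the total coupling satisfies $\|a_{\mathrm{low}}X_{\mathrm{br}}\|\le CNa_{\mathrm{low}}\le c\,\delta_0/B_e$. The dressing is also close to the identity: $\|S_{\mathrm{dr}}\|\le CNB_ea_{\mathrm{low}}\le Cc_{\mathrm{low}}\delta_0$.

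To show $\pi(a_{\mathrm{low}})>1/2$, we compare with the reference ground space $\Pi=P_u\otimes I_{\mathrm{br}}$. The true ground state $\psi_a$ has energy below $E_u+\langle P_u\otimes\Omega,X\rangle$ for a suitable bright state $\Omega$. Here $\Omega$ should be chosen as the vacuum-like state minimizing the bright part; the relevant cross term is zero, since $X$ is odd in bright fields. Any component orthogonal to $\Pi$ costs energy $\delta_0$ and can gain at most $CNa_{\mathrm{low}}$. This bounds $\|(I-\Pi)\psi_a\|^2\le CNa_{\mathrm{low}}/\delta_0\le Cc_{\mathrm{low}}$. Adding the dressing error gives $\pi\ge 1-C c_{\mathrm{low}}>1/2$ once $c_{\mathrm{low}}$ is small.

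Next we propagate the branch. On $[a_{\mathrm{low}},a_c]$, \eqref{app:stages:exit-overlap} bounds $\sqrt{\pi(1-\pi)}$ by
\[
\frac{C}{a^2}\Bigl[N(B_ea)^{n+1}+N^2\mathcal P_eB_eae^{-cn}\Bigr].
\]
The first term is largest at $a=a_c$, where it equals $CNB_e^2\theta^{n-1}$. The second is largest at $a=a_{\mathrm{low}}$, where it equals $CN^3\mathcal P_eB_e^2e^{-cn}/(c_{\mathrm{low}}\delta_0)$. Both are bounded by $\eta_{\mathrm{match}}<1/8$. By continuity, $\pi$ therefore cannot cross $1/2$, so at $a_c$ we have $\sqrt{1-\pi}\le\sqrt2\sqrt{\pi(1-\pi)}\le\sqrt2\,\eta_{\mathrm{match}}$. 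Because $P_{\mathrm{br}}$ has rank one, the second identity in \eqref{app:stages:overlap} converts this into \eqref{app:stages:one-sided-matching}.

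\emph{Combination.} We insert $\Pi^{\mathrm{stat}}_{\mathrm{dr}}(a_c)$ between $U_{\mathrm{exit}}$ and $P_{\mathrm{br}}(A_c)$. The triangle inequality then adds the matching error and $\eta_{\mathrm{dyn}}$, giving \eqref{app:stages:exit-error}. Finally we tensor with $I_{\mathrm{dark}}$; the dark modes are exactly decoupled when $\omega=0$, so the norms are unchanged.
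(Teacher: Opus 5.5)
Your proposal is correct and follows essentially the same route as the paper: exact endpoint identities from the static-endpoint relation, the dynamical leakage bound at $z=a_c$, a variational anchor showing $\pi(a_{\mathrm{low}})$ is near one (your parity observation that the trial energy vanishes is a harmless sharpening of the paper's bound $a\|X_{\mathrm{br}}\|$), continuity on $[a_{\mathrm{low}},a_c]$ with the two remainder terms maximized at opposite endpoints, and insertion of $\Pi^{\mathrm{stat}}_{\mathrm{dr}}(a_c)$ before the exit propagator. The only item you leave implicit is the check $a_{\mathrm{low}}<a_c$, which follows from $a_{\mathrm{low}}/a_c=c_{\mathrm{low}}\delta_0/(N\theta)$ once $c_{\mathrm{low}}\delta_0<\theta$.
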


\begin{proof}
Choose a fixed $C_{X}$ with $\|X_{\mathrm{br}}\|\le C_{X}N$. A normalized
trial vector in $\operatorname{Ran}\Pi$ has energy at most $a\|X_{\mathrm{br}}\|$
for $H_{\mathrm{ref}}+aX_{\mathrm{br}}$. The ground-state variational
bound and the ground-state equation thus imply 
\begin{equation}
\langle\psi_{a},H_{\mathrm{ref}}\psi_{a}\rangle\le2a\|X_{\mathrm{br}}\|,\qquad1-\langle\psi_{a},\Pi\psi_{a}\rangle\le\frac{2C_{X}Na}{\delta_{0}}.\label{app:stages:anchor-energy}
\end{equation}
The generator bound in Theorem~\ref{app:local:theorem} gives $\|W^{\mathrm{stat}}_{n}(a)-I\|\le CNB_{e}a$
for real $a$, hence $\|\Pi^{\mathrm{stat}}_{\mathrm{dr}}(a)-\Pi\|\le CNB_{e}a$.
At the comparison amplitude, 
\begin{equation}
\pi(a_{\mathrm{low}})\ge1-Cc_{\mathrm{low}}/B_{e}-Cc_{\mathrm{low}}\delta_{0}>3/4\label{app:stages:anchor-overlap}
\end{equation}
after fixing $c_{\mathrm{low}}$ small enough. Requiring also $c_{\mathrm{low}}\delta_{0}<\theta$
ensures $a_{\mathrm{low}}/a_{c}=c_{\mathrm{low}}\delta_{0}/(N\theta)<1$.
All these choices are independent of $N$ and $n$.

On $[a_{\mathrm{low}},a_{c}]$, dividing the first term of Eq.~\eqref{app:stages:static-remainder}
by $a^{2}$ gives a multiple of $a^{n-1}$, largest at $a_{c}$. Dividing
the second term gives a multiple of $a^{-1}$, largest at $a_{\mathrm{low}}$.
Therefore 
\begin{equation}
C\frac{\|R_{\mathrm{stat}}(a)\|}{a^{2}}\le C\left[NB^{2}_{e}\theta^{n-1}+N^{3}\mathcal{P}_{e}B^{2}_{e}e^{-cn}\right]=\eta_{\mathrm{match}}.\label{app:stages:overlap-uniform}
\end{equation}
The constants absorb the fixed $c_{\mathrm{low}}$ and $\delta_{0}$.
Both the true ground projection and the finite static dressing are
continuous on this positive-amplitude interval. If $\eta_{\mathrm{match}}<1/8$,
Eqs.~\eqref{app:stages:exit-overlap} and \eqref{app:stages:anchor-overlap}
prevent $\pi(a)$ from crossing $1/2$. The near-one branch of Eq.~\eqref{app:stages:overlap}
then proves Eq.~\eqref{app:stages:one-sided-matching}. The true
system--bright ground space has rank one, whereas $\Pi^{\mathrm{stat}}_{\mathrm{dr}}$
has rank $2^{N}$. The conclusion is one-sided leakage, not a norm
bound on the difference of these unequal-rank projections.

For the dynamic construction, the reference and source are $H_{\mathrm{ref}}$
and $f(s)X_{\mathrm{br}}$. The endpoint values $f(0)=1$, $f(1)=0$
and flatness allow us to apply Eq.~\eqref{app:local:static-endpoint}.
Because the static and dynamic constructions use the same order, inverses,
cutoffs and support assignments, it gives Eq.~\eqref{app:stages:exact-exit-endpoints}
exactly. These identities add no endpoint approximation error.

Finally, the dynamic comparison generator commutes with $\Pi$. Dividing
the remainder of Theorem~\ref{app:local:theorem} by $a_{c}=\theta/B_{e}$
bounds its propagator error by 
\begin{equation}
\frac{1}{a_{c}}\int^{1}_{0}\|R_{n}(a_{c},s)\|\,ds\le C\left[NB_{e}\theta^{n}+N^{2}\mathcal{P}_{e}B_{e}e^{-cn}\right]=\eta_{\mathrm{dyn}}.\label{app:stages:exit-propagation}
\end{equation}
The exact endpoint identities turn this into $\|(I-\Pi)U_{\mathrm{exit}}\Pi^{\mathrm{stat}}_{\mathrm{dr}}(a_{c})\|\le\eta_{\mathrm{dyn}}$.
Insert $I=\Pi^{\mathrm{stat}}_{\mathrm{dr}}(a_{c})+(I-\Pi^{\mathrm{stat}}_{\mathrm{dr}}(a_{c}))$
before the exit propagator and use Eq.~\eqref{app:stages:one-sided-matching}.
The two errors add. Tensoring with $I_{\mathrm{dark}}$ gives $\Pi\otimes I_{\mathrm{dark}}=\Pi_{\mathrm{out}}$
and proves Eq.~\eqref{app:stages:exit-error}. Only the fixed reference
gap is used to suppress system excitations; an isolated perturbed
band of rank $2^{N}$ is not required. 
\end{proof}

%% file: appendix/appendix_E_resources.tex
\section{Global accuracy, physical resources, and an interacting example}

\label{app:resources:section}

The preceding estimates were obtained at an arbitrary integer expansion
order $n$. We now choose that order once for the entire pulse, including
the branch-identification conditions, and complete the proof of Proposition~\ref{app:model:main}.

\subsection{One order for all stages}

\label{app:resources:order}

Keep the local data and the construction constants fixed as in Appendices
\ref{app:model:setting}--\ref{app:stages}. In particular, $\theta$
and the prefactors in $B_{\mathrm{ent}}$ and $B_{e}$ are fixed before
choosing $n$; their stage parameters are given in Eqs. \eqref{app:stages:entrance-pulse}
and \eqref{app:stages:exit-scale}. We now choose $n$ so that all
stage errors and branch-identification conditions hold simultaneously.

There are at most $C(1+\log n)$ amplitude-halving transport segments.
Their smallest normalized gap satisfies $g_{\min}\ge cB^{-2}_{e}$.
Consequently, every inverse gap in a cutoff, coefficient bound, or
inverse-truncation error is bounded by a fixed polynomial in $n$
and $L_{n}$. The entrance remainder, the transport leakages, and
the static and dynamic exit errors are therefore bounded by a common
expression 
\begin{equation}
\mathcal{E}(N,n)=CN^{3}n^{K_{\mathrm{err}}}L^{K_{\mathrm{err}}}_{n}(1+\log n)\bigl(\theta^{n-1}+e^{-c_{f}n}\bigr),\label{app:resources:envelope}
\end{equation}
for a fixed integer $K_{\mathrm{err}}$ and a fixed $c_{f}>0$. Enlarging
$C$ and $K_{\mathrm{err}}$ if necessary also bounds the raw errors
used to select the entrance and exit branches. The factor $N^{3}$
comes from static exit matching, while the inverse-filter contribution
decays exponentially in $n$.
\begin{lem}[Simultaneous order choice]
\label{app:resources:orderlemma} There are constants $n_{0}\ge3$
and $c_{\log}>0$, independent of $N$ and $\epsilon$, for which
\begin{equation}
n=\max\left\{ n_{0},\left\lceil c_{\log}\log\!\left(\frac{2N}{\epsilon}\right)\right\rceil \right\} \label{app:resources:n}
\end{equation}
satisfies all branch and expansion conditions and gives entrance leakage
at most $\epsilon/8$, total transport leakage at most $\epsilon/8$,
and combined exit matching and propagation leakage at most $\epsilon/4$. 
\end{lem}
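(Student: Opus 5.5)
The plan is to reduce every stage error, and every branch-selection hypothesis, to the single envelope $\mathcal{E}(N,n)$ of Eq.~\eqref{app:resources:envelope}. After that, choosing $n$ logarithmically in $2N/\epsilon$ is elementary calculus. Since $\theta\in(0,1)$ is fixed, put $q:=\max\{\theta,e^{-c_f}\}<1$. Then
\[
\theta^{n-1}+e^{-c_f n}\le 2q^{-1}q^{n}.
\]
The prefactor $n^{K_{\mathrm{err}}}L_n^{K_{\mathrm{err}}}(1+\log n)$ is subexponential. So there is a fixed $n_1$ with this prefactor at most $q^{-n/2}$ for $n\ge n_1$, which gives
\[
\mathcal{E}(N,n)\le C'N^{3}q^{n/2}\qquad (n\ge n_1).
\]
Requiring $C'N^3q^{n/2}\le \epsilon/8$ amounts to
\[
n\ge \frac{2}{\log(1/q)}\bigl[3\log N+\log(1/\epsilon)+\log(8C')\bigr].
\]
Because $N\ge1$ and $\epsilon\le1$, we have $\log(2N/\epsilon)\ge\log2$. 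Hence the right-hand side is at most $c_{\log}\log(2N/\epsilon)$ with $c_{\log}=(6+2\log(8C'+1)/\log2)/\log(1/q)$, which is independent of $N,\epsilon$. Setting $n_0=\max\{3,n_1\}$ then fixes the choice \eqref{app:resources:n}.

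The substantive work is checking that each stage quantity is dominated by $\mathcal{E}$ at this $n$.

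\textbf{Entrance.} Lemma~\ref{app:stages:entrance} gives $\eta_{\mathrm{ent}}=CNB_{\mathrm{ent}}\theta^n$ with $B_{\mathrm{ent}}=C(n+2)^2$. Its branch hypothesis $CN\theta^{n+1}<c_0$, equivalently $\mathcal{R}_{\mathrm{ent}}<\gamma_{\mathrm{ent}}/8$, is a raw error also bounded by $\mathcal{E}$. Imposing $\mathcal{E}\le\min\{\epsilon/8,c_0,1/16\}\cdot(\text{fixed})$ costs only a larger constant, since $\epsilon\le1$.

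\textbf{Exit.} Lemma~\ref{app:stages:exit} gives $\eta_{\mathrm{match}}$ and $\eta_{\mathrm{dyn}}$. Here $B_e=C_en^{6D}L_n^{8D}$, and $\mathcal{P}_e=\mathcal{P}(n,\delta_0)$ is polynomial by \eqref{app:local:defect-polynomial}. The $N^3$ in $\eta_{\mathrm{match}}$ is the worst power of $N$. The conditions $\eta_{\mathrm{match}}<1/8$ and $\sqrt2\eta_{\mathrm{match}}+\eta_{\mathrm{dyn}}\le\epsilon/4$ follow from $\mathcal{E}\le\epsilon/16$, say.

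\textbf{Transport.} This is the step needing care, and I expect it to be the main obstacle. Here $a_c=\theta/B_e$, so every transport gap satisfies $g_j\ge c\min\{1,a_c^2\}\ge cB_e^{-2}$. This is a fixed polynomial in $n,L_n$. Hence $B_j=B_{\mathrm{tr}}(n,g_j)$ and $\mathcal{P}(n,g_j)$ are bounded by $n^{K}L_n^{K}$ for a fixed $K$, uniformly in $j$, and each $\eta_j$ is bounded by the same form as $\mathcal{E}$. The number of segments is $M_{\mathrm{tr}}\le1+\lceil\log_2(A_{\mathrm{ent}}/A_c)\rceil$. Since
\[
A_{\mathrm{ent}}/A_c=\frac{B_{\mathrm{ent}}B_e}{\sqrt2\,\theta^{2}},
\]
this is polynomial in $n$, so $M_{\mathrm{tr}}\le C(1+\log n)$. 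That is exactly the factor $(1+\log n)$ built into $\mathcal{E}$, so $\sum_j\eta_j\le\mathcal{E}\le\epsilon/8$.

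The point to verify honestly is that the exponent $K_{\mathrm{err}}$ and the rate $c_f$ can be fixed before $n$. This holds provided the constants $c$ in the $e^{-cn}$ factors of Theorem~\ref{app:local:theorem} do not depend on $g$. They do not: the cutoffs \eqref{app:local:cutoffs} scale $t_f$ with $1/g$ precisely so that $gt_f=C_fnL_n^2$. Taking $c_f$ as the minimum of the finitely many fixed rates (entrance has none, transport and exit share the filter rate) closes the argument.
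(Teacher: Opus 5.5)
Your proposal is correct and follows essentially the same route as the paper. Like the paper, you bound every stage error and both branch-selection conditions by the common envelope $\mathcal{E}(N,n)$, using $g_{\min}\ge cB_e^{-2}$, the $C(1+\log n)$ segment count and $g$-independent filter rates, and then trade the subexponential prefactor for half of the rate $q^{n}=e^{-\alpha n}$. The only cosmetic difference is that you absorb the additive constant $\log(8C')$ into $c_{\log}$ via $\log(2N/\epsilon)\ge\log 2$, whereas the paper uses $N^{3}/\epsilon\le(2N/\epsilon)^{4}$ and absorbs the remaining constant by enlarging $n_0$.
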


\begin{proof}
Set $\alpha=\min\{-\log\theta,c_{f}\}>0$. Since $\theta$ is fixed,
the last factor in \eqref{app:resources:envelope} is at most $Ce^{-\alpha n}$.
For all sufficiently large fixed $n_{0}$, its polynomial prefactor
obeys 
\[
n^{K_{\mathrm{err}}}L^{K_{\mathrm{err}}}_{n}(1+\log n)\le e^{\alpha n/2}\qquad(n\ge n_{0}).
\]
Moreover $N^{3}/\epsilon\le\left(2N/\epsilon\right)^{4}$. Choose
$c_{\log}$ so that $\alpha c_{\log}/4\ge4$. Then 
\begin{equation}
\frac{\mathcal{E}(N,n)}{\epsilon}\le C\left(2N/\epsilon\right)^{4}e^{-\alpha n/2}\le Ce^{-\alpha n/4}.\label{app:resources:absorb}
\end{equation}
Increasing the fixed lower order makes this smaller than any prescribed
finite set of positive budgets. Choose those budgets to impose the
three claimed leakages, the small static remainder required for entrance
branch selection, and the exit overlap condition used in Lemma~\ref{app:stages:exit}.
Since $\epsilon\le1$, this also enforces the corresponding absolute
smallness conditions. Thus branch selection is justified by raw error
estimates before its conclusions are used. The same $n_{0},c_{\log}$
work for all allowed $N,\epsilon$. 
\end{proof}

\subsection{From subspace leakage to the full trace norm}

\label{app:resources:composition}

The projections in Eq.~\eqref{app:model:interfaces} allow us to
compose the stage estimates in operator norm. For any three orthogonal
projections $\Pi_{0},\Pi_{1},\Pi_{2}$ and unitaries $U_{1},U_{2}$,
inserting $I=\Pi_{1}+(I-\Pi_{1})$ gives 
\begin{equation}
\|(I-\Pi_{2})U_{2}U_{1}\Pi_{0}\|\le\|(I-\Pi_{2})U_{2}\Pi_{1}\|+\|(I-\Pi_{1})U_{1}\Pi_{0}\|.\label{app:resources:triangle}
\end{equation}
Apply this twice to the three stage propagators from Lemmas~\ref{app:stages:entrance}--\ref{app:stages:exit}.
With $U_{\mathrm{rot}}=U_{\mathrm{exit}}U_{\mathrm{tr}}U_{\mathrm{ent}}$,
Lemma~\ref{app:resources:orderlemma} yields 
\begin{equation}
\|(I-\Pi_{\mathrm{out}})U_{\mathrm{rot}}\Pi_{\mathrm{in}}\|\le\frac{\epsilon}{2}.\label{app:resources:isometry}
\end{equation}
The final physical bath rotation does not change this estimate or
the reduced system state. For a system density matrix $\rho$, its
excitation probability consequently satisfies 
\begin{equation}
1-\operatorname{Tr}[P_{u}\rho_{\mathrm{out}}(\rho)]\le\frac{\epsilon^{2}}{4}.\label{app:resources:probability}
\end{equation}
This follows directly by applying the squared operator-norm bound
to a purification of $\rho$, with the propagator acting trivially
on the purifying reference.

Since $P_{u}$ is rank one, the Fuchs--van de Graaf inequality \cite[Theorem~1, Eq.~(46)]{FuchsGraaf}
and Eq. \eqref{app:resources:probability} give 
\begin{equation}
\|\rho_{\mathrm{out}}(\rho)-P_{u}\|_{1}\le2\sqrt{1-\operatorname{Tr}[P_{u}\rho_{\mathrm{out}}(\rho)]}\le\epsilon.\label{app:resources:fidelity}
\end{equation}
This proves Eq. \eqref{app:model:accuracy} for arbitrary mixed inputs,
without a factor equal to the dimension of their state space.

\subsection{Duration and common control waveform}

Combining the entrance duration in Lemma \ref{app:stages:entrance},
the exit duration in Eq. \eqref{app:stages:exit-equation}, and the
transport estimate \eqref{app:stages:geometric-cost} gives

\begin{equation}
\Lambda\tau\le1+\frac{B_{e}}{\theta}+Cn^{7D}L^{10D}_{n}\bigl(1+a^{-(10D+5)}_{c}\bigr).\label{app:resources:duration}
\end{equation}

With the parameter choices in Eqs. \eqref{app:stages:entrance-pulse}
and \eqref{app:stages:exit-scale}, the total $\Lambda\tau$ is a
polynomial in $n$ and $L_{n}$ of fixed degree. Absorbing the fixed
powers of $L_{n}$ into an additional power of $n$ gives $\Lambda\tau\le Cn^{\kappa}$
for a finite exponent $\kappa$ depending only on $D$ and the fixed
local data. Since \eqref{app:resources:n} implies $n\le C\log(2N/\epsilon)$,
this proves \eqref{app:model:time}.

The amplitude increases to $A_{\mathrm{ent}}$ during entrance and
then decreases. The phase rate decreases from $r$ to zero during
entrance and stays zero afterward. Hence 
\begin{equation}
\max_{t}A(t)=\frac{r}{\sqrt{2}},\qquad\max_{t}|\dot{\phi}(t)|=r=\frac{\Lambda C_{\mathrm{ent}}(n+2)^{2}}{\theta},\label{app:resources:peaks}
\end{equation}
which proves \eqref{app:model:peak}. These are local control strengths,
not the extensive norm of the joint Hamiltonian.

The switch profile, entrance scale, amplitude list, segment durations,
and release cutoff depend only on $N,\epsilon$ and the promised local,
energy, and gap bounds. Integrating the specified $\omega(t)$ fixes
the single phase $\phi(t)$. The interacting ground projections and
the local dressing operators enter the proof but never the prescribed
waveform. Thus the controller does not need the individual entries
of $h$ or the coefficients of $V_{X}$. The native Hamiltonian remains
on, the bath remains the same $2N$ modes throughout, and there is
one final discard. This completes Proposition~\ref{app:model:main}.

\subsection{A connected model with a non-Gaussian target}

\label{app:resources:example}

The preparation theorem includes non-Gaussian interacting targets.
Consider a spinful dimerized Hubbard chain with $M$ cells, two positions
per cell, and spins $\sigma=\uparrow,\downarrow$. The system mode
index is $j=(m,v,\sigma)$, with $m=1,\ldots,M$ and $v=1,2$, so
$N=4M$. For fixed $t_{1}>t_{2}>0$, let 
\begin{align}
H_{0} & =-\sum_{m,\sigma}\left[t_{1}c^{\dagger}_{m1\sigma}c_{m2\sigma}+t_{2}c^{\dagger}_{m2\sigma}c_{m+1,1,\sigma}+\mathrm{h.c.}\right],\label{app:resources:Hubbard}\\
\sum_{X}V_{X} & =\sum_{m,v}(n_{mv\uparrow}-1/2)(n_{mv\downarrow}-1/2),\qquad n_{mv\sigma}=c^{\dagger}_{mv\sigma}c_{mv\sigma}.\nonumber 
\end{align}
Use periodic boundaries or an open chain terminated by strong intracell
bonds. The intracell hopping has eigenvalues $\pm t_{1}$ and the
intercell matching has norm at most $t_{2}$. Therefore \eqref{app:model:freegap}
holds with $\Lambda=t_{1}+t_{2}$ and $\Delta=t_{1}-t_{2}$. Each
interaction term has norm $1/4$ and meets each mode once, giving
$J_{\mathrm{int}}=|u|/4$. A sufficient condition is 
\begin{equation}
0<|u|\le c_{\mathrm{Hub}}\frac{(t_{1}-t_{2})^{3}}{(t_{1}+t_{2})^{2}},\label{app:resources:Hubbardthreshold}
\end{equation}
with a sufficiently small fixed $c_{\mathrm{Hub}}$, independent of
$M$. Neither the intercell hopping nor the interaction vanishes as
the chain grows. For the smallest periodic chain, parallel hopping
terms are combined into their nonzero sum; the same gap bound remains
valid.
\begin{prop}[Non-Gaussian interacting ground state]
\label{app:resources:nonGaussian} For the connected chain \eqref{app:resources:Hubbard}
with $u\ne0$ satisfying \eqref{app:resources:Hubbardthreshold},
the unique ground state is not Gaussian. 
\end{prop}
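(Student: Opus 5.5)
The plan is to argue by contradiction with a Hartree--Fock (Wick) variational argument, using that a Gaussian pure state is completely determined by its one-body data. Suppose the unique ground state $|G\rangle$ of $H_u$ (unique by Theorem~\ref{app:stab:theorem}) were Gaussian. For Gaussian states Wick's theorem expresses $\langle H_u\rangle$ as the Hartree--Fock functional of the one-body density matrix. Hence $E_0=\langle G|H_u|G\rangle$ equals the minimum $E_{\mathrm{HF}}$ of $H_u$ over all Gaussian states, and $|G\rangle$ is itself a Hartree--Fock minimizer $\Phi$. Stationarity of $\Phi$ within the Gaussian manifold gives the Brillouin condition: $\langle\Phi|H_u\,\eta^\dagger_a\eta_i|\Phi\rangle=0$ for all single quasiparticle excitations, with $\eta$ the natural (Hartree--Fock) orbitals of $\Phi$. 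Number and parity selection rules also allow pair-creation-type excitations if $\Phi$ is not number conserving; I would treat these in the same way. Since $|G\rangle$ is an exact eigenvector, we have $(H_u-E_0)|\Phi\rangle=0$. Therefore every matrix element $\langle\Phi^{ab}_{ij}|H_u|\Phi\rangle$ to a doubly excited Slater determinant must also vanish. The contradiction will come from showing that some such double-excitation element is nonzero.

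Only the interaction contributes to double excitations. Write $\gamma_\sigma$ for the spin-resolved occupied projector of $\Phi$ and $Q_\sigma=I-\gamma_\sigma$. The Hubbard term then gives elements proportional to
\[
u\,T_{ab,ij}=u\sum_{x}\overline{\eta_{a\uparrow}(x)}\,\overline{\eta_{b\downarrow}(x)}\,\eta_{i\uparrow}(x)\,\eta_{j\downarrow}(x),
\]
plus possible spin-mixed variants if $\Phi$ breaks spin symmetry. Since $u\neq0$, it suffices to show $T\neq0$. Summing $|T|^2$ over the orthonormal bases gives the Frobenius norm
\[
\sum|T_{ab,ij}|^2=\Bigl\|\sum_x (Q_\uparrow e_xe_x^\dagger \gamma_\uparrow)\otimes(Q_\downarrow e_xe_x^\dagger\gamma_\downarrow)\Bigr\|_F^2
=\sum_{x,y}(\gamma_\uparrow)_{yx}(Q_\uparrow)_{xy}(\gamma_\downarrow)_{yx}(Q_\downarrow)_{xy}.
\]
This quantity does not depend on the choice of orbitals, so I would evaluate it directly from $\gamma_\sigma$.

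Next I would compute it at $u=0$, where $\gamma_\sigma$ is the explicit projector onto the negative band of the dimerized hopping. A single cell at $t_2=0$ already gives $\gamma=\tfrac12\begin{pmatrix}1&1\\1&1\end{pmatrix}$ per cell. In that case the expression is a strictly positive multiple of $M$, and for general $t_2<t_1$ it stays positive. I expect it to be bounded below by $cM$, using the off-diagonal entries $\gamma_{m1,m2}$ of size bounded below by the gap. Then I would show that the Hartree--Fock minimizer stays close to the free projector. The self-consistent field is $O(J_{\mathrm{int}})$ and the single-particle gap is $\Delta$, so a resolvent/contour argument like the proof of Lemma~\ref{app:stab:covariance} gives $\|\gamma_\sigma-\gamma_\sigma^{(0)}\|_{\mathrm{loc}}\le CJ_{\mathrm{int}}/\Delta$. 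The same bound holds for the pairing block if the minimizer is not number conserving. Under \eqref{app:resources:Hubbardthreshold}, the local perturbation of each site term in the double sum is then small compared to the positive $u=0$ per-cell contribution. Exponential decay of the projectors makes the sum over $y$ converge locally, so the total stays $\ge cM>0$.

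The main obstacle is this last step. One must control an arbitrary Hartree--Fock minimizer, which is a priori possibly symmetry broken or non-number-conserving, and show that it is locally close to the free band projector. One must also handle the mixed spin components in $T$ that would then appear. My first attempt is a quantitative energy argument. Any Gaussian state whose projector is far from $\gamma^{(0)}$ pays free energy at least $\Delta\,\|\gamma-\gamma^{(0)}\|^2$-type, through the relative-entropy/gap inequality for quasi-free states. It gains at most $O(J_{\mathrm{int}})$ per site from the interaction. Combining the two bounds forces closeness in a local (per-site averaged) sense. That averaged closeness suffices, because the positivity statement only needs the sum over $x$ to remain of order $M$.
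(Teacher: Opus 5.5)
\textbf{Shared start.} Your opening matches the paper's proof. You assume $|G\rangle$ is Gaussian, use $(H_u-E_0)|G\rangle=0$ to force every matrix element to a doubly excited determinant to vanish, and note that only the quartic Hubbard term can produce such an element.

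\textbf{Where it falls short.} The gap is in the step that shows one of these elements is nonzero. You leave the perturbative closeness argument as a ``first attempt.'' Its resolvent version assumes $\gamma$ is the Fermi projector of the self-consistent Hamiltonian, but stationarity only gives $[h_{\mathrm{HF}},\gamma]=0$. The spin-mixed and pairing versions of $T$, which you set out to cover, are never worked out.

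\textbf{The missing idea.} The obstacle you single out does not exist. Under the contradiction hypothesis there is no ``arbitrary'' Hartree--Fock minimizer: any Gaussian state attaining $E_{\mathrm{HF}}=E_0$ is a ground state and hence equals $|G\rangle$. Uniqueness then forces $|G\rangle$ to inherit the symmetries of $H_u$:
\begin{itemize}
\item it has definite particle number, equal to $2M$ by continuity of the gap along $H_0+\mu u\sum_XV_X$;
\item it is a spin singlet;
\item it can be chosen real.
\end{itemize}
So a Gaussian $|G\rangle$ is a real Slater determinant with one-particle density matrix $P_{\mathrm{occ}}\otimes I_{\mathrm{spin}}$. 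Pairing and spin mixing never arise.

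\textbf{Finishing exactly.} After this reduction your own Frobenius identity does most of the remaining work, with no perturbation theory. Take $\gamma=P_{\mathrm{occ}}$ real and $Q=I-\gamma$. Then $\gamma_{yx}Q_{xy}=-\gamma_{xy}^2$ for $x\neq y$, and $\gamma_{xx}(1-\gamma_{xx})$ for $x=y$. Hence
\[
\sum|T_{ab,ij}|^2=\sum_x\bigl[\gamma_{xx}(1-\gamma_{xx})\bigr]^2+\sum_{x\neq y}\gamma_{xy}^4
\]
is a sum of nonnegative terms. If it vanishes, $P_{\mathrm{occ}}$ is position-diagonal with entries $0$ or $1$.

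The paper reaches this more economically. It uses only the excitation $\varphi_h\to\varphi_e$ in both spins, whose element $u\sum_x[\varphi_e(x)\varphi_h(x)]^2$ is a sum of squares. Summing over orthonormal bases gives $(P_{\mathrm{occ}})_{xx}[1-(P_{\mathrm{occ}})_{xx}]=0$, and the row-norm identity does the rest.

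Your plan lacks the last step entirely: ruling out position-diagonal determinants.
\begin{itemize}
\item Rank $M$ on $2M$ positions makes both the doubly occupied set and the empty set nonempty.
\item Connectedness gives a hopping bond between them.
\item Moving one fermion across it gives a nonzero $H_0$ matrix element to a distinct configuration, which the position-diagonal interaction cannot cancel.
\end{itemize}
In this route the weak-coupling threshold enters only through uniqueness and the open gap; everything else holds for every $u\neq0$.

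\textbf{Salvaging your route.} Your perturbative approach can be rescued in the symmetric setting. Combine
\begin{itemize}
\item $\operatorname{Tr}[h(\gamma-\gamma_0)]=\operatorname{Tr}[|h|(\gamma-\gamma_0)^2]\ge\Delta\|\gamma-\gamma_0\|_2^2$,
\item the $O(|u|M)$ bound on the change in interaction energy,
\item $(\gamma_0)_{xx}=1/2$ from sublattice symmetry,
\item and the nonnegative diagonal terms above.
\end{itemize}
This is strictly more work than the exact argument, and it depends essentially on the smallness of $u$.
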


\begin{proof}
The gap stays open along $H_{0}+\mu u\sum_{X}V_{X}$, $0\le\mu\le1$.
Number conservation and uniqueness keep the ground-state particle
number at its free value $2M$. Spin-rotation symmetry makes the unique
state a singlet, and the real Hamiltonian permits a real ground vector.
A pure Gaussian state with definite particle number has vanishing
anomalous covariance, and its normal covariance is an orthogonal projection;
equivalently, it is a Slater determinant~\cite[Sec.~IV]{Bravyi2005}.
If the ground state were Gaussian, its one-particle density matrix
would consequently have the form $P_{\mathrm{occ}}\otimes I_{\mathrm{spin}}$,
with $P_{\mathrm{occ}}$ a real rank-$M$ projection on the $2M$
spatial positions.

Choose real occupied and empty spatial orbitals $\varphi_{h}$ and
$\varphi_{e}$. The quadratic Hamiltonian cannot connect this determinant
to the double excitation that replaces $\varphi_{h}$ by $\varphi_{e}$
in both spins. The constant and quadratic pieces of the shifted interaction
cannot do so either. The remaining matrix element, up to one overall
sign, is 
\begin{equation}
u\sum_{x}[\varphi_{e}(x)\varphi_{h}(x)]^{2},\label{app:resources:doubleexcitation}
\end{equation}
where $x=(m,v)$ labels spatial positions. The eigenstate equation
and $u\ne0$ force every summand to vanish. Summing over orthonormal
occupied and empty bases yields 
\[
(P_{\mathrm{occ}})_{xx}[1-(P_{\mathrm{occ}})_{xx}]=0\qquad\text{for every }x.
\]
Since $P_{\mathrm{occ}}$ is an orthogonal projection, its row-norm
identity then makes it diagonal in the position basis. The supposed
Slater determinant has some positions doubly occupied and the rest
empty.

Both sets are nonempty because $\operatorname{rank}P_{\mathrm{occ}}=M$.
Connected hopping provides a nonzero bond between them. Moving one
fermion across that bond gives a nonzero matrix element of $H_{0}$
to another occupation configuration, while the position-diagonal interaction
contributes zero. Distinct directed bonds and spin choices produce
distinct configurations, so this matrix element cannot cancel. This
contradicts the eigenstate equation and excludes a Gaussian ground
state. 
\end{proof}